\keywords{succinctness, modal logic}
\newcommand{\cL}  {{\mathcal L}}
\newcommand{\cS}  {{\mathcal S}}
\newcommand{\PL}  {\ensuremath{\mathrm{PL}}}
\newcommand{\ML}  {\ensuremath{\mathrm{ML}}}
\newcommand{\MLK} {\ensuremath{\mathcal{S}_{\mathrm{K}}}\xspace}
\newcommand{\MLT} {\ensuremath{\mathcal{S}_{\mathrm{T}}}\xspace}
\newcommand{\MLSf}{\ensuremath{\mathcal{S}_{\mathrm{S5}}}\xspace}
\newcommand\N   {{\mathbb N}}
\newcommand\Bool{{\mathbb B}}
\newcommand\Var {{\mathcal V}}
\newcommand\Intr{{\mathcal I}}
\newcommand\dM   {\mathrm{DMor}}
\newcommand\extdM{\mathrm{DMor_\liff}}
\newcommand\true    {1}
\newcommand\false   {0}
\newcommand\lverum  {{\top}}
\newcommand\lfalsum {{\perp}}
\newcommand\liff    {\leftrightarrow}
\newcommand\limplies{\to}
\newcommand\nxt     {\lozenge}
\newcommand\formulabrackets[2][]{{#1\langle{#2}#1\rangle}}
\newcommand\formulasubst[2][]   {{#1({#2}#1)}}
\newcommand\rank        {\mathrm{rk}}
\newcommand\size[1]     {|#1|}
\newcommand\lnxtcount[1]{|#1|_\nxt}
\newcommand\sem[2]      {\llbracket #1;#2\rrbracket}
\newcommand\formulafont[1]           {\mathtt{#1}}
\newcommand{\formulatrue}            {\formulafont{tt}}
\newcommand{\formulafalse}           {\formulafont{ff}}
\newcommand{\formulanegation}        {\formulafont{neg}}
\newcommand{\formulapredisjunction}  {\formulafont{predis}}
\newcommand{\formuladisjunction}     {\formulafont{dis}}
\newcommand{\formulabiimplication}   {\formulafont{biimpl}}
\tikzset{
 formula/.style={
    baseline={([yshift={-\ht\strutbox}]current bounding box.north)},
    every node/.append style={inner sep=2pt},
    level distance=.65cm},
 structure/.style={
    baseline,
    -latex,
    every edge/.append style = {shorten <= 1pt}},
 world/.style={circle,draw=white,thick,inner sep=0,fill=black,
               minimum size=5pt},
 pin/.style={minimum size=2pt,inner sep=0pt},
 tree1/.style={isosceles triangle,draw,shape border rotate=90,
               isosceles triangle stretches=true, minimum height=12mm,
               minimum width=9mm,inner sep=0,rounded corners=3pt},
 tree2/.style={isosceles triangle,draw,shape border rotate=90,
               isosceles triangle stretches=true, minimum height=10mm,
               minimum width=10mm,inner sep=0,rounded corners=3pt,yshift=5pt},
 tree3/.style={isosceles triangle,draw,shape border rotate=90,
               isosceles triangle stretches=true, minimum height=11mm,
               minimum width=7mm,inner sep=0,rounded corners=3pt}
}
\title[Boolean basis, formula size, and number of modal
operators]{Boolean basis, formula size, and number of modal operators}
\author[Ch.~Berkholz]{Christoph Berkholz{\rsuper*}\lmcsorcid{0000-0002-3554-517X}}
\author[D.~Kuske]{Dietrich Kuske}
\author[Ch.~Schwarz]{Christian Schwarz\lmcsorcid{0009-0002-6475-4779}}
\address{Technische Universit\"at Ilmenau, Germany}
\begin{document}

\begin{abstract}
  Is it possible to write significantly smaller formulae when using
  Boolean operators other than those of the De~Morgan basis (and, or,
  not, and the constants)? For propositional logic, a negative answer
  was given by Pratt: formulae over one set of operators can always be
  translated into an equivalent formula over any other complete set of
  operators with only polynomial increase in size.

  Surprisingly, for modal logic the picture is different: we show that
  elimination of bi-implication is only possible at the cost of an
  exponential number of occurrences of the modal operator~$\nxt$ and
  therefore of an exponential increase in formula size, i.e., the
  De~Morgan basis and its extension with bi-implication differ in
  succinctness. Moreover, we prove that any complete set of Boolean
  operators agrees in succinctness with the De~Morgan basis or with
  its extension with bi-implication.  More precisely, these results are
  shown for the modal logic $\mathrm{T}$ (and therefore for
  $\mathrm{K}$). We complement them by showing that the modal logic
  $\mathrm{S5}$ behaves as propositional logic: the choice of Boolean
  operators has no significant impact on the size of formulae.
\end{abstract}
\maketitle

\section{Introduction}

Many classical logics such as propositional logic, first-order and
second-order logic, temporal and modal logics incorporate a complete
set $G$ of Boolean operators in their definitions --- often the
De~Morgan basis consisting of the set of operators
$\dM=\{\land, \lor, \lnot, \lverum,\lfalsum\}$. But there are
certainly other options like $\{\to,\lfalsum\}$ or
$\{\operatorname{NAND}\}$. While for the expressivity it is clearly
irrelevant which complete operator set is used, this choice may have
an impact on how succinctly properties can be formulated. The main aim
of this paper is to understand the influence of the set of operators 
on the succinctness of formulae.

Suppose we extend the De~Morgan basis with an additional
operator.  If that additional operator defines a \emph{read-once
function}, i.e., it can be expressed in the De~Morgan basis in such
a way that every variable occurs at most once, then it can easily be
eliminated without blowing-up the formula too much. Thus, read-once
operators such as $x\to y \equiv \lnot x \lor y$ are really just
syntactic sugar.  For operators that are not read-once, such as
bi-implication $x\liff y$ or the ternary majority operator
$\mathrm{maj}(x,y,z)$, the situation is less clear, because mindlessly
replacing them with any equivalent De~Morgan formula may lead to an
exponential explosion of the formula size. So can it be that such
additional operators (or more generally, bases $G$ other than the
De~Morgan basis) actually allow us to write exponentially more succinct
formulae? For propositional logic, a negative answer was given by
Pratt \cite{Pra75}: for any two complete bases $F$ and $G$, first
balance the formula (that uses the basis $F$) so that it has
logarithmic depth (this step may introduce operators from the
De~Morgan basis) and
then replace all operators by any translation using the target
basis~$G$. This clearly leads to a linear increase in formula depth
and therefore only to a polynomial increase in formula size.

Balancing a formula is, however, not possible for logics that contain
quantifiers. For such logics it is still possible to efficiently
replace certain operators that are not read-once by
De~Morgan formulae. We show that if an operator
$\operatorname{op}(x_1,\ldots,x_k)$ is \emph{locally read-once}, that
is, has for every $i\in\{1,2,\dots,k\}$ an equivalent De~Morgan formula in which
$x_i$ appears only once, then it can be efficiently eliminated. An
example of an operator that is locally read-once, but not read-once,
is $\mathrm{maj}(x,y,z)$ (see Example~\ref{E:representations}). The
notion of a locally read-once operator and the algorithm of their
replacement can be generalized to any complete basis in place of the
De~Morgan basis. Thus, our first result says that the succinctness
does not differ much as long as operators of one basis are
locally read-once in the other and vice versa.

We then give a decidable characterisation of the operators that are locally
read-once in a given basis. This characterisation is based on the
notion of locally monotone operators: an operator is \emph{locally
  monotone} if fixing all but one argument defines a unary function
that is increasing or decreasing \emph{no matter how we fixed the
  remaining arguments} (e.g., bi-implication is not locally monotone,
but ternary majority is). Then an operator is locally read-once in a
given complete basis $G$ if, and only if, it is locally monotone or
some function from the basis $G$ is not locally monotone.

As a result, for any complete basis $G$, there are just two
possibilities: it allows to write formulae (up to a polynomial) as
succinct as the De~Morgan basis or as the De~Morgan basis extended
with the operator bi-implication $\liff$.\footnote{In the conference
version \cite{BerKS24} of this paper, this result was only shown for
extensions $G$ of the De~Morgan basis.} In other words, there are at
most two succinctness classes (recall that for propositional logic, there
is just one such class).

So far, the techniques and results hold for many classical logics (but
we spell them out in terms of modal logic). For modal logic, we
proceed by showing that there are indeed two different succinctness
classes. More precisely, we demonstrate that the use of operators that
are not locally monotone can be avoided, but only at the cost of an
exponential number of occurrences of the modal operator $\nxt$ and
therefore an exponential increase in formula size. Examples of such
useful operators are bi-implication $x\liff y$ and exclusive disjunction
$x \operatorname{XOR} y$.
In summary, there are exactly two succinctness classes that are
exponentially separated: one containing standard modal logic and the
other containing its extension with bi-implication.

Since this dichotomy is in contrast with propositional logic, where
only one succinctness class exists, we also investigate what happens
for fragments of modal logic defined by restrictions on the Kripke
structures. Here we obtain the same dichotomy for structures with a
reflexive accessibility relation. But upon considering equivalence
relations only, we can show that the two succinctness classes
collapse, as they do in propositional logic.  

\paragraph{Related work}
It seems that this paper is the first to
consider the influence of Boolean operators on the succinctness of
modal logics. Other aspects have been studied in detail.

Pratt \cite{Pra75} studied the effect of complete bases of binary
operators on the size of propositional formulae and proved in
particular that there are always polynomial translations.
Wilke \cite{Wil99} proved a succinctness
gap between two branching time temporal logics, Adler and Immerman
\cite{AdlI03} developed a game-theoretic method and used it to improve
Wilke's result and to show other succinctness gaps. The succinctness
of further temporal logics was considered, e.g., in
\cite{EteVW02,Mar03}.

Lutz et al.\ \cite{LutSW01,Lut06} study the succinctness and
complexity of several modal logics. French et al.\ \cite{FrHIK13}
consider multi-modal logic with an abbreviation that allows to express
``for all $i\in \Gamma$ and all $i$-successors, $\varphi$ holds''
where $\Gamma$ is some set of modalities. Using Adler-Immerman-games,
they prove (among other results in similar spirit) that this
abbreviation allows exponentially more succinct formulae than plain
multi-modal logic.

Grohe and Schweikardt \cite{GroS05} study the succinctness of
first-order logic with a bounded number of variables and, for that
purpose, develop extended syntax trees as an alternative view on
Adler-Immerman-games. These extended syntax trees were used by van
Ditmarsch et al.\ \cite{DitFHI14} to prove an exponential succinctness
gap between a logic of contingency (public announcement logic, resp.)
and modal logic.

Hella and Vilander \cite{HelV19} define a
formula size game (modifying the Adler-Immerman-game) and use it to
show that bisimulation invariant first-order logic is non-elementarily
more succinct than modal logic.

Immerman \cite{Imm81} defined separability games that characterise the
number of quantifiers needed to express a certain property in
first-order logic. These games were rediscovered and applied
\cite{FagLRV21,FagLVW22} and developed further (see \cite{CarFIKLSW24}
where further applications can be found). Hella and
Luosto~\cite{HelL24} defined alternative and equivalent
games. Vinall-Smeeth \cite{Vin24} studied the interplay of the number
of quantifiers and the number of variables needed to express certain
properties.

Since the modal operator $\nxt$ is a restricted form of
quantification, our result on the number of occurrences needed to
express certain properties is related to the works cited above.

\section{Definitions}

In this paper, $[n]=\{1,2,\dots,n\}$ for all $n\ge0$ and $0\in\N$.

\paragraph{Boolean functions I}
Let $\Bool=\{\false,\true\}$ denote the Boolean domain.
A \emph{Boolean function} is a function $f\colon \Bool^n\to\Bool$
for some $n\ge0$.
\allowdisplaybreaks
We consider the following functions that are usually called
disjunction, conjunction, implication, bi-implication, negation, falsum, 
verum, and majority:
\begin{align*}
  \lor     \colon &\Bool^2\to \Bool\colon (a,b)\mapsto \max(a,b)\\
  \land    \colon &\Bool^2\to \Bool\colon (a,b)\mapsto \min(a,b)\\
  \limplies\colon &\Bool^2\to \Bool\colon (a,b)\mapsto \max(1-a,b)\\
  \liff    \colon &\Bool^2\to \Bool\colon (a,b)\mapsto \max\bigl(\min(a,b),\min(1-a,1-b)\bigr)\\
  \lnot    \colon &\Bool^1\to \Bool\colon (a)  \mapsto \true-a\\
  \lfalsum \colon &\Bool^0\to \Bool\colon ()   \mapsto \false\\
  \lverum  \colon &\Bool^0\to \Bool\colon ()   \mapsto \true\\
  \mathrm{maj} \colon &\Bool^3\to \Bool\colon (a,b,c)   \mapsto
                        \begin{cases}
                          1 & \text{if }a+b+c\ge2\\
                          0 & \text{otherwise}
                        \end{cases}
\end{align*}

\paragraph{Propositional logic}

Let $\Var=\{p_i\mid i\ge0\}$ be a countably infinite set of
propositional variables. For a set $F$ of Boolean functions, let the
set of formulae of the propositional logic $\PL[F]$ be defined by
\begin{align*}
  \varphi &::=\ p \ \big|\ 
            f\formulabrackets{\underbrace{\varphi,\ldots,\varphi}_{k \text{ times}}} \,,
\end{align*}
where $p$ is some propositional variable and $f\in F$ is of arity $k$.
Note that we do not allow $\lfalsum$, $\lverum$, $\lor$, nor $\land$ 
in formulae unless they belong to $F$.
An example of a $\PL[\{\lnot,\lor\}]$-formula is
$\lnot\formulabrackets[\big]{\lor\formulabrackets{p_1,p_3}}$, which we
usually write $\lnot(p_1\lor p_3)$. We also write $\lfalsum$ for
$\lfalsum\formulabrackets{}$, but not in cases where we want to stress
the distinction between the nullary function $\lfalsum$ and the
formula $\lfalsum\formulabrackets{}$.

Let $\varphi\in\PL[F]$ and $y_1,\ldots,y_n$ be distinct propositional
variables. We write $\varphi\formulasubst{y_1,\ldots,y_n}$ to
emphasise that $\varphi$ uses at most the variables $y_1,\ldots,y_n$.
Let furthermore $\alpha_1,\ldots,\alpha_n\in\PL[F]$. Then
$\varphi\formulasubst{\alpha_1,\ldots,\alpha_n}$ denotes the
$\PL[F]$-formula obtained from $\varphi$ by substituting the
$\alpha_i$ for the $y_i$.

Let $\Intr$ be an \emph{interpretation} of the variables, i.e.,
a map $\Intr\colon\Var\to\Bool$. Inductively, we define the value 
$\Intr(\varphi)\in\Bool$ for formulae $\varphi\in\PL[F]$ via
\[
  \Intr\bigl(f\formulabrackets{\varphi_1,\dots,\varphi_k}\bigr)
  =f\bigl(\Intr(\varphi_1),\dots,\Intr(\varphi_k)\bigr)\,.
\]
We call two formulae $\varphi$ and $\psi$ \emph{equivalent} (denoted
$\varphi\equiv\psi$) if, for all interpretations $\Intr$, we have
$\Intr(\varphi)=\Intr(\psi)$. For instance, the formulae $x\lor y$ and
$y\lor x$ are equivalent, but also the formulae $x\lor\lnot x$ and
$\lverum$ (for any propositional variables $x$ and $y$).

For any sets of Boolean functions $F$ and $G$, we have
$\PL[F],\PL[G]\subseteq\PL[F\cup G]$, hence it makes sense to say that
formulae from $\PL[F]$ are equivalent to formulae from
$\PL[G]$.

\paragraph{Boolean functions II}

Let $\varphi\in\PL[F]$ be a formula that uses, at most, the variables 
from $\{y_1,\dots,y_n\}$. Then $\Intr(\varphi)$ depends on the values 
$\Intr(y_i)$ for $i\in[n]$, only (i.e., if $\Intr_1$ and $\Intr_2$ are
interpretations with $\Intr_1(y_i)=\Intr_2(y_i)$ for all $i\in[n]$, 
then $\Intr_1(\varphi)=\Intr_2(\varphi)$). Hence the formula $\varphi$
together with the sequence of variables $y_1,\dots,y_n$ defines a
function
\[
  \sem{\varphi}{y_1,\dots,y_n}\colon\Bool^n\to\Bool\colon
  (a_1,\dots,a_n)\mapsto\Intr(\varphi)\,,
\]
where $\Intr$ is any interpretation with $\Intr(y_i)=a_i$ for all
$i\in[n]$.

\begin{exa}\label{ex:a}
  Suppose $\varphi,\psi\in\PL[F]$ are equivalent and use (at most) the
  variables $\{y_1,\dots,y_n\}$. Then the functions
  $\sem{\varphi}{y_1,\dots,y_n}$ and $\sem{\psi}{y_1,\dots,y_n}$ are
  identical.

  Now consider the equivalent formulae
  $\varphi=\lverum\formulabrackets{}$ and $\psi=(x\lor\lnot x)$. By
  the above, $\sem{\varphi}{x}$ and $\sem{\psi}{x}$ are identical
  unary functions. But the nullary function $\sem{\varphi}{}$ is
  defined while $\sem{\psi}{}$ is undefined.
\end{exa}

Let $G$ be a set of Boolean functions. Then $G$ is said to be 
\emph{functionally complete} if for every Boolean function $f$ of
arity $k\geq 1$ there exists a $\PL[G]$-formula $\varphi(p_1,\dots,
p_k)$ such that $\sem{\varphi}{p_1,\dots,p_k}=f$. Standard examples
of functionally complete sets of Boolean functions are the
\emph{De~Morgan basis} $\dM=\{\lnot,\land,\lor,\lverum,\lfalsum\}$ and
the \emph{extended De~Morgan basis} $\extdM=\dM\cup\{\liff\}$.
In this paper, we use the following notion of completeness. A set $G$
of Boolean functions is \emph{complete} if for every $\PL[\dM]$-formula
there is an equivalent $\PL[G]$-formula (that may use more variables). 
Hence we consider $\dM$ as the canonical complete set of Boolean
functions and call $G$ complete if $\PL[G]$ is equally expressive
as $\PL[\dM]$. It is not difficult to see that the two notions of
completeness coincide: 
\begin{itemize}
  \item Assume that $G$ is functionally complete and let $\psi(p_1,
        \dots,p_k)\in\PL[\dM]$. Then $f=\sem{\psi}{p_1,\dots,p_{k+1}}$
        is a Boolean function of arity $k+1\geq 1$. Hence there is
        $\varphi(p_1,\dots,p_{k+1})\in\PL[G]$ such that
        \[
                  \varphi(p_1,\dots,p_{k+1})
           \equiv f\formulabrackets{p_1,\dots,p_{k+1}}
           \equiv \psi(p_1,\dots,p_k)\,,
        \]
        i.e., $G$ is complete.
  \item Conversely, assume that $G$ is complete and let $f$ be a
        Boolean function of arity $k\geq1$. Since $\dM$ is functionally
        complete and since $G$ is complete, there are formulae 
        $\psi(p_1,\dots,p_k)\in\PL[\dM]$ and $\varphi(p_1,\dots,p_\ell)
        \in\PL[G]$ with $\ell\geq k$ such that
        \[
                  f\formulabrackets{p_1,\dots,p_k}
           \equiv \psi(p_1,\dots,p_k)
           \equiv \varphi(p_1,\dots,p_\ell)\,.
        \]
        Let $\varphi'=\varphi(p_1,\dots,p_k,p_1,\dots,p_1)$. Since 
        $k\geq 1$, $\varphi'$ uses at most the variables $p_1,\dots,p_k$.
        Furthermore, $\varphi'\equiv\varphi$ and therefore $\sem{\varphi'}
        {p_1,\dots,p_k}=f$. This shows that $G$ is functionally complete.
\end{itemize}

\paragraph{Modal logic}

\noindent\textit{Syntax.}
For a set $F$ of Boolean functions, let the set of formulae of the
modal logic $\ML[F]$  be defined by 
\begin{align*}
  \varphi &::=\ p \ \big|\ 
            f\formulabrackets{\underbrace{\varphi,\ldots,\varphi}_{k \text{ times}}} \ \big|\ 
            \nxt\varphi\,,
\end{align*}
where $p$ is some propositional variable and $f\in F$ is of arity $k$.
The \emph{size} $\size{\varphi}$ of a formula $\varphi$ is the number
of nodes in its syntax tree.

\medskip

\noindent\textit{Semantics.}  Formulae are interpreted over \emph{pointed
Kripke structures}, i.e., over tuples $S=(W,R,V,\iota)$, consisting
of a set $W$ of possible worlds, a binary accessibility relation
$R\subseteq W\times W$, a valuation $V\colon \Var\to \mathcal P(W)$,
assigning to every propositional variable $p\in\Var$ the set of worlds
where $p$ is declared to be true, and an initial world $\iota\in W$.
The satisfaction relation $\models$ between a world $w$ of $S$ and an
$\ML[F]$-formula is defined inductively, where 
\begin{itemize}
  \item $S,w\models p$ if $w\in V(p)$,
  \item $S,w\models \nxt\varphi$ if $S,w'\models\varphi$
        for some $w'\in W$ with $(w,w')\in R$, and 
  \item $S,w\models f\formulabrackets{\alpha_1,\ldots,\alpha_k}$ if
        $f(b_1,\dots,b_k)=\true$ where, for all $i\in[k]$,  $b_i=\true$ iff
        $S,w\models\alpha_i$.
\end{itemize}
A pointed Kripke structure $S$ is a \emph{model} of $\varphi$
($S\models\varphi$) if $\varphi$ holds in its initial world, i.e.,
$S,\iota\models\varphi$.

Now let $\cS$ be some class of pointed Kripke structures.  A formula
$\varphi$ is \emph{satisfiable in $\cS$} if it has a model in
$\cS$ and $\varphi$ \emph{holds in $\cS$} if every
structure from $\cS$ is a model of $\varphi$. The
formula $\varphi$ \emph{entails} the formula $\psi$ \emph{in
$\cS$} (written $\varphi\models_{\cS}\psi$) if any
model of $\varphi$ from $\cS$ is also a model of~$\psi$;
$\varphi$ and $\psi$ are \emph{equivalent over $\cS$} (denoted
$\varphi\equiv_{\cS}\psi$) if $\varphi\models_{\cS}\psi$
and $\psi\models_{\cS}\varphi$.

\medskip

\noindent\textit{Classes of Kripke structures.} For different application
areas (i.e., interpretations of the operator $\nxt$), the following
classes of Kripke structures have attracted particular interest.  For
convenience, we define them as classes of \emph{pointed} Kripke
structures.
\begin{itemize}
\item The class $\MLK$ of all pointed Kripke structures. 
\item The class $\MLT$ of all pointed Kripke structures with reflexive 
      accessibility relation.
\item The class $\MLSf$ of all pointed Kripke structures where the 
  accessibility relation is an equivalence relation.
\end{itemize}

\noindent\textit{Succinctness and translations.}
Suppose $F$ and $G$ are two sets of Boolean functions and $G$ is
complete. Then the logic $\ML[G]$ is at least as expressive as the
logic $\ML[F]$, i.e., for any formula $\varphi$ from $\ML[F]$, there
exists an equivalent formula $\psi$ from $\ML[G]$. But what about the
size of $\psi$? Intuitively, the logic $\ML[G]$ is at least as
succinct as the logic $\ML[F]$ if the formula $\psi$ is ``not much
larger'' than the formula $\varphi$. This idea is formalized by the
following definition.

\begin{defi}[Translations]\label{def:succinctness}
  Let $F$ and $G$ be sets of Boolean functions, $\cS$ a class
  of pointed Kripke structures, and $\kappa\colon\N\to\N$ some
  function. Then $\ML[F]$ has \emph{$\kappa$-translations wrt.\
    $\cS$} in $\ML[G]$ if, for every formula
  $\varphi\in \ML[F]$, there exists a formula $\psi\in \ML[G]$ with
  $\varphi\equiv_{\cS}\psi$ and $|\psi|\le \kappa(|\varphi|)$.

  The logic $\ML[F]$ has \emph{polynomial translations wrt.\ $\cS$} in
  $\ML[G]$ if it has $\kappa$-translations wrt.~$\cS$ for some
  polynomial function $\kappa$. Finally, the logic $\ML[F]$ has
  sub-exponential translations wrt.\ $\cS$ if it has
  $\kappa$-translations wrt.~$\cS$ for some  function
  $\kappa$ with $\lim_{n\to\infty}\frac{\log\kappa(n)}{n}=0$.
\end{defi}

We consider two logics $\ML[F]$ and $\ML[G]$ equally succinct if the
former has polynomial translations in the latter and vice versa. It is
easily seen that this notion ``equally succinct'' is an equivalence
relation on the set of logics $\ML[F]$ for $F$ a complete set of
Boolean functions; we refer to the equivalence classes of this
relation as \emph{succinctness classes}. The following section will
demonstrate that there are at most two such succinctness classes,
namely those containing $\ML[\dM]$ and $\ML[\extdM]$,
respectively.\footnote{Recall that
  $\dM=\{\lnot,\lor,\land,\lverum,\lfalsum\}$ and
  $\extdM=\{\lnot,\lor,\land,\liff,\lverum,\lfalsum\}$ denote the
  (extended) De~Morgan basis.}

\section{"All" logics have at most two succinctness classes}
\label{sec:all}

The aim of this section is to show that, for any finite and complete set
of Boolean functions $F$, the logic $\ML[F]$ has the same succinctness
as the logic $\ML[\dM]$ or as the logic $\ML[\extdM]$
(Corollary~\ref{C-two-succinctness-classes}). 
Formally, one has
to be more precise since the relation ``equally succinct'' depends on
the class of pointed Kripke structures used to define the equivalence
of formulae. In this section, we consider the largest such class,
i.e., the class \MLK of all pointed Kripke structures.  For notational
convenience, we will regularly omit the explicit reference to the
class \MLK, e.g., ``equivalent'' means ``equivalent over \MLK'',
$\varphi\models\psi$ means $\varphi\models_{\MLK} \psi$, and
``$\kappa$-translations'' means ``$\kappa$-translations wrt.~\MLK''.

As to whether the logic $\ML[F]$ is in the succinctness class of
$\ML[\dM]$ or of $\ML[\extdM]$ depends on whether all functions from
$F$ are locally monotone:

\begin{defi}[Local monotonicity]
  Let $f\colon\Bool^k\to\Bool$ be a Boolean function. We say that
  $f$ is \emph{monotone in the $i$-th argument} if
  \begin{itemize}
  \item for all $\overline a\in\Bool^{i-1}$ and $\overline b\in\Bool^{k-i}$,
    $f(\overline a,\false,\overline b)\le f(\overline a,\true,\overline b)$ or
  \item for all $\overline a\in\Bool^{i-1}$ and $\overline b\in\Bool^{k-i}$,
    $f(\overline a,\false,\overline b)\ge f(\overline a,\true,\overline b)$.
  \end{itemize}
  The function $f$ is \emph{locally monotone} if it is monotone in every
  argument $i\in[k]$.
\end{defi}

Hence $f$ is monotone in the $i$-th argument, if, when changing the
$i$-th argument from $\false$ to~$\true$, while keeping the remaining
ones fixed, the value of $f$ uniformly increases or decreases (where,
in both cases, the value may also remain unchanged). By this definition,
conjunction, disjunction, negation, implication, as well as majority
are locally monotone functions, while bi-implication is not.

The main result of this section is a consequence of the following theorem.

\begin{thm}\label{T-main}
  Let $F$ and $G$ be finite sets of Boolean functions such that $G$ is
  complete.
  If all functions in $F$ are locally monotone or some function in $G$
  is not locally monotone, then $\ML[F]$ has polynomial translations in
  $\ML[G]$.
\end{thm}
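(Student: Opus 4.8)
The plan is to reduce the statement to a single structural property of the functions in $F$ and then settle that property by a size-controlled replacement procedure. Call a function $f$ of arity $k$ \emph{locally read-once in $G$} if for every argument $i$ there is a $\PL[G]$-formula $\varphi_{f,i}\formulasubst{p_1,\dots,p_k}$ with $\sem{\varphi_{f,i}}{p_1,\dots,p_k}=f$ in which $p_i$ occurs at most once. I claim that under the hypothesis of the theorem every $f\in F$ is locally read-once in $G$, and that this property alone yields polynomial translations.

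First I would verify the reduction, whose disjunctive hypothesis matches two lemmas. (Case~A) If $f$ is locally monotone, then $f$ is locally read-once in \emph{every} complete $G$: using monotonicity in the $i$-th argument, the Shannon decomposition gives $f\equiv(p_i\land f_1)\lor f_0$ (or the dual $(\lnot p_i\land f_0)\lor f_1$), where $f_0=f|_{p_i=\false}$ and $f_1=f|_{p_i=\true}$ do not depend on $p_i$; since $G$ is complete one realises $f_0,f_1$ by $\PL[G]$-formulae and plugs them into a fixed $G$-gadget for the binary connectives that keeps $p_i$ in a single leaf. (Case~B) If some $g\in G$ is \emph{not} locally monotone, then \emph{every} $f$ is locally read-once in $G$: from the identity $f\equiv f_0\oplus\bigl(p_i\land(f_0\oplus f_1)\bigr)$ it suffices to realise the $\oplus$- and $\land$-gadgets so that the branch carrying $p_i$ uses it once, and a non-locally-monotone $g$ supplies exactly the controlled-negation (XOR-like) behaviour needed, since restricting its non-monotone argument in two ways yields both the identity and the negation as unary functions. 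In either disjunct, every $f\in F$ is locally read-once in $G$.

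The core is then the replacement lemma: if every $f\in F$ is locally read-once in $G$, then $\ML[F]$ has polynomial translations in $\ML[G]$. I would define a translation $\tau$ bottom-up on syntax trees by $\tau(\nxt\alpha)=\nxt\,\tau(\alpha)$ and, at a Boolean node $f\formulabrackets{\alpha_1,\dots,\alpha_k}$, by choosing the argument $i$ whose child has the \emph{largest subtree} and setting $\tau(f\formulabrackets{\alpha_1,\dots,\alpha_k})=\varphi_{f,i}\formulasubst{\tau(\alpha_1),\dots,\tau(\alpha_k)}$. Since $\varphi_{f,i}\equiv f\formulabrackets{p_1,\dots,p_k}$ as Boolean functions and Boolean operators are evaluated world-wise, a straightforward induction gives $\tau(\varphi)\equiv_{\MLK}\varphi$. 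Only the heaviest child is guaranteed read-once; every other child is duplicated at most $d$ times, where $d$ is the maximal number of occurrences over the finitely many fixed representations $\varphi_{f,i}$, hence a constant.

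The main obstacle, and the heart of the proof, is bounding the blow-up. Duplication happens at every node, so a naive bound is exponential in the modal depth; the point of protecting the heaviest child is that a subformula is duplicated only when it sits in a non-heaviest branch, and each such event at least doubles the ambient subtree size. Therefore, on the path from the root to any node $v$, at most $\log_2|\varphi|$ duplicating steps occur, so $v$ contributes with multiplicity at most $d^{\log_2|\varphi|}=|\varphi|^{\log_2 d}$. Summing the constant-size scaffolding of the $\varphi_{f,i}$ over the at most $|\varphi|$ nodes yields $|\tau(\varphi)|\le c_0\,|\varphi|\cdot|\varphi|^{\log_2 d}=\mathrm{poly}(|\varphi|)$ (and, likewise, $\lnxtcount{\tau(\varphi)}$ is polynomial), which is exactly a polynomial translation. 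The delicate points I expect to spend the most care on are the uniform realisability of the monotone gadget of Case~A over an \emph{arbitrary} complete $G$, and making the doubling/charging argument precise for nodes of arity greater than two.
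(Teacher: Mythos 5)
Your high-level architecture is the paper's: what you call ``locally read-once in $G$'' is exactly the paper's notion of $\PL[G]$-representations (Definition~\ref{def:representation}), your Case~A Shannon decomposition is the paper's Proposition~\ref{thm:mon-iff-reduce-to-ems@prop}(2), and your replacement lemma is the paper's Step~1. Your proof of the replacement lemma is in fact a pleasant alternative to the paper's: instead of the derivative/rank machinery of Lemma~\ref{thm:@lem}, you protect the child with the largest subtree and observe that along any root-to-node path at most $\log_2|\varphi|$ non-heaviest edges can occur (subtree sizes at least double across each such edge), each multiplying multiplicities by at most the constant $d$; this charging argument is correct and yields the polynomial bound $c_0\,|\varphi|^{1+\log_2 d}$ directly.

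The genuine gap is in the reduction, namely in your claim that the required gadgets exist over an \emph{arbitrary} complete basis $G$. In Case~A you need a $G$-formula for $(p_i\land y)\lor z$ that uses $p_i$ once; in Case~B the identity $f\equiv f_0\oplus\bigl(p_i\land(f_0\oplus f_1)\bigr)$ again places $p_i$ under a conjunction, so you need a $G$-formula for $\land$ read-once in its first argument there too --- and the extraction you describe from the non-locally-monotone $g$ only yields the XOR-like gadget, which cannot supply $\land$, since controlled negation together with constants generates only affine functions while $\land$ is not affine. Completeness of $G$ does not hand you these gadgets: it guarantees \emph{some} $G$-formula equivalent to $x\land y$, but not one in which $x$ occurs once, and the distinction between these two statements is precisely what the whole theorem turns on (Example~\ref{E:representations}: $\liff$ has no $\dM$-formula using $p_1$ once, although $\dM$ is complete). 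Constructing read-once-in-one-argument $G$-formulas for $\lnot$ and $\lor$ over an arbitrary complete basis is the technical heart of the paper (Propositions~\ref{P-generate-negation} and~\ref{P-generate-disjunction}): one takes an arbitrary $G$-formula for the connective, renames all occurrences of its variables apart so that the formula becomes read-once, and then exploits the $1$-to-$0$ flip (for $\lnot$) and the non-affineness of $\lor$ (for $\lor$) to re-wire all but one of the fresh variables with constants and copies or negations of $p_2$. Your proposal contains the corresponding idea only for the bi-implication gadget (restricting the non-monotone argument of $g$ yields both the identity and negation), which indeed matches Proposition~\ref{P-generate-biimplication}; without the analogous constructions for $\lnot$, $\lor$, and $\land$, both of your cases assume exactly the part of the theorem that is hardest to prove, and the remark that this point is ``delicate'' does not fill it.
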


The proof of the theorem can be found at the end of this section, in
the final Subsection~\ref{SS-proof-main-theorem}.
With Theorem~\ref{T-main}, we can already establish that the class of
logics $\ML[G]$ with $G$ finite and complete has at most two succinctness 
classes. Recall that we consider two logics equally succinct if the 
former has polynomial translations in the later and vice versa.

\begin{cor}\label{C-two-succinctness-classes}
  Let $G$ be some finite and complete set of Boolean functions. Then
  \begin{itemize}
  \item $\ML[G]$ and $\ML[\dM]$ are equally succinct wrt.~\MLK, or
  \item $\ML[G]$ and $\ML[\extdM]$ are equally succinct wrt.~\MLK .
  \end{itemize}
\end{cor}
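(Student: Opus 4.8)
The plan is to split on whether the given complete set $G$ contains a function that fails to be locally monotone, and in each case to derive the two required polynomial translations directly from Theorem~\ref{T-main}. Throughout I rely on two facts that are immediate from the definitions: every function in $\dM$ is locally monotone, whereas $\extdM$ contains $\liff$, which is not locally monotone; moreover $\dM$, $\extdM$, and $G$ are all finite and complete.

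First, suppose that \emph{every} function in $G$ is locally monotone. To show that $\ML[G]$ has polynomial translations in $\ML[\dM]$, I would apply Theorem~\ref{T-main} with source set $G$ and target set $\dM$: the target $\dM$ is complete, and the first disjunct of the hypothesis---that all functions in the source $G$ are locally monotone---holds by assumption. For the converse direction I apply the theorem with source $\dM$ and target $G$: here the target $G$ is complete, and the first disjunct again holds because every function of $\dM$ is locally monotone. Together these two applications give that $\ML[G]$ and $\ML[\dM]$ are equally succinct.

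Otherwise some function in $G$ is not locally monotone. To obtain polynomial translations of $\ML[G]$ in $\ML[\extdM]$, I apply Theorem~\ref{T-main} with source $G$ and target $\extdM$: the target is complete, and the second disjunct of the hypothesis holds because $\liff\in\extdM$ is not locally monotone. For the reverse translation I apply the theorem with source $\extdM$ and target $G$: the target $G$ is complete, and the second disjunct holds by the case assumption that some function of $G$ is not locally monotone. Hence $\ML[G]$ and $\ML[\extdM]$ are equally succinct.

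The whole argument is essentially bookkeeping, and I expect no genuine obstacle: all the mathematical content of the dichotomy is carried by Theorem~\ref{T-main}, and the corollary merely records that $\dM$ and $\extdM$ lie on opposite sides of the local-monotonicity divide. The one point that requires care is the asymmetry of the hypothesis of Theorem~\ref{T-main}---it constrains the source set via local monotonicity of \emph{all} its functions and the target set via the existence of \emph{some} non-locally-monotone function, and it separately demands that the target be complete. The case split is arranged precisely so that in each of the four invocations at least one disjunct of the hypothesis is satisfied and the completeness of the target is available.
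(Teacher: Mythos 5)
Your proof is correct and is essentially identical to the paper's own argument: the same case split on whether all functions of $G$ are locally monotone, followed by the same four invocations of Theorem~\ref{T-main} (with $\dM$ as the partner basis in the first case and $\extdM$ in the second), each time checking completeness of the target and the appropriate disjunct of the hypothesis. Nothing is missing.
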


\begin{proof}
  First recall that the De~Morgan basis $\dM$ and the extended De~Morgan 
  basis $\extdM=\dM\cup\{\liff\}$ are complete. The above theorem implies
  the following:
  \begin{itemize}
  \item Suppose that all functions from $G$ are locally monotone. Then
    $\ML[G]$ has polynomial translations in $\ML[\dM]$. But also all
    functions from $\dM$ are locally monotone; hence also $\ML[\dM]$
    has polynomial translations in $\ML[G]$. In other words, $\ML[G]$
    and $\ML[\dM]$ are equally succinct wrt.~\MLK in this case.
  \item Suppose that some function from $G$ is not locally monotone.
    Then the logic $\ML[\extdM]$ has polynomial translations in
    $\ML[G]$. The extended De~Morgan basis $\extdM$ contains the
    non-locally monotone function $\liff$; hence also $\ML[G]$ has
    polynomial translations in $\ML[\extdM]$. In other words, $\ML[G]$
    and $\ML[\extdM]$ are equally succinct wrt.~\MLK in this
    case.\qedhere
  \end{itemize}
\end{proof}

The remainder of this section is dedicated to the proof of 
Theorem~\ref{T-main}, which can be divided into the
following three steps.\label{Page-programme}

\paragraph{Step 1 (cf.\ Section~\ref{SS-step1})} We first show
that $\ML[F]$ has polynomial translations in $\ML[G]$ if the set $F$
of Boolean functions  admits ``$\PL[G]$-representations'' (to be
defined next). The idea is as follows:

Since $G$ is complete and therefore also functionally complete, there
is for every function $f\in F$ of arity $k\geq 1$ some formula $\omega
(p_1,\dots,p_k)\in\PL[G]$ such that $\omega(p_1,\dots,p_k)\equiv
f\formulabrackets{p_1,\dots,p_k}$. Consequently, in order to translate a
formula $\varphi\in\ML[F]$ into an equivalent formula $\psi\in\ML[G]$,
we only need to replace every sub-formula
$f\formulabrackets{\alpha_1,\dots,\alpha_k}$ in $\varphi$ by
$\omega\formulasubst{\beta_1,\dots,\beta_k}$ where $\beta_i$ is the
translation of $\alpha_i$ for $i\in[k]$.%
\footnote{There is a small issue here if $F$ contains a nullary
  function but $G$ does not. However, we ignore this for now by
  assuming that $G$ contains both $\lverum$ and $\lfalsum$ and
  handle the problem after Step~\ref{SS-step3} in a natural manner.}
In general, this translation leads to an exponential size increase.
But if, in the formula $\omega$, every variable $p_i$ appears only
once, we obtain a linear translation. The notion of representations 
is somewhat half-way between these two extremes.

\begin{defi}[Representations]\label{def:representation}
  Let $G$ be a set of Boolean functions, $f$ a Boolean function of
  arity $k>0$, and $i\in[k]$.

  A \emph{$\PL[G]$-representation of $(f,i)$} is a formula
  $\omega_i\formulasubst{p_1,\dots,p_k}\in\PL[G]$ that is equivalent
  to the formula $f\formulabrackets{p_1,\dots,p_k}\in\PL[\{f\}]$ and
  uses the variable $p_i$ at most once.

  A set $F$ of Boolean functions \emph{has $\PL[G]$-representations}
  if there are $\PL[G]$-representations for all $f\in F$ of arity $k>0$
  and all $i\in[k]$.
\end{defi}

\begin{exa}\label{E:representations}
  Consider the majority function $\mathrm{maj}(p_1,p_2,p_3)$ that is
  true iff at least two arguments are true. Then $(\mathrm{maj},1)$
  has the $\PL[\{\land,\lor\}]$-representation
  $\bigl(p_1 \land (p_2\lor p_3)\bigr) \lor (p_2\land p_3)$. Using the
  symmetry of $\mathrm{maj}$, it follows that $\{\mathrm{maj}\}$ has
  $\PL[\{\land,\lor\}]$-representations.

  Next, consider bi-implication $\liff$. Recall that a
  $\PL[\{\lnot,\land,\lor\}]$-formula $\psi(p_1,p_2)$ that contains $p_1$
  only under an even number of negations is monotonely increasing in $p_1$;
  analogously, if $p_1$ occurs only under an odd number of negations, then 
  $\psi$ is monotonely decreasing in $p_1$.
  Aiming at a contradiction, assume that $(\liff,1)$ had a 
  $\PL[\{\lnot,\land,\lor\}]$-representation $\omega(p_1,p_2)$ that mentions
  $p_1$ only once (say, under an even number of negations). Then, by the
  previous observation, flipping $p_1$ from $0$ to $1$ does not
  decrease the truth value of the formula, hence
  \[
    1 = (0\liff 0) = \sem{\omega}{p_1,p_2}(0,0) \le
    \sem{\omega}{p_1,p_2}(1,0) = (1\liff 0) = 0\,,
  \]
  a contradiction. Thus $\{\liff\}$ does not have
  $\PL[\{\lnot,\land,\lor\}]$-representations.
\end{exa}

\paragraph{Step 2 (cf.\ Section~\ref{SS-step2})}  Having established
in step 1 that representations yield polynomial translations, we next
show that any Boolean function has $\PL[\extdM]$-representations and
that any locally monotone function even has $\PL[\dM]$-representations.

\paragraph{Step 3 (cf.\ Section~\ref{SS-step3})} In this final step, we 
construct $\PL[G]$-representations of all functions from $\dM$ and 
$\extdM$ provided $G$ is complete (and contains some non-locally monotone
function for the case $\extdM$).

\paragraph{Summary (cf.~Section~\ref{SS-proof-main-theorem})} Now
suppose that $G$ is complete and that all functions from $F$ are locally
monotone or some function from $G$ is not locally monotone. In the
first case, $F$ and $\dM$ have $\PL[\dM]$- and
$\PL[G]$-representations, respectively (step 2 and 3). Consequently,
$\ML[F]$ and $\ML[\dM]$ have polynomial translations in $\ML[\dM]$ and
$\ML[G]$, respectively (step 1). By transitivity, $\ML[F]$ has
polynomial translations in $\ML[G]$.

In the second case (i.e., $G$ contains some function that is not locally
monotone), we can argue similarly but using $\extdM$ in place of $\dM$.
 
\subsection{Representations yield polynomial translations}
\label{SS-step1}

Let $F$ and $G$ be two sets of Boolean functions such that $F$ has
$\PL[G]$-representations. We will construct from a formula in $\ML[F]$
an equivalent formula in $\ML[G]$ of polynomial size. Since this will
be done inductively, we will have to deal with formulae from
$\ML[F\cup G]$ and the task then is better described as elimination of
functions $f\in F\setminus G$ from formulae in $\ML[F\cup G]$.

Before we present the details of our construction, we briefly
demonstrate the main idea behind the proof (for $F=\{\land,\lor,f\}$
and $G=\{\land,\lor,\lnot\}$, i.e., we aim to eliminate some Boolean
function $f$).  The main results (Lemma~\ref{thm:@lem} and
Proposition~\ref{thm:mon-poly-succinct@prop}) will appear at the end
of the section.
 
Assume that $f$ is of arity $2$ and consider the two formulae
\begin{align*}
  \varphi &=
            \bigl(
             p_1\lor
             f\formulabrackets[\big]{\ f\formulabrackets{p_2,p_3},\ p_4\land p_4\ }
            \bigr)\lor
              f\formulabrackets{p_6,p_7\land p_8}\ 
  \text{ and } \\
  \psi    &= f\formulabrackets[\big]{\ 
              p_1\lor  f\formulabrackets[\big]{\ f\formulabrackets{p_2,p_3},\ p_4\land p_5\ },\ 
              f\formulabrackets{p_6,p_7\land p_8}\ 
            }\,,
\end{align*}
whose syntax trees are depicted in Fig.~\ref{fig:syntax-tree} (they
only differ in the root node).
\begin{figure}
  \centering
  \begin{tikzpicture}[
         formula,
         level/.style={sibling distance=2.5cm},
         level 2/.style={sibling distance=1cm},
         level 3/.style={sibling distance=1.5cm},
         level 4/.style={sibling distance=.5cm}]
    \node {$\lor$}
     child { node{$\lor$} 
       child { node{$p_1$} }
       child { node{$f$} 
         child { node{$f$} 
           child { node{$p_2$} }
           child { node{$p_3$} } }
         child { node{$\land$}
           child { node{$p_4$} }
           child { node{$p_5$} } }
         }
     }
     child { node{$f$}
       child { node{$p_6$} }
       child { node{$\land$}
         child { node{$p_7$} }
         child { node{$p_8$} } }
     };
  \end{tikzpicture}
  \hspace*{2cm}
  \begin{tikzpicture}[
         formula,
         level/.style={sibling distance=2.5cm},
         level 2/.style={sibling distance=1cm},
         level 3/.style={sibling distance=1.5cm},
         level 4/.style={sibling distance=.5cm}]
    \node {$f$}
     child { node{$\lor$} 
       child { node{$p_1$} }
       child { node{$f$} 
         child { node{$f$} 
           child { node{$p_2$} }
           child { node{$p_3$} } }
         child { node{$\land$}
           child { node{$p_4$} }
           child { node{$p_5$} } }
         }
     }
     child { node{$f$}
       child { node{$p_6$} }
       child { node{$\land$}
         child { node{$p_7$} }
         child { node{$p_8$} } }
     };
  \end{tikzpicture}
  \caption{Syntax trees of the formulae
    $\varphi =
            \bigl(
             p_1\lor
             f\formulabrackets{f\formulabrackets{p_2,p_3},p_4\land p_5}
            \bigr)\lor
              f\formulabrackets{p_6,p_7\land p_8}$ and
  $\psi    = f\formulabrackets{
              p_1\lor  f\formulabrackets{f\formulabrackets{p_2,p_3},p_4\land p_5}, 
              f\formulabrackets{p_6,p_7\land p_8}
            }$.
  }
  \label{fig:syntax-tree}
\end{figure}

A distinguishing property of the left tree is that there is no $f$-node
with left immediate successor $\ell$ and right immediate successor $r$
such that the sub-trees with roots $\ell$ and $r$ both contain an 
$f$-node. Assuming that $f$
has $\PL[G]$-representations, there exist Boolean combinations
$\omega_1(x,y)\equiv\omega_2(x,y)\equiv f\formulabrackets{x,y}$ of the
variables $x$ and $y$, such that $x$ occurs only once in
$\omega_1(x,y)$ and $y$ only once in $\omega_2(x,y)$.  Proceeding
bottom-up, we now replace each $f$-node
$f\formulabrackets{\alpha,\beta}$ in the syntax tree of $\varphi$ by
either $\omega_1(\alpha,\beta)$ or $\omega_2(\alpha,\beta)$, depending
on whether we have previously modified the left or the right sub-tree
(regarding, e.g., $f\formulabrackets{p_2,p_3}$, we are free to choose
between $\omega_1$ and $\omega_2$).  Note that, although $\omega_1$
and $\omega_2$ may duplicate some parts of $\varphi$, our choice
ensures that we never duplicate such parts whose size has already
changed. Consequently, this procedure results in a linear increase
$\ell\cdot|\varphi|$ in the size of $\varphi$, where the coefficient
$\ell$ essentially depends on how often $y$ occurs in $\omega_1(x,y)$
and how often $x$ occurs in $\omega_2(x,y)$. The resulting formula
$\varphi'$ belongs to $\ML[G]$ and is equivalent to $\varphi$. Hence
$\ML[F\cup G]$-formulae where each $f$-node in the syntax tree
has at most one immediate
successor $s$ such that the sub-tree with root $s$ contains an $f$-node
have $\ML[G]$-translations of linear size.

The formula $\psi=f\formulabrackets{\alpha,\beta}$ on the other hand
does not have this property, but the two sub-formulae 
$\alpha=p_1\lor f\formulabrackets{f\formulabrackets{p_2,p_3},p_4\land p_5}$
and $\beta=f\formulabrackets{p_6,p_7\land p_8}$ do.
Hence we can apply the above transformation to them
separately, yielding equivalent $\ML[G]$-formulae $\alpha'$ and
$\beta'$ whose size increases at most by a factor of $\ell$.  Then
$\psi' = f\formulabrackets{\alpha',\beta'}\equiv\psi$ is of size
$|\psi'|\le \ell \cdot|\psi|$. Note that this step reduces the total
number of $f$-vertices. In our example, $\alpha',\beta'\in\ML[G]$ and
there remains only the single $f$-node which is the root of the
syntax tree of $\psi'=f\formulabrackets{\alpha',\beta'}$. Applying the
step again yields an equivalent $\ML[G]$-formula $\psi''$ of
size $|\psi''|\leq\ell\cdot |\psi'|\leq\ell^2\cdot|\psi|$.

Consider now an arbitrary formula $\chi\in\PL[G\cup F]$ and let $D$
denote the number of steps that are required until all $f$-nodes 
have been removed from $\chi$, i.e., until a $\PL[G]$-formula is obtained. 
Then the resulting formula has size at most $\ell^D\cdot|\chi|$. In this 
section, we will show that $D$ is at most logarithmic in the size of $\chi$.
Both results together give a polynomial bound on the size of the 
equivalent formula from $\ML[G]$, which establishes the main part of
the succinctness result.

We now formalize this idea and prove the result rigorously. Let $F$
and $G$ be disjoint sets of Boolean functions (we assume that $F$ and
$G$ are disjoint for notational convenience). Let
$N_{F,G} \subseteq \ML[F\cup G]$ be defined by
  \begin{align*}
    \varphi &::=\ \psi \ \big|\ 
                  f\formulabrackets{\psi,\ldots,\psi,\varphi,\psi,\ldots,\psi} \ \big|\ 
                  g\formulabrackets{\varphi,\ldots\varphi} \ \big|\ 
                  \nxt\varphi
  \end{align*}
for $\psi\in\ML[G]$, $f\in F$, and $g\in G$.
Note that a formula $\varphi$ belongs to $N_{F,G}$ if, and only if, 
for any sub-formula $f\formulabrackets{\alpha_1,\dots,\alpha_k}$ of 
$\varphi$ with $f\in F$, at most one $\alpha_i$ contains some function
from $F$. There is no such restriction on the sub-formulae 
$g\formulabrackets{\alpha_1,\ldots,\alpha_k}$ of $\varphi$ where $g\in G$.
Note that a formula in $N_{F,G}\setminus\ML[G]$ contains some sub-formula
of the form $f\formulabrackets{\alpha_1,\ldots,\alpha_k}$ (and all but at
most one $\alpha_i$ belong to $\ML[G]$).

\begin{defi}[Derivative]
  Let $F$ and $G$ be disjoint sets of Boolean functions and let
  $\varphi=\varphi\formulasubst{p_1,\ldots,p_m}\in \ML[F\cup G]$.
  The \emph{$F$-derivative $d_F(\varphi)$ of $\varphi$} is the smallest
  $\ML[F\cup G]$-formula $\gamma\formulasubst{p_1,\ldots,p_m,q_1,\ldots,q_n}$
  (up to renaming of the variables $q_1,\ldots,q_n$), such that
  \begin{itemize}
    \item $q_i$ occurs exactly once in 
          $\gamma\formulasubst{p_1,\ldots,p_m,q_1,\ldots q_n}$
          for all $i\in[n]$ and
    \item there exist $\alpha_1,\ldots,\alpha_n\in N_{F,G}\setminus\ML[G]$
          such that $\varphi$ and 
          $\gamma\formulasubst{p_1,\ldots,p_m,\alpha_1,\ldots,\alpha_n}$ 
          are identical.
  \end{itemize}
\end{defi}
Intuitively, $\gamma\formulasubst{p_1,\ldots,p_m,q_1,\ldots,q_n}$ is
obtained from $\varphi\formulasubst{p_1,\ldots,p_m}$ by simultaneously 
replacing all ``maximal $(N_{F,G}\setminus \ML[G])$-formulae'' by
distinct fresh variables $q_1,\ldots,q_n$ (where multiple occurrences 
of the same formula are replaced by different variables).
An example is depicted in Fig.~\ref{fig:derivative-example} (with
$F=\{f,f',f''\}$ and $G=\{\lnot,\land,\lor\}$).
\begin{figure}
  \centering
  \begin{tikzpicture}[
         formula,
         level/.style={sibling distance=2cm},
         level 2/.style={sibling distance=1.1cm},
         level 3/.style={sibling distance=.5cm}]
    \node (phi) {$f$}
     child { node{$f$} 
       child { node{$f'$} 
         child { node{$p_1$} }
         child { node{$p_2$} }
       }
       child { node{$f'$} 
         child { node{$p_1$} }
         child { node{$p_2$} }
       }
     }
     child { node{$f''$}
       child { node{$p_2$} }
       child { node{$f$}
         child { node{$p_1$} }
         child { node{$p_1$} }
       }
     };

    \node[right=4cm of phi] (dphi) {$f$}
     child { node{$f$}
       child { node{$q_1$} }
       child { node{$q_2$} }
     }
     child { node{$q_3$} };

    \node[right=2.5cm of dphi] (ddphi) {$q_1'$};
  \end{tikzpicture}
  \caption{Syntax trees of
    $\varphi=f\formulabrackets[\bigl]{
      f\formulabrackets[\bigl]{
        f'\formulabrackets{p_1,p_2},
        f'\formulabrackets{p_1,p_2}},\,
      f''\formulabrackets{p_2,f\formulabrackets{p_1,p_1}}}$,
    $d_{F}(\varphi)=f\formulabrackets[\bigl]{f\formulabrackets{q_1,q_2},q_3}$, and
    $d_F^2(\varphi)=q_1'$. 
  }
  \label{fig:derivative-example}
\end{figure}

Let $\varphi\in\ML[F\cup G]$ be a formula that contains some function
from $F$.  Then $d_F(\varphi)$ contains fewer occurrences of functions
from $F$ than $\varphi$. Hence there exists a smallest integer
$r\geq 0$ for which the $r$-th derivative
$d_F^r(\varphi)=d_F(d_F(\cdots d_F(\varphi) \cdots))$ is an
$\ML[G]$-formula, where $d_F^0(\varphi)=\varphi$.
\begin{defi}[Rank]
  Let $\varphi\in \ML[F\cup G]$. The \emph{$F$-rank $\rank_F(\varphi)$
  of $\varphi$} is the smallest integer $r\geq 0$ for which
  $d_F^r(\varphi)\in \ML[G]$.
\end{defi}

We first show that a formula with high $F$-rank must also be large.
\begin{lem}\label{thm:bound-rank@lem}
  Let $F$ and $G$ be disjoint sets of Boolean functions and let
  $\varphi\in \ML[F\cup G]$. Then $|\varphi|\geq 2^{\rank_F(\varphi)}-1$.
\end{lem}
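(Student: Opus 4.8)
The plan is to prove the bound by structural induction on $\varphi$, driven by a recursive description of how the $F$-rank behaves under the formula constructors. The guiding observation is that the modal operator and the $G$-labelled nodes merely pass the rank through (taking a maximum), whereas an $F$-labelled node can \emph{increase} the rank, and does so by exactly one precisely when two of its arguments already carry $F$-structure of the same maximal rank. This ``branching'' is the sole source of the exponential lower bound, and a naive induction on $\rank_F$ via a single application of $d_F$ does \emph{not} work, since $d_F$ lowers the rank by one while the collapsed maximal $(N_{F,G}\setminus\ML[G])$-subformulae may each have size $1$ (e.g.\ a nullary function in $F$), so no doubling is visible locally.

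Concretely, I would first establish the following recursion for $\rank_F$, read off from the top-most constructor of $\varphi$: if $\varphi=p$ then $\rank_F(\varphi)=0$; if $\varphi=\nxt\psi$ then $\rank_F(\varphi)=\rank_F(\psi)$; if $\varphi=g\formulabrackets{\psi_1,\dots,\psi_k}$ with $g\in G$ then $\rank_F(\varphi)=\max_i\rank_F(\psi_i)$; and if $\varphi=f\formulabrackets{\alpha_1,\dots,\alpha_k}$ with $f\in F$ then $\rank_F(\varphi)=\max(r_1,r_2+1)$, where $r_1\ge r_2$ are the two largest values among $\rank_F(\alpha_1),\dots,\rank_F(\alpha_k)$ (with the convention $r_2=0$ when $k\le 1$). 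The first three cases are routine once one notes that whenever the top node does not itself lie inside a collapsing region (i.e.\ $\varphi\notin N_{F,G}$), the maximal $(N_{F,G}\setminus\ML[G])$-subformulae localise into the immediate subformulae, so $d_F$ acts through the constructor; together with the fact that a $G$- or $\nxt$-rooted formula lies in $\ML[G]$, resp.\ $N_{F,G}$, iff all its arguments do.

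Granting this recursion, the size bound follows by an easy structural induction. The cases $\varphi=p$, $\varphi=\nxt\psi$ and $\varphi=g\formulabrackets{\dots}$ are immediate: the rank equals the maximum child rank $r_{\max}$, so the induction hypothesis gives $|\varphi|\ge 1+(2^{r_{\max}}-1)=2^{r_{\max}}\ge 2^{\rank_F(\varphi)}-1$. The only interesting case is an $F$-node $\varphi=f\formulabrackets{\alpha_1,\dots,\alpha_k}$. If $\rank_F(\varphi)=r_1$ is attained by a single argument $\alpha_{i_1}$, then $|\varphi|\ge 1+|\alpha_{i_1}|\ge 2^{r_1}$. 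Otherwise $\rank_F(\varphi)=r_2+1$ forces $r_1=r_2=:s$, witnessed by two \emph{distinct} arguments $\alpha_{i_1},\alpha_{i_2}$ of rank $s$; as these occupy disjoint subtrees below the root $f$-node, the induction hypothesis yields $|\varphi|\ge 1+|\alpha_{i_1}|+|\alpha_{i_2}|\ge 1+2(2^{s}-1)=2^{s+1}-1=2^{\rank_F(\varphi)}-1$. This is exactly where the factor $2$ enters, and the balanced binary $F$-trees show the exponential bound cannot be improved.

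The main obstacle is thus the last clause of the recursion, i.e.\ controlling $\rank_F$ at a \emph{branching} $F$-node. To make it precise I would show that when at least two arguments contain a function from $F$ (so $\varphi\notin N_{F,G}$), the derivative acts argument-wise, $d_F(\varphi)=f\formulabrackets{d_F(\alpha_1),\dots,d_F(\alpha_k)}$ with $\rank_F(d_F(\alpha_i))=\max(\rank_F(\alpha_i)-1,0)$, and then combine $\rank_F(\varphi)=\rank_F(d_F(\varphi))+1$ (valid since $f\notin G$ gives $\varphi\notin\ML[G]$) with an auxiliary induction on the rank to unfold $\max(r_1,r_2+1)$. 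The one genuinely technical point is checking that this argument-wise description of $d_F$ is compatible with the maximality requirement and the fresh-variable conventions in the definition of the derivative, and that a non-branching chain of $F$-nodes (where at most one argument ever carries $F$-structure) collapses in a single step and therefore contributes only $1$ to the rank.
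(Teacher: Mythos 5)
Your proposal is correct, but it follows a genuinely different route from the paper. The paper strengthens the statement: it shows by induction on $\rank_F(\varphi)$ that every formula of positive rank contains at least $2^{\rank_F(\varphi)-1}$ ``$F$-leaves'' (sub-formulae $f\formulabrackets{\beta_1,\dots,\beta_k}$ with $f\in F$ and all $\beta_j\in\ML[G]$), using that each $F$-leaf of the derivative contains at least two of the fresh variables and that each substituted formula $\alpha_i\in N_{F,G}\setminus\ML[G]$ contributes at least one $F$-leaf; it then sums these counts over all derivatives $d_F^j(\varphi)$, $0\le j<\rank_F(\varphi)$, to obtain $2^0+\dots+2^{\rank_F(\varphi)-1}=2^{\rank_F(\varphi)}-1$ occurrences of functions from $F$. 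You instead first establish a compositional recursion for the rank ($\nxt$ and $G$-nodes take the maximum, an $F$-node yields $\max(r_1,r_2+1)$) and then run a plain structural induction on $\varphi$, with the doubling coming from the two disjoint subtrees at a branching $F$-node. Your recursion is correct in all cases (I checked it against the definition of $d_F$, including the facts that $\rank_F(\varphi)\le 1$ iff $\varphi\in N_{F,G}$ and that $d_F$ acts argument-wise on the top constructor whenever $\varphi\notin N_{F,G}$), and the size induction built on it is sound. What each approach buys: the paper's argument works directly from the definition of the derivative and keeps all bookkeeping inside one induction on the rank; yours shifts the technical burden into the recursion lemma --- whose proof needs essentially the same analysis of maximal collapsed sub-formulae that the paper uses implicitly --- but in exchange yields a reusable, arguably more illuminating characterization of $\rank_F$ that makes completely explicit where the exponential growth originates. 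Two cosmetic caveats: your motivating phrase ``two arguments already carrying $F$-structure'' is slightly off in the base situation $r_1=r_2=0$ (where $f\formulabrackets{p,q}$ still has rank $1$ although neither argument contains a function from $F$), and your doubling case tacitly assumes $k\ge 2$; both are already absorbed by your stated convention $r_2=0$ for $k\le 1$, under which the bound $|\varphi|\ge 1=2^1-1$ holds trivially in those corner cases.
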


\begin{proof}
  Throughout this proof, we will refer to the $F$-rank of a formula simply
  as rank.
  
  We prove the stronger claim that a formula $\varphi$ of positive rank 
  has at least $2^{\rank_F(\varphi)-1}$ sub-formulae (counting multiplicity)
  of the form $f\formulabrackets{\beta_1,\dots,\beta_k}$ with $f\in F$ and
  $\beta_1,\dots,\beta_k\in\ML[G]$. In the following, we call such formulae
  \emph{$F$-leaves}.
  Since counting the $F$-leaves in all derivatives of $\varphi$ results 
  in a lower bound on the total number of functions from $F$ in $\varphi$, 
  it follows that $\varphi$ contains at least
  $2^0+2^1  +\ldots+2^{\rank_F(\varphi)-1}=2^{\rank_F(\varphi)}-1$ 
  functions from $F$.
  
  It now remains to prove the bound on the number of $F$-leaves.
  Recall that we consider formulae of rank at least one.
  If $\rank_F(\varphi)=1$, $\varphi$ contains at least one function from 
  $F$ and hence has at least one $F$-leaf.
  Now, assume that $\varphi=\varphi\formulasubst{\overline p}$ is of rank
  at least two, where $\overline p = (p_1,\ldots,p_m)$.
  Let $\gamma\formulasubst{\overline p,q_1,\ldots,q_n}$ be the
  derivative of $\varphi$ and let $\alpha_1,\ldots,\alpha_n\in 
  N_{F,G}\setminus\ML[G]$ with
  $\varphi=\gamma\formulasubst{\overline p,\alpha_1,\ldots,\alpha_n}$. 
  Then, for every $F$-leaf $f\formulabrackets{\beta_1,\ldots,\beta_k}$ of 
  $\gamma\formulasubst{\overline p,q_1,\ldots,q_n}$, there exist indices 
  $1\leq i<j\leq k$ such that $\beta_i,\beta_j\in\{q_1,\ldots,q_n\}$. 
  By induction hypothesis, $\gamma\formulasubst{\overline p,q_1,\ldots,q_n}$
  has at least $2^{\rank_F(\varphi)-2}$ $F$-leaves, each of which contains
  at least two of the fresh variables $\{q_1,\ldots,q_n\}$. Since each 
  $\alpha_i$ contains at least one function from $F$, 
  $\varphi=\gamma\formulasubst{\overline p,\alpha_1,\ldots,\alpha_n}$ has 
  at least twice the number of $F$-leaves compared to 
  $\gamma\formulasubst{\overline p,q_1,\ldots,q_n}$, i.e., at least
  $2^{\rank_F(\varphi)-1}$ $F$-leaves.
\end{proof}

We can now turn to the main ingredient for our succinctness result.

\begin{lem}\label{thm:@lem}
  Let $F$ and $G$ be disjoint sets of Boolean functions,
  $\lverum,\lfalsum\in G$ (such that no function from $F$ is nullary),
  and, for $f\in F$ and $i\in[\mathrm{ar}(f)]$, let
  $\omega_{f,i}\in\PL[G]$ be a $\PL[G]$-representation of
  $(f,i)$. Let, furthermore, $\kappa\colon\N\to\N$ be a monotone
  function such that $1\leq\kappa(0)$ and
  $|\omega_{f,i}|\le \kappa(\mathrm{ar}(f))$ for any $f\in F$ and
  $i\in[\mathrm{ar}(f)]$. Finally, let
  $\kappa'\colon\N\to\N\colon n\mapsto \kappa(n)^{1+\log_2 n}\cdot n$.

  Then $\ML[F\cup G]$ has  $\kappa'$-translations in $\ML[G]$.
\end{lem}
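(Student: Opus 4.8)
The plan is to combine the size--versus--rank bound of Lemma~\ref{thm:bound-rank@lem} with a single ``one layer'' translation that removes the outermost band of $F$-functions at the cost of only a constant factor, and then to iterate this layer along the $F$-rank. Throughout I fix the input formula $\varphi\in\ML[F\cup G]$ and set $n=|\varphi|$. Since representations and the substitution of translated arguments only ever introduce functions from $G$, no new function from $F$ is ever created; hence every $f\in F$ to which I apply a representation already occurs in $\varphi$ and so satisfies $\mathrm{ar}(f)\le n$. By monotonicity of $\kappa$ this gives $|\omega_{f,i}|\le\kappa(\mathrm{ar}(f))\le\kappa(n)$ for every relevant $f$ and $i$, and the hypothesis that no function in $F$ is nullary guarantees that such a representation $\omega_{f,i}$ exists for each $f$ I encounter.

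First I would establish the \emph{one-layer} translation: every $\chi\in N_{F,G}$ whose $F$-functions have arity $\le n$ has an equivalent $\ML[G]$-formula $T(\chi)$ with $|T(\chi)|\le\kappa(n)\cdot|\chi|$. I define $T$ by induction along the grammar of $N_{F,G}$. On $\ML[G]$-formulae it is the identity, and on $\nxt$-nodes and on $g$-nodes with $g\in G$ it recurses homomorphically; in each of these cases a one-line estimate preserves the factor $\kappa(n)\ge1$. The decisive case is $\chi=f\formulabrackets{\chi_1,\dots,\chi_k}$ with $f\in F$: by definition of $N_{F,G}$ at most one argument, say $\chi_j$, lies outside $\ML[G]$ (if all do, pick $j=1$), and I set $T(\chi)=\omega_{f,j}\formulasubst{T(\chi_1),\dots,T(\chi_k)}$. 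Because $\omega_{f,j}$ mentions $p_j$ at most once, the one possibly large, recursively translated argument $T(\chi_j)$ is copied at most once, while the remaining arguments are unchanged $\ML[G]$-formulae $\chi_i$ that may be duplicated but never grow. Writing $|T(\chi)|=I+\sum_i c_i\,|T(\chi_i)|$, where $c_i$ is the number of occurrences of $p_i$ in $\omega_{f,j}$ and $I$ the number of its internal nodes, the bounds $c_j\le1$, $c_i\le|\omega_{f,j}|\le\kappa(n)$, and $I\le\kappa(n)$ combine to $|T(\chi)|\le\kappa(n)(1+\sum_i|\chi_i|)=\kappa(n)|\chi|$. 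Equivalence of $T(\chi)$ and $\chi$ follows from $\omega_{f,j}\equiv f\formulabrackets{p_1,\dots,p_k}$ together with the fact that equivalent subformulae may be substituted for one another.

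Next I would prove by induction on $r$ that every $\chi\in\ML[F\cup G]$ with $\rank_F(\chi)=r$ (and $F$-arities $\le n$) has an equivalent $\ML[G]$-formula of size $\le\kappa(n)^r\cdot|\chi|$. The base $r=0$ is immediate. For $r\ge1$ write $d_F(\chi)=\gamma\formulasubst{\overline p,q_1,\dots,q_m}$ and $\chi=\gamma\formulasubst{\overline p,\alpha_1,\dots,\alpha_m}$ with $\alpha_i\in N_{F,G}\setminus\ML[G]$ and each $q_i$ occurring exactly once in $\gamma$. Applying the one-layer translation to each $\alpha_i$ yields $\alpha_i'\in\ML[G]$ with $\alpha_i'\equiv\alpha_i$ and $|\alpha_i'|\le\kappa(n)|\alpha_i|$, and I set $\hat\chi=\gamma\formulasubst{\overline p,\alpha_1',\dots,\alpha_m'}$. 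Since each $q_i$ occurs once, $|\hat\chi|\le\kappa(n)|\chi|$, and $\hat\chi\equiv\chi$. As $d_F$ commutes with the substitution of the $F$-free formulae $\alpha_i'$ for the once-occurring variables $q_i$, one gets $\rank_F(\hat\chi)=\rank_F(\gamma)=\rank_F(d_F(\chi))=r-1$, so the induction hypothesis produces an equivalent $\ML[G]$-formula of size $\le\kappa(n)^{r-1}|\hat\chi|\le\kappa(n)^r|\chi|$. To conclude, Lemma~\ref{thm:bound-rank@lem} gives $\rank_F(\varphi)\le\log_2(|\varphi|+1)$, whence $\varphi$ has an equivalent $\ML[G]$-formula of size at most $\kappa(n)^{\log_2(n+1)}\cdot n$; using $n\ge1$, $\log_2(n+1)\le1+\log_2 n$, and $\kappa(n)\ge1$, this is bounded by $\kappa(n)^{1+\log_2 n}\cdot n=\kappa'(|\varphi|)$, as required.

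I expect the main obstacle to be the one-layer step: the entire argument rests on the observation that an $f$-node of an $N_{F,G}$-formula has at most one argument containing functions from $F$, and that choosing precisely the representation $\omega_{f,j}$ which mentions $p_j$ only once prevents this single (already enlarged) argument from being duplicated. This is exactly what turns the per-layer cost into a bounded factor $\kappa(n)$ rather than a compounding one, and is what makes the final bound polynomial once it is raised only to the logarithmic power $\rank_F(\varphi)$. A secondary and more routine obstacle is the bookkeeping that $\rank_F$ is unchanged when the maximal $(N_{F,G}\setminus\ML[G])$-subformulae are replaced by $\ML[G]$-formulae, which I would dispatch by verifying that $d_F$ commutes with substitution of $\ML[G]$-formulae for once-occurring variables.
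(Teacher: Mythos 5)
Your proposal is correct and takes essentially the same route as the paper: your ``one-layer'' translation on $N_{F,G}$ is precisely the paper's structural-induction base case for formulae of rank at most one, and your induction on $\rank_F$ via the derivative decomposition, capped by Lemma~\ref{thm:bound-rank@lem}, is exactly the paper's argument. The only (cosmetic) differences are that you fix $n=|\varphi|$ and bound all arities by $n$ upfront where the paper tracks the maximal occurring arity $K_\varphi$ through the induction, and both proofs justify the rank-preservation claim $\rank_F(\hat\chi)=\rank_F(\gamma)$ at the same informal level.
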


\begin{proof}
  For $\varphi\in\ML[F\cup G]$, let $K_\varphi$ denote the maximal arity
  of any $f\in F$ occurring in $\varphi$ (or $0$ if $\varphi\in \ML[G]$).
  
  We prove the following claim: every formula
  $\varphi(r_1,\dots,r_\ell)\in \ML[F\cup G]$ is equivalent to
  an $\ML[G]$-formula $\psi(r_1,\dots,r_\ell)$ of size at most
  $\kappa(K_\varphi)^{\rank_F(\varphi)} \cdot |\varphi|$.
  For notational simplicity, we write $\bar{r}$ for the tuple
  $(r_1,\dots,r_\ell)$. Since $K_\varphi\leq|\varphi|$ and
  $\rank_F(\varphi)\le\log_2(|\varphi|+1)$ by
  Lemma~\ref{thm:bound-rank@lem}, this claim ensures that every
  $\ML[F\cup G]$-formula $\varphi$ of size $n$ has an equivalent
  $\ML[G]$-formula of size at most
  $ \kappa(K_\varphi)^{\log_2(|\varphi|+1)}\cdot |\varphi| \leq
  \kappa(n)^{1+\log_2 n}\cdot n=\kappa'(n)$.
  
  The proof of the claim proceeds by induction on the $F$-rank of $\varphi$.
  Since $F$ remains fixed, we will refer to the $F$-rank simply as rank.
 
  Let $\varphi\in \ML[F\cup G]$ be of rank at most one. We show by
  induction on the structure of $\varphi$ that there exists an
  equivalent $\ML[G]$-formula $\varphi'(\bar{r})$ of size
  $|\varphi'|\leq \kappa(K_\varphi)\cdot |\varphi|$.  If $\varphi$ is
  a propositional variable, 
  $|\varphi|=1\leq\kappa(0)\leq\kappa(1)\cdot|\varphi|$ by choice of the function $\kappa$.
  
  Now, assume that
  $\varphi(\bar{r})=g\formulabrackets{\alpha_1,\ldots,\alpha_k}$
  where $g\in G$ is of arity $k$.
  By induction hypothesis, there exists for each $j\in[k]$ an
  $\ML[G]$-formula $\beta_j(\bar{r})$ with $\beta_j\equiv\alpha_j$ and
  $|\beta_j|\leq\kappa(K_{\alpha_j})\cdot|\alpha_j|$.
  Since $\kappa$ is monotone and $K_{\alpha_j}\leq K_\varphi$, we
  obtain $|\beta_j|\leq\kappa(K_\varphi)\cdot |\alpha_j|$ for
  $j\in[k]$.  Set
  $\varphi'(\bar{r})=g\formulabrackets{\beta_1,\dots,\beta_k}$. Then
  $\varphi'$ is an $\ML[G]$-formula and equivalent to $\varphi$. Furthermore,
  the  size of $\varphi'$ satisfies
  \begin{align*}
    |\varphi'| &=    1+|\beta_1|+\dots+|\beta_k| \\
               &\leq 1+\kappa(K_\varphi)\cdot \bigl(|\alpha_1|+\dots+|\alpha_k|\bigr) \\
               &\leq \kappa(K_\varphi)\cdot \bigl(1+|\alpha_1|+\dots+|\alpha_k|\bigr) 
     && \text{since $\kappa(K_\varphi)\geq 1$}\\
               &=    \kappa(K_\varphi)\cdot |\varphi|\,.
  \end{align*}
  A similar argument establishes the case $\varphi(\overline{r})=\nxt\alpha$.
  
  Finally, assume that
  $\varphi(\bar{r})=f\formulabrackets{\alpha_1,\ldots,\alpha_k}$ where
  $f\in F$ is of arity $k$. Since $F$ and $G$ are disjoint and since all 
  constant functions belong to $G$, we get $k\ge1$. Recall that $\varphi$
  is of rank one and therefore a formula in $N_{F,G}$. Hence there is
  $i\in[k]$ such that, from among the arguments $\alpha_1,\dots,\alpha_k$, 
  at most $\alpha_i$ contains a function from $F$, i.e., such that 
  $\alpha_j\in \ML[G]$ for all $j\in[k]\setminus\{i\}$.
  By induction hypothesis, there is an $\ML[G]$-formula
  $\beta_i(\bar{r})\equiv\alpha_i(\bar{r})$ of size
  $|\beta_i|\leq\kappa(K_{\alpha_i})\cdot|\alpha_i|
  \leq\kappa(K_\varphi)\cdot|\alpha_i|$.
  Recall that $\omega_{f,i}\formulasubst{p_1,\ldots,p_k}$ is a
  $\PL[G]$-representation of $(f,i)$ that uses the variable $p_i$ at most
  once and has size $|\omega_{f,i}|\leq\kappa(k)\leq\kappa(K_\varphi)$.
  Set $\varphi'(\bar{r})=\omega_{f,i}\formulasubst{\beta_1,\ldots,\beta_k}$
  where $\beta_j(\bar{r})=\alpha_j(\bar{r})$ for $j\in[k]\setminus\{i\}$.
  Then $\varphi'$ is an $\ML[G]$-formula and equivalent to
  $f\formulabrackets{\alpha_1,\ldots,\alpha_k}=\varphi$. Furthermore,
  $\varphi'$ is obtained from
  $\omega_{f,i}\formulasubst{p_1,\ldots,p_k}$ by replacing one
  variable with $\beta_i$ and all others with formulae of size at most
  $\sum_{j\neq i}|\alpha_j|$. Hence
  \begin{align*}
    |\varphi'| &= \left|\omega_{f,i}\formulasubst{\beta_1,\ldots,\beta_k}\right| \\ 
            &\leq |\beta_i| + \kappa(k) \cdot \Bigl( 1+\sum_{j\neq i}|\alpha_j| \Bigr) \\
            &\leq \kappa(K_\varphi)\cdot|\alpha_i| + \kappa(K_\varphi)\cdot\Bigl( 1+\sum_{j\neq i}|\alpha_j| \Bigr) \\
               &= \kappa(K_\varphi)\cdot\Bigl(1+\sum_j|\alpha_j| \Bigr) \\ 
               &= \kappa(K_\varphi)\cdot|f\formulabrackets{\alpha_1,\ldots,\alpha_k}| 
                = \kappa(K_\varphi)\cdot|\varphi|.
  \end{align*}
  This shows the claim for formulae of rank at most one. 

  We proceed by induction on the rank of $\varphi$. Assume
  $\varphi\formulasubst{\overline r}\in \ML[F\cup G]$ is of
  rank $R\geq 2$. Let $\gamma\formulasubst{\bar{r},q_1,\ldots,q_n}$ 
  be the derivative of $\varphi\formulasubst{\bar{r}}$ and let
  $\alpha_1,\ldots,\alpha_n\in N_{F,G}$ with
  $\varphi=\gamma\formulasubst{\bar{r},\alpha_1,\ldots,\alpha_n}$.
  Since each of the formulae $\alpha_1(\bar{r}),\dots,\alpha_k(\bar{r})$
  is of rank one, there are 
  $\beta_1(\bar{r}),\ldots,\beta_n(\bar{r})\in\ML[G]$ with
  $\alpha_i\equiv\beta_i$ and
  $|\beta_i|\leq\kappa(K_{\alpha_i})\cdot|\alpha_i|
   \leq\kappa(K_\varphi)\cdot|\alpha_i|$ for $i\in[k]$.  Let
  $\psi\formulasubst{\overline{r}}
  =\gamma\formulasubst{\overline{r},\beta_1,\ldots,\beta_k}$.  Then
  $\psi$ is equivalent to $\varphi$ and of size
  $|\psi|\leq\kappa(K_\varphi)\cdot|\varphi|$. Intuitively, $\psi$ is
  obtained from $\varphi$ by replacing the ``maximal $\ML[F\cup G]$
  sub-formulae'' of rank one by equivalent $\ML[G]$-formulae. 
  Hence the rank of $\psi$ is equal to the rank of $\gamma$, namely 
  $R-1$. It now follows by induction hypothesis that
  $\psi$ is equivalent to a formula $\varphi'\in \ML[G]$ of size
  $|\varphi'|\leq \kappa(K_\psi)^{R-1} \cdot |\psi| \leq
  \kappa(K_\varphi)^R \cdot |\varphi|$. Since
  $\varphi\equiv\psi\equiv\varphi'$, this finishes the verification of
  the claim from the beginning of this proof.
\end{proof}

From Lemma~\ref{thm:@lem}, we can get the main result of this section,
stating that $\ML[F\cup G]$ is not more succinct than $\ML[G]$,
provided $F$ is a finite set of Boolean functions with
$\PL[G]$-representations.

\begin{prop}\label{thm:mon-poly-succinct@prop}
  Let $F$ and $G$ be sets of Boolean functions such that $F$ is 
  finite and has $\PL[G]$-representations and such that $\lverum,
  \lfalsum\in G$. 
  Then $\ML[F\cup G]$ has polynomial translations in $\ML[G]$.
\end{prop}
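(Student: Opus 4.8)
The plan is to reduce the proposition directly to Lemma~\ref{thm:@lem}, which already does the substantive work: it shows that $\ML[F\cup G]$ has $\kappa'$-translations in $\ML[G]$ under the right hypotheses. Two small mismatches separate the two statements. Lemma~\ref{thm:@lem} requires $F$ and $G$ to be \emph{disjoint} and requires that \emph{no} function in $F$ be nullary, whereas the proposition imposes neither restriction. So the first task is to massage $F$ into a set satisfying these side conditions without changing $\ML[F\cup G]$.

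To this end I would replace $F$ by $F'=F\setminus G$. Since $\ML[F\cup G]=\ML[F'\cup G]$, it suffices to exhibit polynomial translations of $\ML[F'\cup G]$ in $\ML[G]$. The set $F'$ is finite, being a subset of $F$, and it still has $\PL[G]$-representations, since every $f\in F'\subseteq F$ inherits the representations guaranteed for $F$. By construction $F'$ and $G$ are disjoint. Finally, $F'$ contains no nullary function: the only nullary Boolean functions are $\lverum$ and $\lfalsum$, and both lie in $G$ by hypothesis, so both are discarded when passing from $F$ to $F'$. Thus the pair $(F',G)$ meets all side conditions of Lemma~\ref{thm:@lem}.

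Next I would choose the auxiliary function $\kappa$ to be constant. For each $f\in F'$ of arity $k>0$ and each $i\in[k]$, fix a $\PL[G]$-representation $\omega_{f,i}$, and set $d=\max\bigl(\{1\}\cup\{|\omega_{f,i}|\mid f\in F',\ i\in[\mathrm{ar}(f)]\}\bigr)$; this maximum exists because $F'$ is finite. Taking $\kappa(n)=d$ for all $n$ gives a monotone function with $\kappa(0)=d\ge1$ and $|\omega_{f,i}|\le d=\kappa(\mathrm{ar}(f))$, so Lemma~\ref{thm:@lem} applies and yields $\kappa'$-translations of $\ML[F'\cup G]$ in $\ML[G]$ with $\kappa'(n)=d^{1+\log_2 n}\cdot n$.

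It then remains to observe that $\kappa'$ is bounded by a polynomial, which is the only point requiring a moment's care. Rewriting $d^{\log_2 n}=n^{\log_2 d}$ gives $\kappa'(n)=d\cdot n^{1+\log_2 d}$, a polynomial in $n$ of degree at most $\lceil 1+\log_2 d\rceil$, since $d$ is a fixed constant. Hence $\ML[F\cup G]=\ML[F'\cup G]$ has polynomial translations in $\ML[G]$, as claimed. I do not expect any genuine obstacle beyond the two bookkeeping reductions above: the heavy lifting is entirely in Lemma~\ref{thm:@lem}, and the constant choice of $\kappa$ is exactly what converts its $\kappa'$-bound into an honest polynomial.
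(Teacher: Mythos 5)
Your proposal is correct and follows essentially the same route as the paper's own proof: pass to $F'=F\setminus G$ (which also makes the no-nullary-function condition automatic, since $\lverum,\lfalsum\in G$), take a constant bound $\kappa$ on the finitely many representation sizes, and invoke Lemma~\ref{thm:@lem}, rewriting $d^{1+\log_2 n}\cdot n$ as $d\cdot n^{1+\log_2 d}$ to get a polynomial bound. No gaps.
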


Noting that $\ML[F]\subseteq\ML[F\cup G]$, the above implies in
particular that $\ML[F]$ has polynomial translations in $\ML[G]$.  In
view of Example~\ref{E:representations}, it ensures specifically that
$\ML[\dM\cup\{\mathrm{maj}\}]$ has polynomial translations in
$\ML[\dM]$.

\begin{proof}
  The set $F'=F\setminus G$ is finite, has $\PL[G]$-representations,
  and, in addition, $F'$ and $G$ are disjoint with $F\cup G=F'\cup
  G$ and $\lverum,\lfalsum\in G$. 

  Choose $\PL[G]$-representations $\omega_{f,i}$ for all $f\in F'$ and
  $i\in[k]$ (where $k>0$ is the arity of $f$). Since $F'$ is finite,
  there is some constant $c>0$ such that all these formulae are of
  size at most $c$. Let $\kappa:\N\to\N$ be the constant function with
  $\kappa(n)=c$ for all $n\in\N$. By the previous lemma, any
  $\varphi\in\ML[F'\cup G]$ is thus equivalent to an $\ML[G]$-formula
  of size at most
  $\kappa(|\varphi|)^{1+\log_2 |\varphi|}\cdot|\varphi|
   = c^{1+\log_2|\varphi|}\cdot|\varphi|
   = c\cdot|\varphi|^{1+\log_2 c}
   \leq c\cdot|\varphi|^d$ for some $d>0$. Since
  $\ML[F'\cup G]=\ML[F\cup G]$, the claim follows.
\end{proof}

\subsection{Representations of $F$ in $\PL[\dM]$ and $\PL[\extdM]$}
\label{SS-step2}

This section is devoted to the second step of our programme (see
page~\pageref{Page-programme}), i.e., we will construct $\PL[\extdM]$-
and $\PL[\dM]$-representations of arbitrary Boolean functions.

\begin{prop}\label{thm:mon-iff-reduce-to-ems@prop} 
  Let $f$ be a Boolean function of arity $k>0$.  Then
  \begin{enumerate}[(1)]
    \item $f$ has $\PL[\extdM]$-representations and
    \item $f$ has $\PL[\dM]$-representations if, and only if, the
      function $f$ is locally monotone.
  \end{enumerate}
\end{prop}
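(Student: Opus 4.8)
The plan is to reduce both parts to a per-argument statement and quantify over all $i\in[k]$: since $f$ has $\PL[\dM]$-representations exactly when $(f,i)$ has a $\PL[\dM]$-representation for every $i$, and $f$ is locally monotone exactly when it is monotone in every argument, it suffices to show that $(f,i)$ always has a $\PL[\extdM]$-representation and that $(f,i)$ has a $\PL[\dM]$-representation if and only if $f$ is monotone in the $i$-th argument. The main tool for the ``only if'' direction is the observation already recalled in Example~\ref{E:representations}, which I would state as a lemma and prove by induction on the structure of a formula $\psi\in\PL[\dM]$: if $p_i$ occurs at most once in $\psi$, then $\sem{\psi}{p_1,\dots,p_k}$ is monotone in the $i$-th argument. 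The base cases (a variable or a constant) are immediate; for $\lnot$ the single occurrence of $p_i$ stays single and the direction of monotonicity flips; for $\land$ and $\lor$ the variable $p_i$ occurs in at most one of the two subformulae, and since conjunction and disjunction are monotone in each argument, combining a subformula that is monotone in $p_i$ with one that is constant in $p_i$ preserves monotonicity in $p_i$.

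For the constructions I would use cofactor (Shannon) expansions along $p_i$. Write $f_0$ and $f_1$ for the two cofactors of $f$ obtained by fixing $p_i$ to $\false$ and $\true$ respectively; these are functions of the remaining $k-1$ variables (constants when $k=1$), hence realisable by formulae $g_0,g_1$ over the complete basis in question that do not mention $p_i$. For Part~(1), the identity $f=f_0\oplus\bigl(p_i\land(f_0\oplus f_1)\bigr)$, where $\oplus$ denotes exclusive or and which is verified by substituting $p_i=\false$ and $p_i=\true$, isolates $p_i$ to a single occurrence; rewriting $a\oplus b$ as $\lnot(a\liff b)$ then yields the $\PL[\extdM]$-representation $\omega_i=\lnot\bigl(g_0\liff(p_i\land\lnot(g_0\liff g_1))\bigr)$, in which $p_i$ appears once. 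Since $i$ was arbitrary, this establishes~(1).

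For the ``if'' direction of Part~(2), suppose $f$ is monotone in the $i$-th argument. If it is increasing, then $f_0\le f_1$ pointwise and one checks $f=g_0\lor(p_i\land g_1)$ (at $p_i=\false$ this is $f_0$, and at $p_i=\true$ it is $f_0\lor f_1=f_1$ because $f_0\le f_1$); if $f$ is decreasing, the symmetric formula $g_1\lor(\lnot p_i\land g_0)$ works. Both are $\PL[\dM]$-formulae using $p_i$ once, giving the required $\PL[\dM]$-representation. The ``only if'' direction is then immediate from the monotonicity lemma: a $\PL[\dM]$-representation of $(f,i)$ is a $\PL[\dM]$-formula using $p_i$ at most once, so it defines a function monotone in the $i$-th argument, and that function is $f$.

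I expect the one genuinely delicate point to be the single-occurrence monotonicity lemma in the $\land$/$\lor$ step: one must read ``monotone in the $i$-th argument'' correctly as ``uniformly increasing or uniformly decreasing over all settings of the other variables'' and then verify that pairing a uniformly monotone factor with a factor that is constant in $p_i$ preserves the uniform direction --- this is exactly where the definition of local monotonicity, rather than a weaker pointwise notion, is needed. Everything else reduces to the two Shannon-style identities, whose correctness is a routine case check on $p_i\in\{\false,\true\}$, together with the functional completeness of $\dM$ and $\extdM$ used to realise the cofactors $g_0,g_1$.
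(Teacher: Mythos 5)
Your proposal is correct and takes essentially the same route as the paper: a cofactor (Shannon) decomposition along the distinguished argument, a bi-implication identity to isolate that argument once for part~(1), the expansion $g_0\lor(p_i\land g_1)$ for the monotone direction of part~(2), and the single-occurrence monotonicity induction over $\PL[\dM]$-formulae for the converse. The only deviation is cosmetic: where the paper writes the function via DNF as $(p_k\land\alpha)\lor(\lnot p_k\land\beta)$ and then uses the identity $(\alpha\land\beta)\lor\bigl(\lnot(\alpha\land\beta)\land(p_k\liff(\alpha\land\lnot\beta))\bigr)$, you use the XOR identity $f_0\oplus\bigl(p_i\land(f_0\oplus f_1)\bigr)$ rewritten via $\lnot(a\liff b)$ --- both verify by the same two-point case check and yield a formula with a single occurrence of the distinguished variable.
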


\begin{proof}
  (1) For notational simplicity, we prove that there is some
  $\PL[\extdM]$-representation of $(f,k)$, i.e., that there is some
  $\PL[\extdM]$-formula $\omega(p_1,\ldots,p_k)$ that is equivalent to
  $f\formulabrackets{p_1,\ldots,p_k}$ and that uses the variable $p_k$ 
  at most once. Over the De~Morgan basis there is a formula in disjunctive
  normal form that is equivalent to $f\formulabrackets{p_1,\dots,p_k}$ (but that
  may use $p_k$ more than once). Without loss of generality however,
  we can assume that every clause contains precisely one of $p_k$ and
  $\lnot p_k$. Thus, there are formulae
  $\alpha=\alpha(p_1,\dots,p_{k-1})$ and
  $\beta=\beta(p_1,\dots,p_{k-1})$ in disjunctive normal form such
  that
  \[
    f\formulabrackets{p_1,\dots,p_k} \equiv
    (p_k\land\alpha)\lor(\lnot p_k\land\beta)\,.
  \]
  But this latter formula is equivalent to the formula
  \[
    (\alpha\land\beta)\lor
    \bigl(\lnot(\alpha\land\beta)\land(p_k\liff(\alpha\land\lnot\beta))\bigr)
  \]
  that belongs to $\ML[\extdM]$ and uses $p_k$ only once.
 
  (2) First, let $f$ be locally monotone.  For notational
  simplicity, we will only construct a $\PL[\dM]$-representation of
  $(f,k)$. In addition, we assume that $f$ is increasing in the $k$-th
  argument, i.e., $f(\overline a,\false)\le f(\overline a,\true)$ for
  all $\overline a \in\Bool^{k-1}$.  Since $\dM$ is complete, there is
  a $\PL[\dM]$-formula $\omega\formulasubst{p_1,\dots,p_k}$ that is
  equivalent to $f\formulabrackets{p_1,\dots,p_k}$. Since
  $f(\overline a,\false)\le f(\overline a,\true)$ for any
  $\overline a\in\Bool^{k-1}$, it follows that
  \[ 
    f\formulabrackets{p_1,\ldots,p_k} \equiv 
    \Bigl(\omega\formulasubst[\big]{p_1,\ldots,p_{k-1},\lverum} \land p_k \Bigr)
    \lor \omega\formulasubst[\big]{p_1,\ldots,p_{k-1},\lfalsum}\,.
  \]
  In particular, the formula on the right uses the variable $p_k$ only
  once and therefore forms a $\PL[\dM]$-representation of
  $(f,k)$.

  Now assume that $f$ has $\PL[\dM]$-representations. We prove by
  induction on the size of a formula
  $\omega\formulasubst{p_1,\dots,p_k} \in\PL[\dM]$ the following: if
  the variable $p_k$ appears at most once in $\omega$, then the
  function $\sem{\omega}{p_1,\dots,p_k}$ represented by $\omega$ is
  monotone in its $k$-th argument. The claim is trivial for formulae
  of the form $\lverum\formulabrackets{}$,
  $\lfalsum\formulabrackets{}$, and $p_i$. In particular, we can
  assume $k>0$.

  For the induction step, let $\omega=\alpha_1\lor\alpha_2$. Since
  the formula $\omega$ contains the variable $p_k$ at most once, it
  appears in at most one of the arguments $\alpha_i$; for notational
  simplicity, we assume it does not appear in $\alpha_1$. By the
  induction hypothesis, we have one of the following:
  \begin{itemize}
  \item for all $\overline{a}\in\Bool^{k-1}$:
    $\sem{\alpha_2}{p_1,\dots,p_k}(\overline{a},\false) \leq
    \sem{\alpha_2}{p_1,\dots,p_k}(\overline{a},\true)$ and therefore
    \\ \hspace*{1.9cm}
    $\sem{\alpha_1\lor\alpha_2}{p_1,\dots,p_k}(\overline{a},\false) \le
    \sem{\alpha_1\lor\alpha_2}{p_1,\dots,p_k}(\overline{a},\true)$ or
  \item for all $\overline{a}\in\Bool^{k-1}$:
    $\sem{\alpha_2}{p_1,\dots,p_k}(\overline{a},\false) \geq
    \sem{\alpha_2}{p_1,\dots,p_k}(\overline{a},\true)$ and therefore
    \\ \hspace*{1.9cm}
    $\sem{\alpha_1\lor\alpha_2}{p_1,\dots,p_k}(\overline{a},\false) \ge
    \sem{\alpha_1\lor\alpha_2}{p_1,\dots,p_k}(\overline{a},\true)$.
  \end{itemize}
  In either case, the formula $\omega$ represents a function that is
  locally monotone in its last argument (provided $\omega$ is of the
  form $\alpha_1\lor\alpha_2$).  The remaining cases
  $\omega=\alpha_1\land\alpha_2$ or $\omega=\lnot\alpha_1$ follow by
  a similar argument.
  This finishes the inductive proof.
  
  Recall that $f$ has $\PL[\dM]$-representations. Since each
  $\PL[\dM]$-representation of $(f,i)$ describes a function (namely
  $f$) that is monotone in the $i$-th argument, we obtain that the
  function $f$ is locally monotone, which completes the proof.
\end{proof}

\subsection{Representations of $\dM$ and $\extdM$ in $\PL[G]$}
\label{SS-step3}

This section is devoted to the third step of our programme (see
page~\pageref{Page-programme}), i.e., given an arbitrary complete set
$G$ of Boolean functions, we will construct $\PL[G]$-representations
of the Boolean functions from $\dM$ and $\extdM$,
respectively. More precisely, we show the following.

\begin{thm}\label{T:PL[G]-representations-of-extDM}
  Let $G$ be a complete set of Boolean functions with 
  $\lverum,\lfalsum\in G$.
  \begin{itemize}
  \item Then $\dM$ has $\PL[G]$-representations.
  \item If $G$ contains some non-locally monotone function,
    then $\extdM$ has $\PL[G]$-representations.
  \end{itemize}
\end{thm}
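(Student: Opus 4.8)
The plan is to construct the required representations directly for each function of positive arity in $\dM$ and $\extdM$, exploiting that a complete set $G$ containing the constants $\lverum,\lfalsum$ can lie neither in the clone of monotone functions nor in the clone of affine functions. Indeed, monotone functions are closed under composition and every constant is monotone, so if all functions in $G$ were monotone then every $\PL[G]$-formula would define a monotone function and $\PL[G]$ could not express $\lnot$; likewise, if all functions in $G$ were affine then $\PL[G]$ could not express $\land$. Since $\lnot,\land\in\dM$ and $G$ is complete, $G$ must contain some non-monotone function and some non-affine function.

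First I would build negation as a single-occurrence formula. Let $g\in G$ be non-monotone, witnessed by $\bar x\le\bar y$ in $\Bool^n$ with $g(\bar x)=\true$ and $g(\bar y)=\false$. Walking from $\bar x$ to $\bar y$ by flipping one coordinate $\false\to\true$ at a time, the value must drop at some single flip; this yields an index $i$ and constants $\bar c$ for the remaining arguments with $g(\bar c,\false)=\true$ and $g(\bar c,\true)=\false$. Plugging $\lverum,\lfalsum$ for $\bar c$ gives a $\PL[G]$-formula $g\formulabrackets{\bar c,p_1}\equiv\lnot p_1$ that uses $p_1$ once, i.e.\ a $\PL[G]$-representation of $(\lnot,1)$; from now on I may use $\lnot$ as single-occurrence syntactic sugar.

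The main step is a representation of $(\land,1)$. Take a non-affine $g\in G$. Since $g$ is affine exactly when all its second differences vanish, non-affineness provides an assignment of constants to all but two argument slots of $g$ whose induced binary restriction $b(x,y)$ has an odd $xy$-coefficient, i.e.\ $b(x,y)=c_0\oplus c_1x\oplus c_2y\oplus xy$ for suitable $c_0,c_1,c_2\in\Bool$. Substituting into the two free slots of $g$ the literals $p_1$ or $\lnot p_1$ according to $c_2$ and $p_2$ or $\lnot p_2$ according to $c_1$, the linear terms cancel and one computes
\[
  b(p_1\oplus c_2,\ p_2\oplus c_1)\ \equiv\ (p_1\land p_2)\oplus(c_0\oplus c_1c_2)\,,
\]
which is $p_1\land p_2$ or its negation; negating the output if necessary gives a $\PL[G]$-formula equivalent to $p_1\land p_2$ in which $p_1$ occurs once (the literal $\lnot p_1$, being the single-occurrence formula from the previous paragraph, contains $p_1$ only once, and the outer negation does not touch $p_1$). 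This is a $\PL[G]$-representation of $(\land,1)$; by symmetry of $\land$ it also gives $(\land,2)$, and the De~Morgan law $p_1\lor p_2\equiv\lnot(\lnot p_1\land\lnot p_2)$, obtained by substituting the single-occurrence negations into the once-used argument slot, delivers representations of $(\lor,1)$ and $(\lor,2)$. As the nullary constants need no representation, this proves the first bullet.

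For the second bullet let $h\in G$ be non-locally monotone, say not monotone in its $i$-th argument. By definition this yields assignments $\bar u,\bar v$ to the remaining arguments such that $x\mapsto h(\bar u,\ldots,x,\ldots)$ is the identity while $x\mapsto h(\bar v,\ldots,x,\ldots)$ is negation. I would steer these by $p_2$: in each slot $j\ne i$ place the constant $\bar u_j$ when $\bar u_j=\bar v_j$, the formula $p_2$ when $(\bar u_j,\bar v_j)=(\true,\false)$, and $\lnot p_2$ when $(\bar u_j,\bar v_j)=(\false,\true)$, while placing $p_1$ in slot $i$. Evaluating at $p_2=\true$ reproduces $\bar u$ and hence the identity on $p_1$, and at $p_2=\false$ reproduces $\bar v$ and hence negation, so the resulting $\PL[G]$-formula is equivalent to $p_1\liff p_2$ and uses $p_1$ once; this represents $(\liff,1)$, and symmetry of $\liff$ gives $(\liff,2)$, whence $\extdM$ has $\PL[G]$-representations. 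I expect the representation of $(\land,1)$ to be the main obstacle: one must extract a genuinely non-affine binary restriction from an arbitrary non-affine $g\in G$ and check that the literal substitution annihilates all unwanted monomials, while tracking that every negation introduced is one of the single-occurrence formulas from the first step so that $p_1$ truly occurs once. This also clarifies why the second bullet needs its extra hypothesis, since the identity/negation steering used for $\liff$ is precisely what a locally monotone basis cannot furnish, in line with Example~\ref{E:representations}.
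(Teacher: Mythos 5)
Your proposal is correct, and while its overall architecture (negation first, then a binary monotone connective, then De~Morgan's law for the remaining one, then bi-implication obtained by steering a non-locally monotone function with $p_2$) parallels the paper's, the constructions behind the first bullet take a genuinely different route. The paper never works with a member of $G$ directly when building $\lnot$ and $\lor$: it takes $\PL[G]$-formulae equivalent to $\lnot p_1$ and $p_1\lor p_2$ (supplied by completeness), splits the multiple occurrences of each variable into fresh variables so that the resulting formula uses every variable exactly once, and then analyses the Boolean function defined by that split formula --- by the same chain-walking argument you use, for negation (Proposition~\ref{P-generate-negation}), and, for disjunction (Proposition~\ref{P-generate-disjunction}), by deducing non-affineness from the four corner values of $\lor$ and exploiting only a weak consequence of it (one assignment under which flipping slot $i$ is invisible, another under which it is visible), followed by a two-step fix-up with negations; $\land$ then comes from De~Morgan's law. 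You instead invoke the clone-theoretic facts that the monotone functions and the affine functions are each closed under composition and contain the constants and projections, so a complete $G$ must itself contain a non-monotone and a non-affine function; you chain-walk the non-monotone one, and from the non-affine one you extract a binary restriction with algebraic normal form $c_0\oplus c_1x\oplus c_2y\oplus xy$ via the second-difference characterisation of affineness, cancelling the linear terms by substituting literals. Your route is more algebraic and shorter: single occurrence of $p_1$ is automatic because an argument slot of a basis function is a single syntactic position, and the fix-up collapses to one output negation; the price is importing the clone and ANF machinery, which the paper's elementary, self-contained case analysis avoids, and the paper's propositions also deliver the named formulae $\formulanegation$ and $\formuladisjunction$ that are re-used later (e.g., in the argument preceding Theorem~\ref{T:exp-gap-in-MLT/K4}). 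Your bi-implication construction is essentially identical to the paper's Proposition~\ref{P-generate-biimplication}. One point worth making explicit: since the paper's notion of completeness allows the equivalent $\PL[G]$-formula to use extra variables, your clone argument should note that a formula equivalent to $\lnot p_1$ (resp.\ $p_1\land p_2$) with dummy variables still defines a non-monotone (resp.\ non-affine) function, so the contradiction indeed goes through.
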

\begin{proof}
  The claims follow from Propositions~\ref{P-generate-negation},
  \ref{P-generate-disjunction}, and \ref{P-generate-biimplication}
  below, as well as  De~Morgan's law
  $x\land y\equiv \lnot(\lnot x\lor\lnot y)$.
\end{proof}

Let $G$ be some complete set of Boolean functions with $\lfalsum,
\lverum\in G$.  We now provide $\PL[G]$-representations for
each of the Boolean functions $\lnot$, $\lor$, and
$\liff$. These constructions follow the proof from \cite{PelM90} (at
least for disjunction, but also the handling of bi-implication is a
variant of their handling of disjunction). Thus, in some sense, the
following constructions can be understood as an analysis of the
constructions by Pelletier and Martin with an additional emphasis on
the condition that the variable $p_1$ is used only once.

\paragraph{Negation}
We start with the construction of a $\PL[G]$-representation of
$(\lnot,1)$.
Since $G$ is complete, there is some $\PL[G]$-formula $\alpha
\formulasubst{p_1}$ that is equivalent to $\lnot p_1$. However, 
$\alpha$ may use $p_1$ more than once, say, $\ell$ times. Let 
$q_1,\dots,q_\ell$ be distinct variables and let $\alpha'\formulasubst{
q_1,\dots,q_\ell}$ be obtained from $\alpha$ by substituting each
occurrence of $p_1$ by one of the variables $q_1,\dots,q_\ell$ such
that each $q_i$ is used exactly once. Then $\alpha$ and $\alpha'$
may not be equivalent but
\[ 
  \sem{\alpha'}{q_1,\dots,q_\ell}(0,\dots,0)=\lnot(0)=1
  \quad\text{and}\quad
  \sem{\alpha'}{q_1,\dots,q_\ell}(1,\dots,1)=\lnot(1)=0\,.
\]
Now, consider the sequence
\begin{align*}
  1=\sem{\alpha'}{q_1,\dots,q_\ell}&(0,0,\dots,0,0),\\
    \sem{\alpha'}{q_1,\dots,q_\ell}&(1,0,\dots,0,0),\\
                                   &\dots           \\
    \sem{\alpha'}{q_1,\dots,q_\ell}&(1,1,\dots,1,0),\\
    \sem{\alpha'}{q_1,\dots,q_\ell}&(1,1,\dots,1,1)=0\,,
\end{align*}
starting with a value of $1$ and ending with a value of $0$.
Hence there is $i\in[\ell]$ such that 
\[
   \sem{\alpha'}{q_1,\dots,q_\ell}(
     \underbrace{1,\dots,1}_{i-1\text{ times}},
     0,
     \underbrace{0,\dots,0}_{\ell-i\text{ times}})
   =1>0=
   \sem{\alpha'}{q_1,\dots,q_\ell}(
     \underbrace{1,\dots,1}_{i-1\text{ times}},
     1,
     \underbrace{0,\dots,0}_{\ell-i\text{ times}})\,.
\]
In particular, $\sem{\alpha'}{q_1,\dots,q_\ell}(1,\dots,1,a,0,
\dots,0)=1-a$, where the $i$-th argument is set to $a\in\Bool$.
Set
\[
  \gamma_j\formulasubst{} =
  \begin{cases}
    \lverum\formulabrackets{}   & \text{if }j<i\\
    \lfalsum\formulabrackets{}  & \text{if }j>i
  \end{cases}
\]
for all $j\in[\ell]\setminus\{i\}$ and consider the $\PL[G]$-formula
\[
    \formulanegation\formulasubst{p_1}
  = \alpha'\formulasubst[\big]{\gamma_1,\dots,\gamma_{i-1},
                               p_1,
                               \gamma_{i+1},\dots,\gamma_\ell}
\]
that uses $p_1$ only once since $\alpha'$ uses no variable twice.
Then, for any interpretation $\Intr$, we get
\begin{align*}
    \Intr(\formulanegation)
  &= \sem{\alpha'}{q_1,\dots,q_\ell}\bigl(
       \Intr(\gamma_1),\dots,\Intr(\gamma_{i-1}),
       \Intr(p_1),
       \Intr(\gamma_{i+1}),\dots,\Intr(\gamma_k)\bigr)\\
  &= \sem{\alpha'}{q_1,\dots,q_\ell}\bigl(1,\dots,1,\Intr(p_1),0,\dots,0\bigr)\\
  &=1-\Intr(p_1)\\
  &=\Intr\bigl(\lnot p_1\bigr)\,.
\end{align*}
Hence the formulae $\formulanegation\formulasubst{p_1}$ and
$\lnot p_1$ are equivalent. 
This shows
\begin{prop}\label{P-generate-negation}
  Let $G$ be a complete set of Boolean functions with $\lfalsum,
  \lverum\in G$. Then $\formulanegation\formulasubst{p_1}$ is a
  $\PL[G]$-representation of $(\lnot,1)$.
\end{prop}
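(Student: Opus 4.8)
The plan is to verify the three requirements of Definition~\ref{def:representation} for $\formulanegation\formulasubst{p_1}$: that it is a $\PL[G]$-formula, that it is equivalent to $\lnot p_1$, and that it uses $p_1$ at most once. Of these, membership in $\PL[G]$ and the single occurrence of $p_1$ will be immediate by construction, so the real content lies in the equivalence, and the whole difficulty is to manufacture a formula that negates $p_1$ while mentioning it only once.

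First I would use completeness of $G$ to obtain \emph{some} $\PL[G]$-formula $\alpha\formulasubst{p_1}\equiv\lnot p_1$; the obstruction is that $\alpha$ may contain $p_1$ repeatedly, say $\ell$ times. To gain control, I would decouple these occurrences: replacing the $\ell$ occurrences of $p_1$ by fresh, pairwise distinct variables $q_1,\dots,q_\ell$ yields a formula $\alpha'\formulasubst{q_1,\dots,q_\ell}$ in which each $q_j$ appears exactly once. Since identifying all $q_j$ with $p_1$ recovers $\alpha$, the two extreme valuations are pinned down: $\sem{\alpha'}{q_1,\dots,q_\ell}(0,\dots,0)=1$ and $\sem{\alpha'}{q_1,\dots,q_\ell}(1,\dots,1)=0$.

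The key step is then a discrete intermediate-value argument. I would walk along the chain of arguments obtained by flipping the coordinates from $0$ to $1$ one at a time, from $(0,\dots,0)$ up to $(1,\dots,1)$, and read off the values of $\sem{\alpha'}{q_1,\dots,q_\ell}$ along this chain. Since the sequence starts at $1$ and ends at $0$, some index $i\in[\ell]$ must witness a drop, so that
\[
  \sem{\alpha'}{q_1,\dots,q_\ell}\bigl(
    \underbrace{1,\dots,1}_{i-1\text{ times}},a,
    \underbrace{0,\dots,0}_{\ell-i\text{ times}}\bigr)=1-a
\]
for both $a\in\Bool$. At this point I would freeze the remaining coordinates at their witnessing values, substituting the constants $\lverum$ for $q_1,\dots,q_{i-1}$ and $\lfalsum$ for $q_{i+1},\dots,q_\ell$ (both available since $\lverum,\lfalsum\in G$) and leaving $q_i$ as $p_1$. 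Because $\alpha'$ used every variable exactly once, the resulting $\PL[G]$-formula $\formulanegation\formulasubst{p_1}$ contains $p_1$ exactly once; evaluating it under an arbitrary interpretation $\Intr$ then collapses via the displayed identity to $1-\Intr(p_1)=\Intr(\lnot p_1)$, which gives the equivalence.

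I expect the main obstacle to be precisely this isolation of a single ``sensitive'' occurrence of $p_1$. A priori nothing forbids $\alpha$ from encoding the negation only through the joint behaviour of all $\ell$ copies, and completeness alone yields $\alpha$ but no single-occurrence formula directly. The decoupling into fresh variables together with the intermediate-value argument (which needs no monotonicity, in contrast to the $\PL[\dM]$-case of Proposition~\ref{thm:mon-iff-reduce-to-ems@prop}) is exactly what forces one coordinate to already carry the full negation once the others are pinned to constants, and the presence of $\lverum,\lfalsum$ in $G$ is what permits pinning them inside $\PL[G]$.
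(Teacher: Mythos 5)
Your proposal is correct and follows essentially the same route as the paper's own proof: splitting the $\ell$ occurrences of $p_1$ into fresh variables $q_1,\dots,q_\ell$, finding a ``sensitive'' coordinate $i$ via the discrete intermediate-value argument along the chain $(0,\dots,0),(1,0,\dots,0),\dots,(1,\dots,1)$, and then pinning the other coordinates with $\lverum$ and $\lfalsum$. Nothing is missing; the single-occurrence property and membership in $\PL[G]$ follow exactly as you state.
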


For the next step, we require the following concept. A Boolean
function $f:\Bool^k\to\Bool$ is \emph{affine} if there are 
$c_0,\dots,c_k\in\Bool$ such that $f(a_1,\dots,a_k)=\bigl(c_0+
\sum_{i\in[k]}c_i \cdot a_i\bigr) \bmod 2$ for all $a_1,\dots,
a_k\in\Bool$. Then $f$ is affine if, and only if, for all 
$i\in[n]$, one of the following holds:
\begin{itemize}
\item $f(a_1,\dots,a_{i-1},\false,a_{i+1},\dots,a_k) =
       f(a_1,\dots,a_{i-1},\true, a_{i+1},\dots,a_k)$ for all
      $a_1,\dots,a_n\in\Bool$ or
\item $f(a_1,\dots,a_{i-1},\false,a_{i+1},\dots,a_k) \neq
       f(a_1,\dots,a_{i-1},\true, a_{i+1},\dots,a_k)$ for all
      $a_1,\dots,a_n\in\Bool$.
\end{itemize}
Note that the functions $\lnot$, $\lfalsum$,
$\lverum$, and $\liff$ are affine while $\lor$ and $\land$ are
not affine.

\paragraph{Disjunction}

We now construct a $\PL[G]$-representation of $(\lor,1)$, i.e., a
$\PL[G]$-formula that is equivalent to $p_1\lor p_2$ and uses $p_1$ at
most once. In our construction, we will use the formula
$\formulanegation\formulasubst{p_1}$ from Proposition~\ref{P-generate-negation}.

Since $G$ is complete, there exists a $\PL[G]$-formula $\alpha
\formulasubst{p_1,p_2}$ that is equivalent to $p_1\lor p_2$. Let $\ell$ 
be the number of occurrences of $p_1$ in $\alpha$ and let $m$ be the
number of occurrences of $p_2$ in $\alpha$. Furthermore, let $q_1,\dots,
q_{\ell+m}$ be distinct fresh variables and let $\alpha'\formulasubst{
q_1,\dots,q_{\ell+m}}$ be obtained from $\alpha$ by substituting each 
occurrence of $p_1$ by one of $q_1,\dots,q_\ell$ and each occurrence of 
$p_2$ by one of $q_{\ell+1},\dots,q_{\ell+m}$ such that each new variable 
is used exactly once. For convenience, let $k=\ell+m$ and $f_{\alpha'}$ 
denote the $k$-ary Boolean function $\sem{\alpha'}{q_1,\dots,q_{k}}$.

For $i\in\N$, let $\bar0_i$ and $\bar1_i$ denote the $i$-tuples consisting
solely of zeroes or solely of ones. Then
\begin{align*}
     f_{\alpha'}(\bar0_\ell,\bar0_m)&=\lor(0,0)=0\,,
  &&&f_{\alpha'}(\bar1_\ell,\bar0_m)&=\lor(1,0)=1\,,\\
     f_{\alpha'}(\bar0_\ell,\bar1_m)&=\lor(0,1)=1\,,\text{ and}
  &&&f_{\alpha'}(\bar1_\ell,\bar1_m)&=\lor(1,1)=1\,.
\end{align*}

Aiming at a contradiction, assume that $f_{\alpha'}$ were affine, i.e., 
that there are $c_i\in\Bool$ for $i\in[k]\cup\{0\}$ such that for
all $a_1,\dots,a_{k}\in\Bool$
\[ 
  f_{\alpha'}(a_1,\dots,a_{k}) 
  =
  \Bigl( c_0 + \sum_{i\in[k]} c_i\cdot a_i \Bigr) \bmod 2\,.
\]
Using the relationship between $f_{\alpha'}$ and $\lor$, we can thus
readily observe the following:
\begin{itemize}
  \item $c_0=0$ since $f_{\alpha'}(\bar0_\ell,\bar0_m)=0$,
  \item the set $\{i\in[k] : i\leq\ell \text{ and } c_i=1\}$ 
        is of odd size since $f_{\alpha'}(\bar1_\ell,\bar0_m)=1$ and 
        $c_0=0$, and
  \item the set $\{i\in[k] : i>\ell \text{ and } c_i=1\}$
        is also of odd size since $f_{\alpha'}(\bar0_\ell,\bar1_m)=1$
        and $c_0=0$.
\end{itemize}
But then the total number of non-zero coefficients is even, hence 
$\lor(1,1)=f_{\alpha'}(\bar1_\ell,\bar1_m)=0$, a contradiction. 
Consequently, $f_{\alpha'}$ is not affine and there
are $i\in[k]$ and $a_1,\dots,a_k,b_1,\dots,b_k\in\Bool$ such that
\begin{align*}
    f_{\alpha'}(a_1,\dots,a_{i-1},0,a_{i+1},\dots,a_k)
  &=f_{\alpha'}(a_1,\dots,a_{i-1},1,a_{i+1},\dots,a_k)\text{ and }\\
    f_{\alpha'}(b_1,\dots,b_{i-1},0,b_{i+1},\dots,b_k)
  &\neq f_{\alpha'}(b_1,\dots,b_{i-1},1,b_{i+1},\dots,b_k)\,.
\end{align*}
Recall that we want to construct a representation of $\lor$ that uses
$p_1$ only once. The general idea is as follows: the variable $q_i$
will be substituted by $p_1$ (or $\formulanegation(p_1)$, depending on 
the precise distribution of zeroes and ones in the above equalities).
At this point it is important that $\alpha'$ uses $q_i$ only once.
The remaining variables $q_j$ with $j\neq i$ will be substituted by 
formulae depending solely on $q_2$ such that, if $q_2$ is set to one, 
we arrive in the first line and if $q_2$ is set to zero, we arrive in
the second line.
More precisely, we define formulae $\gamma_j$ for $j\in[k]$ as follows:
\[
  \gamma_j\formulasubst{p_1,p_2}=
  \begin{cases}
    \lverum\formulabrackets{}          & \text{if }a_j=b_j=1 \text{ and } j\neq i\\
    \lfalsum\formulabrackets{}         & \text{if }a_j=b_j=0 \text{ and } j\neq i\\
    \formulanegation\formulasubst{p_2} & \text{if }a_j<b_j   \text{ and } j\neq i\\
    p_2                                & \text{if }a_j>b_j   \text{ and } j\neq i\\
    p_1                                & \text{if }j=i
  \end{cases}
\]
Then each $\gamma_j\formulasubst{p_1,p_2}\in\PL[G]$ uses both $p_1$
and $p_2$ at most once. Furthermore, there is at most one $j\in[k]$
such that $\gamma_j$ uses $p_1$ at all ($p_2$ can be used by more 
than one of these formulae since there may exist several $j\neq i$
with $a_j\neq b_j$).
Then these formulae have indeed the properties described above, as
for any interpretation $\Intr$ we have
\[
  \Intr(\gamma_j)=
  \begin{cases}
    a_j & \text{if }\Intr(p_2)=1\text{ and }j\neq i\,,\\
    b_j & \text{if }\Intr(p_2)=0\text{ and }j\neq i\,,\\
    \Intr(p_1) & \text{if }j=i\,.
  \end{cases}
\]
Let $\formulapredisjunction\formulasubst{p_1,p_2}=
\alpha'\formulasubst{\gamma_1,\dots,\gamma_k}$ and, for $a,b\in\Bool$,
let $\Intr_{a,b}$ denote an interpretation with $\Intr_{a,b}(p_1)=a$
and $\Intr_{a,b}(p_2)=b$. Then 
\begin{align*}
        \Intr_{0,1}(\formulapredisjunction) 
  &=    f_{\alpha'}(a_1,\dots,a_{i-1},0,a_{i+1},\dots,a_k) \\
  &=    f_{\alpha'}(a_1,\dots,a_{i-1},1,a_{i+1},\dots,a_k)
   =    \Intr_{1,1}(\formulapredisjunction)
\intertext{and}
        \Intr_{0,0}(\formulapredisjunction)
  &=    f_{\alpha'}(b_1,\dots,b_{i-1},0,b_{i+1},\dots,b_k) \\
  &\neq f_{\alpha'}(b_1,\dots,b_{i-1},1,b_{i+1},\dots,b_k)
   =    \Intr_{1,0}(\formulapredisjunction)
\end{align*}
Furthermore, $\formulapredisjunction=\alpha'\formulasubst{\gamma_1,\dots,\gamma_k}\in\PL[G]$
uses $p_1$ only once since $\alpha'$ uses no variable twice and
since at most one of the formulae $\gamma_j$ uses $p_1$ at all.

Now, all that remains is to modify $\formulapredisjunction$ by negating,
if necessary, $p_1$, the whole formula, or both, such that
$\Intr_{1,1}(\formulapredisjunction')=1$ and 
$\Intr_{0,0}(\formulapredisjunction')=0$. The above equalities then imply
$\Intr_{0,1}(\formulapredisjunction')=1$ and 
$\Intr_{1,0}(\formulapredisjunction')=1$.
We accomplish this in two steps:
\begin{enumerate}
\item If $\Intr_{1,1}(\formulapredisjunction)=1$, then set 
  $\formulapredisjunction_1=\formulapredisjunction$. Otherwise, 
  set $\formulapredisjunction_1=\formulanegation\formulasubst{
  \formulapredisjunction}$, i.e., we obtain $\formulapredisjunction_1$ 
  from the formula $\formulanegation\formulasubst{p_1}$ by substituting
  $p_1$ by $\formulapredisjunction$.%
  \footnote{Since $\Intr(\formulapredisjunction)$ depends solely on
       $\Intr(p_1)$ and $\Intr(p_2)$, the formula $\formuladisjunction$
       is well-defined.} 
  In this latter case, we obtain
  \begin{align*}
          \Intr_{0,1}(\formulapredisjunction_1)
    &=    1-\Intr_{0,1}(\formulapredisjunction)
     =    1-\Intr_{1,1}(\formulapredisjunction)
     =    \Intr_{1,1}(\formulapredisjunction_1)
    \text{ and}\\
          \Intr_{0,0}(\formulapredisjunction_1)
    &=    1-\Intr_{0,0}(\formulapredisjunction)
     \neq 1-\Intr_{1,0}(\formulapredisjunction)
     =    \Intr_{1,0}(\formulapredisjunction_1)\,.
  \end{align*}
  Since $\Intr_{1,1}(\formulapredisjunction)=1$, this implies
  \begin{align*}
          \Intr_{0,1}(\formulapredisjunction_1)
    &=    \Intr_{1,1}(\formulapredisjunction_1)=1 \text{ and}\\
          \Intr_{0,0}(\formulapredisjunction_1)
    &\neq \Intr_{1,0}(\formulapredisjunction_1)\,.
  \end{align*}
  Note that this also holds in case $\Intr_{1,1}(\formulapredisjunction)=1$.

\item If $\Intr_{0,0}(\formulapredisjunction_1)=0$, then set
  $\formulapredisjunction_2=\formulapredisjunction_1$. Otherwise, set
  $\formulapredisjunction_2(p_1,p_2) =
  \formulapredisjunction_1(\formulanegation\formulasubst{p_1},p_2))$,
  i.e., we obtain $\formulapredisjunction_2$ from the formula
  $\formulapredisjunction_1$ by replacing all occurrences of $p_1$ by
  the formula $\formulanegation\formulasubst{p_1}$. Again, since
  $\Intr(\formulapredisjunction_1)$ depends, at most, on $\Intr(p_1)$
  and $\Intr(p_2)$, the formula $\formulapredisjunction_2$ is
  well-defined. In the latter case, we obtain
  \begin{align*}
         \Intr_{1,1}(\formulapredisjunction_2)
    &=   \Intr_{0,1}(\formulapredisjunction_1)
     =   \Intr_{1,1}(\formulapredisjunction_1)
     =   \Intr_{0,1}(\formulapredisjunction_2)
    \text{ and}\\
         \Intr_{0,0}(\formulapredisjunction_2)
    &=   \Intr_{1,0}(\formulapredisjunction_1)
     \neq\Intr_{0,0}(\formulapredisjunction_1)
     =   \Intr_{1,0}(\formulapredisjunction_2)\,.
  \end{align*}
  Since $\Intr_{0,0}(\formulapredisjunction_1)=1$, this implies
  \begin{align*}
         \Intr_{0,1}(\formulapredisjunction_2)
    &=   \Intr_{1,1}(\formulapredisjunction_2)=1
    \text{ and}\\
    0=   \Intr_{0,0}(\formulapredisjunction_2)
    &\neq\Intr_{1,0}(\formulapredisjunction_2)=1\,.
  \end{align*}
  Note that this also holds in case
  $\Intr_{0,0}(\formulapredisjunction_1)=0$.
\end{enumerate}

In summary, we have $\Intr_{a,b}(\formulapredisjunction_2)=1$ if, and
only if, $a=1$ or $b=1$. Hence, indeed, the formulae $p_1\lor p_2$ and
$\formulapredisjunction_2\formulasubst{p_1,p_2}$ are
equivalent. Since, in the above procedure, we did not duplicate any
variables, the formula $\formulapredisjunction_2$ uses the variable
$p_1$ at most once. Hence choosing $\formuladisjunction\formulasubst
{p_1,p_2}=\formulapredisjunction_2\formulasubst{p_1,p_2}$ proves
\begin{prop}\label{P-generate-disjunction}
  Let $G\ni\lverum,\lfalsum$ be a complete set of Boolean
  functions. Then $\formuladisjunction$ is a $\PL[G]$-representation
  of $(\lor,1)$; by symmetry, there is also a $\PL[G]$-representation
  of $(\lor,2)$.
\end{prop}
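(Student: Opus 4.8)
The plan is to read the proposition off directly from the construction that precedes it, in which the formula $\formulapredisjunction_2\formulasubst{p_1,p_2}$ was built and the definition $\formuladisjunction\formulasubst{p_1,p_2}=\formulapredisjunction_2\formulasubst{p_1,p_2}$ was announced. For the first assertion, about $(\lor,1)$, I would simply collect the two facts already established. First, the summary computation shows $\Intr_{a,b}(\formulapredisjunction_2)=1$ exactly when $a=1$ or $b=1$; this says $\sem{\formuladisjunction}{p_1,p_2}=\lor$, i.e.\ that $\formuladisjunction\formulasubst{p_1,p_2}$ is equivalent to $\lor\formulabrackets{p_1,p_2}$. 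Second, the formula lies in $\PL[G]$ (it is assembled from the $\PL[G]$-formula $\alpha'$ by substituting the $\gamma_j$, which use only $\lverum,\lfalsum,\formulanegation$, together with the two optional negation steps that again stay within $\PL[G]$). These are precisely the two requirements of Definition~\ref{def:representation}, once we confirm that $p_1$ occurs at most once (treated below); so the first assertion needs no further work.

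For the second assertion I would avoid repeating the whole construction and instead exploit the commutativity of disjunction. Since $p_1\lor p_2\equiv p_2\lor p_1$, swapping the two variables in $\formuladisjunction$ yields the $\PL[G]$-formula $\formuladisjunction\formulasubst{p_2,p_1}$, which is still equivalent to $\lor\formulabrackets{p_1,p_2}$ and now mentions $p_2$ at most once (because $\formuladisjunction$ mentioned $p_1$ at most once). This is exactly a $\PL[G]$-representation of $(\lor,2)$, as required.

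Because the genuinely hard part --- manufacturing, from an arbitrary complete $G$, a single-occurrence formula for $p_1$ by using the non-affineness of $\lor$ together with the negation gadget $\formulanegation$ of Proposition~\ref{P-generate-negation} --- has already been carried out in the running text, no real obstacle remains in the proof of the proposition itself. The only point that warrants care is the bookkeeping on the single occurrence of $p_1$, and this is what I would verify explicitly: the base formula $\formulapredisjunction$ uses $p_1$ once (since $\alpha'$ repeats no variable and only $\gamma_i$ introduces $p_1$); step~1 substitutes the whole of $\formulapredisjunction$ into the unique $p_1$-slot of $\formulanegation\formulasubst{p_1}$, leaving a single occurrence of $p_1$; and step~2 replaces that one occurrence of $p_1$ by $\formulanegation\formulasubst{p_1}$, which itself mentions $p_1$ exactly once. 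Hence at most one occurrence of $p_1$ survives throughout, and $\formuladisjunction=\formulapredisjunction_2$ is indeed a $\PL[G]$-representation of $(\lor,1)$.
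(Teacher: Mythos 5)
Your proposal is correct and follows essentially the same route as the paper: the paper's own proof \emph{is} the preceding construction of $\formulapredisjunction_2$, concluded by exactly the observations you assemble (the truth-table summary giving equivalence to $p_1\lor p_2$, membership in $\PL[G]$, and the fact that no step duplicates $p_1$), and your explicit variable swap $\formuladisjunction\formulasubst{p_2,p_1}$ is just a concrete reading of the paper's ``by symmetry'' for $(\lor,2)$.
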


\paragraph{Bi-implication}

We next construct a $\PL[G]$-representation of $(\liff,1)$, i.e., a
$\PL[G]$-formula that is equivalent to $p_1\liff p_2$ and uses $p_1$
at most once. Here, we use the additional assumption that $G$ contains
some function $g$ that is not locally monotone. In our construction,
we will use the $\PL[G]$-representations 
$\formulanegation\formulasubst{p_1}$ and
$\formuladisjunction\formulasubst{p_1,p_2}$ of $(\lnot,1)$ and
$(\lor,1)$ from Propositions~\ref{P-generate-negation} and
\ref{P-generate-disjunction}, respectively.

Since $g$ is not locally monotone, its arity $k$ is at least two and
there are $i\in[k]$ and $a_1,\dots,a_k,b_1,\dots,b_k\in\Bool$ such that
\begin{align*}
  g(a_1,\dots a_{i-1},0,a_{i+1},\dots,a_k)
  &<g(a_1,\dots a_{i-1},1,a_{i+1},\dots,a_k)\text{ and }\\
  g(b_1,\dots b_{i-1},0,b_{i+1},\dots,b_k)
  &>g(b_1,\dots b_{i-1},1,b_{i+1},\dots,b_k)\,.
\end{align*}
Let the formulae $\gamma_j$ be defined as in the previous section.
We consider the formula
\[
  \formulabiimplication\formulasubst{p_1,p_2}=
  g\formulabrackets[\big]{\gamma_1,\gamma_2,\dots,\gamma_k}\,.
\]
Since at most one of the formulae $\gamma_j$ uses $p_1$ at all, and
this formula uses $p_1$ at most once, the formula $\formulabiimplication$
uses $p_1$ at most once.

For $a,b\in\Bool$, let $\Intr_{a,b}$ denote an interpretation
with $\Intr_{a,b}(p_1)=a$ and $\Intr_{a,b}(p_2)=b$. 
Then we have
\begin{align*}
  \Intr_{0,1}(\formulabiimplication)
  &= g\bigl(\Intr_{0,1}(\gamma_1),\dots,\Intr_{0,1}(\gamma_k)\bigr)\\
  &= g(a_1,\dots,a_{i-1},0,a_{i+1},\dots,a_k)\\
  &< g(a_1,\dots,a_{i-1},1,a_{i+1},\dots,a_k)\\
  &= g\bigl(\Intr_{1,1}(\gamma_1),\dots,\Intr_{1,1}(\gamma_k)\bigr)\\
  &= \Intr_{1,1}(\formulabiimplication)\\
    \intertext{and }
  \Intr_{0,0}(\formulabiimplication)
  &= g\bigl(\Intr_{0,0}(\gamma_1),\dots,\Intr_{0,0}(\gamma_k)\bigr)\\
  &= g(b_1,\dots,b_{i-1},0,b_{i+1},\dots,b_k)\\
  &> g(b_1,\dots,b_{i-1},1,b_{i+1},\dots,b_k)\\
  &= g\bigl(\Intr_{1,0}(\gamma_1),\dots,\Intr_{1,0}(\gamma_k)\bigr)\\
  &= \Intr_{1,0}(\formulabiimplication)
\end{align*}
It follows that
\[
  \Intr_{a,b}(\formulabiimplication)=1\iff a=b\,,
\]
which proves
\begin{prop}\label{P-generate-biimplication}
  Let $G\ni\lverum,\lfalsum$ be a complete set of Boolean functions
  with a non-locally monotone function $g\in G$. Then
  $\formulabiimplication\formulasubst{p_1,p_2}$ is a
  $\PL[G]$-representation of $(\liff,1)$; by symmetry, there is also a
  $\PL[G]$-representation of $(\liff,2)$.
\end{prop}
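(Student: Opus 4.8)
The plan is to build the representation directly from the hypothesised non-locally monotone function $g\in G$, exploiting that bi-implication is the prototypical sign-flipping operator: $p_1\liff p_2$ is increasing in $p_1$ when $p_2=\true$ and decreasing in $p_1$ when $p_2=\false$. Crucially, by Proposition~\ref{thm:mon-iff-reduce-to-ems@prop}(2) the function $\liff$ has no $\PL[\dM]$-representation at all, so the non-local monotonicity of some $g\in G$ is not a convenience but an indispensable ingredient; the entire task is to channel this single feature of $g$ into the required sign flip. First I would unfold the definition of non-local monotonicity to extract an argument position $i\in[k]$ (where $k=\mathrm{ar}(g)$) together with two background settings $a_1,\dots,a_k$ and $b_1,\dots,b_k$ witnessing that $g$ strictly increases in its $i$-th argument under $\overline a$ and strictly decreases under $\overline b$. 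Because $\Bool$ is two-valued, these strict inequalities determine the corresponding values of $g$ outright.

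Next I would reuse the construction of the auxiliary formulae $\gamma_j\formulasubst{p_1,p_2}$ from the disjunction case (Proposition~\ref{P-generate-disjunction}), now instantiated with the witnesses $i,\overline a,\overline b$ just obtained: for $j\neq i$ the formula $\gamma_j$ is one of $\lverum$, $\lfalsum$, $p_2$, or $\formulanegation\formulasubst{p_2}$ according to the pattern of $a_j$ and $b_j$, while $\gamma_i=p_1$. Their defining property is that, under any interpretation $\Intr$, one has $\Intr(\gamma_j)=a_j$ if $\Intr(p_2)=\true$ and $j\neq i$, $\Intr(\gamma_j)=b_j$ if $\Intr(p_2)=\false$ and $j\neq i$, and $\Intr(\gamma_i)=\Intr(p_1)$. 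I would then define $\formulabiimplication\formulasubst{p_1,p_2}=g\formulabrackets{\gamma_1,\dots,\gamma_k}$. The single-occurrence requirement is then immediate: among the $\gamma_j$ only $\gamma_i$ mentions $p_1$, and it uses it exactly once, so $p_1$ occurs at most once in $\formulabiimplication$.

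The verification reduces to a two-case analysis on the value of $p_2$. If $\Intr(p_2)=\true$, the arguments $j\neq i$ are routed to $\overline a$, so $\formulabiimplication$ computes the unary map $p_1\mapsto g(a_1,\dots,a_{i-1},p_1,a_{i+1},\dots,a_k)$; by the increasing witness this map sends $\false\mapsto\false$ and $\true\mapsto\true$, giving $\Intr_{0,1}(\formulabiimplication)=\false$ and $\Intr_{1,1}(\formulabiimplication)=\true$. If $\Intr(p_2)=\false$, the arguments are routed to $\overline b$ and the decreasing witness yields $\Intr_{0,0}(\formulabiimplication)=\true$ and $\Intr_{1,0}(\formulabiimplication)=\false$. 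Collecting the four values gives $\Intr_{a,b}(\formulabiimplication)=\true$ exactly when $a=b$, i.e.\ $\formulabiimplication\formulasubst{p_1,p_2}\equiv p_1\liff p_2$. The statement for $(\liff,2)$ then follows at once from the symmetry $p_1\liff p_2\equiv p_2\liff p_1$ by interchanging the two variables.

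I expect no serious obstacle in the calculation itself; the only real content lies in the first step, namely recognising that failure of local monotonicity at a single argument is, over the Boolean domain, exactly equivalent to the existence of one background setting forcing an increase and another forcing a decrease --- the two regimes between which $p_2$ must select. Everything downstream is a matter of plugging the reusable $\gamma_j$ gadget into $g$ and reading off a four-row truth table.
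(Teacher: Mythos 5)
Your proof is correct and takes essentially the same route as the paper's: extract from the failure of local monotonicity of $g$ an argument position $i$ with one background forcing a strict increase and one forcing a strict decrease, instantiate the $\gamma_j$ gadget from the disjunction construction (with $\gamma_i=p_1$), and read off the four truth values of $g\formulabrackets{\gamma_1,\dots,\gamma_k}$. The only cosmetic difference is that you organise the verification as a case split on the value of $p_2$, whereas the paper writes two chains of (in)equalities; the substance is identical.
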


\subsection{Proof of this section's main theorem}
\label{SS-proof-main-theorem}

\begin{proof}[Proof of Theorem~\ref{T-main} (page~\pageref{T-main})]

  First, consider the case that all functions from $F$ are locally
  monotone. By Proposition~\ref{thm:mon-iff-reduce-to-ems@prop}, all
  functions from $F$ have $\PL[\dM]$-representations. Hence, by
  Proposition~\ref{thm:mon-poly-succinct@prop}, $\ML[F]$ has
  polynomial translations in $\ML[\dM]$.
  In the previous section, we derived representations of $\lnot$ and 
  $\lor$ (for complete sets of Boolean functions containing $\lverum$ 
  and $\lfalsum$), but not explicitly of $\land$. However, replacing 
  any sub-formula of the form $\alpha\land\beta$ by
  $\lnot(\lnot\alpha\lor\lnot\beta)$ defines a polynomial translation
  of $\ML[\dM]$ in $\ML[\{\lverum,\lfalsum,\lnot,\lor\}]$. 
  Since the concatenation of polynomial translations is again a polynomial 
  translation, it follows that $\ML[F]$ has polynomial translations in 
  $\ML[\{\lverum,\lfalsum,\lnot,\lor\}]$.

  Now consider $G^+=G\cup\{\lverum,\lfalsum\}$. This set of Boolean
  functions is complete since $G$ is complete. By
  Propositions~\ref{P-generate-negation} and
  \ref{P-generate-disjunction}, the functions $\lnot$ and $\lor$ have
  $\PL[G^+]$-representations. Again by
  Proposition~\ref{thm:mon-poly-succinct@prop},
  $\ML[\{\lverum,\lfalsum,\lnot,\lor\}]$ has polynomial translations
  in $\ML[G^+]$.
  
  Since $G$ is complete, there are formulae
  $\formulatrue,\formulafalse\in\PL[G]$ that are equivalent to
  $\lverum$ and $\lfalsum$, respectively (note that these formulae may
  contain some variables). Replacing any sub-formula $\lverum$ and
  $\lfalsum$ by $\formulatrue$ and $\formulafalse$ defines a
  polynomial translation of $\ML[G^+]$ in $\ML[G]$.
  Using once more that the concatenation of polynomial translations
  is a polynomial
  translation, we find a polynomial translation of $\ML[F]$ in
  $\ML[G]$ provided all functions from $F$ are locally monotone and
  $G$ is complete.

  Now suppose that $G$ contains some function that is not locally
  monotone. Then we can argue as above, using
  \begin{itemize}
  \item the logic $\ML[\extdM]$ instead of the logic $\ML[\dM]$,
  \item the logic $\ML[\{\lverum,\lfalsum,\lnot,\lor,\liff\}]$ instead
    of the logic  $\ML[\{\lverum,\lfalsum,\lnot,\lor\}]$, and
  \item Proposition~\ref{P-generate-biimplication} in addition to the
    Propositions~\ref{P-generate-negation} and
    \ref{P-generate-disjunction}.\qedhere
  \end{itemize}
\end{proof}

Recall that the main result of this section was a simple corollary
of Theorem~\ref{T-main}. Hence we have shown that the class of logics
$\ML[G]$ with $G$ finite and complete has at most two succinctness
classes, namely that of $\dM$ and that of $\extdM\supseteq\dM$.

Now let $\cS$ be any class of pointed Kripke structures.  Since
formulae equivalent wrt.\ \MLK are trivially equivalent wrt.\ $\cS$,
Corollary~\ref{C-two-succinctness-classes} also holds for the
succinctness wrt.~$\cS$ -- which is the restricted meaning of ``all''
in the title of this section. But as the reader will have realized,
the above proof technique carries over to many other logics like
multi-modal logic, predicate calculus, temporal logics etc.\ -- which
gives a wider (and imprecise) meaning to the term ``all''.

\section{Modal logics with one and with two succinctness classes}

So far, we saw that for many logics $\cL$, there are at most two
succinctness classes (up to a polynomial), namely those of $\cL[\dM]$
and $\cL[\extdM]$, respectively. In this section, we will show that
these classes may differ in some modal logics, but coincide in others.
Recall the following classes of pointed Kripke structures with the
corresponding properties of their accessibility relations
\begin{itemize}
  \item $\MLK$, the class of all pointed Kripke structures,
  \item $\MLT$, the class of reflexive pointed Kripke structures,
        and
  \item $\MLSf$, the class of equivalence relations.
\end{itemize}
More precisely, we show that the succinctness classes differ for the
modal logics $\mathrm{T}$ (and hence also for $\mathrm{K}$) but
coincide for the modal logic $\mathrm{S5}$. We will therefore, from
now on, be precise and return to the original notation, i.e., write
$\equiv_{\MLK}$ instead of $\equiv$ etc.

\subsection{Two distinct succinctness classes}

Throughout this section, we will consider the class $\MLT$ of pointed
Kripke structures where the accessibility relation $R$ is reflexive.
We will demonstrate that $\ML[\extdM]$ does not have polynomial
translations in $\ML[\dM]$ with respect to the class $\MLT$.  To this
aim, we consider the following sequence of formulae from
$\ML[\extdM]$:
\[
  \varphi_0  = p_0 \land\nxt\lnot p_0
  \quad\text{ and }\quad
  \varphi_{n+1}  =p_{(n+1)\bmod 2}\land(p\liff\nxt\varphi_n)\,.
\]
Note that there is $c>0$ such that $|\varphi_n|\le c\cdot n$ for all
$n\ge1$. For all $n\in\N$, we will prove that all formulae
$\psi\in\ML[\dM]$ with $\varphi_n \equiv_{\MLT} \psi$ satisfy
$|\psi|\ge 2^n$. Consequently, $\ML[\extdM]$ does not have
sub-exponential translations in $\ML[\dM]$ -- the two succinctness
classes of $\ML[\dM]$ and $\ML[\extdM]$ are distinct.

Actually, we show a bit more: The formula $\psi$ is not only
``large'', but it even contains exponentially many occurrences of the
modal operator $\nxt$. We denote this number of occurrences of $\nxt$
in the formula $\psi$ by $\lnxtcount{\psi}$.

In the course of this proof, we need the following concept. Let
$\psi\in\ML[\dM]$. Then $\psi$ is a Boolean combination of atomic
formulae $p_i$, $\lverum$, and $\lfalsum$ and of formulae of the form
$\nxt\lambda$ with $\lambda\in\ML[\dM]$. The set $E_\psi$ consists of
all formulae $\lambda$ such that $\nxt\lambda$ appears in the Boolean
combination $\psi$ with even negation depth; the set $O_\psi$ consists
of all $\lambda\in\ML[\dM]$ such that $\nxt\lambda$ appears with odd
negation depth. Inductively, these two sets are defined as follows:
\begin{align*}
  E_\psi&=
  \begin{cases}
    \emptyset   & \text{if }\psi\in\{\lverum,\lfalsum\}\cup\Var\\
    \{\lambda\} & \text{if }\psi=\nxt\lambda, \lambda\in\ML[\dM]\\
    O_\alpha    & \text{if }\psi=\lnot\alpha\\
    E_\alpha\cup E_\beta& \text{if }\psi\in\{\alpha\land\beta,\alpha\lor\beta\}
  \end{cases}\\
  O_\psi&=
  \begin{cases}
    \emptyset  & \text{if }\psi\in\{\lverum,\lfalsum\}\cup\Var\\
    \emptyset  & \text{if }\psi=\nxt\lambda, \lambda\in\ML[\dM]\\
    E_\alpha   & \text{if }\psi=\lnot\alpha\\
    O_\alpha\cup O_\beta& \text{if }\psi\in\{\alpha\land\beta,\alpha\lor\beta\}
  \end{cases}
\end{align*}
The use of these sets is described by the following lemma that can be
proved by structural induction.

\begin{lem}\label{L-E-and-O}
  Let $S,T\in\MLK$ and $\psi\in\ML[\dM]$ such that the following
  properties hold:
  \begin{enumerate}[(i)]
  \item $S\models p\iff T\models p$ for all atomic formulae
    $p$. \label{atomic-clause-in-L-E-and-O}
  \item If $\lambda\in E_\psi$ with $S\models \nxt\lambda$, then
    $T\models\nxt\lambda$. \label{E-clause-in-L-E-and-O}
  \item If $\lambda\in O_\psi$ with $T\models \nxt\lambda$, then
    $S\models\nxt\lambda$. \label{O-clause-in-L-E-and-O}
  \end{enumerate}
  Then $S\models\psi$ implies $T\models\psi$.
\end{lem}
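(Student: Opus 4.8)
The plan is to prove the implication by induction on the Boolean structure of $\psi$, treating every maximal subformula of the form $\nxt\lambda$ as a single atom (a leaf). This matches exactly the recursive definitions of $E_\psi$ and $O_\psi$, which recurse only through $\lnot$, $\land$, and $\lor$ and stop at variables, constants, and $\nxt$-formulae. The one design decision that makes the argument go through is to keep $S$ and $T$ \emph{universally quantified} in the induction hypothesis, so that I may apply the hypothesis to the ordered pair $(T,S)$ as well as to $(S,T)$; this flexibility is exactly what the negation step needs.

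For the base cases, if $\psi$ is a variable $p$ or a constant $\lverum$ or $\lfalsum$, then $E_\psi=O_\psi=\emptyset$ and the conclusion is immediate: for a variable it is hypothesis~(i), for $\lverum$ it holds trivially, and for $\lfalsum$ it holds vacuously. If $\psi=\nxt\lambda$, then $E_\psi=\{\lambda\}$ and $O_\psi=\emptyset$, so the desired implication $S\models\nxt\lambda\Rightarrow T\models\nxt\lambda$ is precisely hypothesis~(ii).

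For conjunction and disjunction the key point is the \emph{monotonicity} of the sets $E$ and $O$: when $\psi\in\{\alpha\land\beta,\alpha\lor\beta\}$ we have $E_\alpha,E_\beta\subseteq E_\psi$ and $O_\alpha,O_\beta\subseteq O_\psi$, so hypotheses~(ii) and~(iii) for $\psi$ restrict to the corresponding hypotheses for $\alpha$ and for $\beta$ (hypothesis~(i) is unchanged). The induction hypothesis therefore yields $S\models\alpha\Rightarrow T\models\alpha$ and $S\models\beta\Rightarrow T\models\beta$, and the two Boolean connectives follow by elementary reasoning (for $\land$ both conjuncts transfer; for $\lor$ at least one disjunct holds at $S$ and transfers to $T$).

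The only genuinely interesting step, and the one I expect to be the main obstacle, is negation. For $\psi=\lnot\alpha$ I would argue by contraposition, i.e., prove $T\models\alpha\Rightarrow S\models\alpha$, which is equivalent to $S\models\lnot\alpha\Rightarrow T\models\lnot\alpha$. To obtain this I apply the induction hypothesis to the smaller formula $\alpha$ but with the roles of $S$ and $T$ \emph{exchanged}. This is sound precisely because the definitions give $E_{\lnot\alpha}=O_\alpha$ and $O_{\lnot\alpha}=E_\alpha$: swapping $S$ and $T$ interchanges the roles of the $E$- and $O$-hypotheses, so hypothesis~(iii) for $\psi$ becomes the required $E$-hypothesis for $\alpha$ under the pair $(T,S)$, hypothesis~(ii) for $\psi$ becomes the required $O$-hypothesis, and hypothesis~(i) is symmetric in $S$ and $T$. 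The induction hypothesis then delivers $T\models\alpha\Rightarrow S\models\alpha$, completing the step and the induction.
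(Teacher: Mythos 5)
Your proof is correct and matches the paper's approach: the paper only remarks that the lemma ``can be proved by structural induction,'' and your argument is precisely that induction carried out in full, with the essential observation that the negation case goes through by applying the induction hypothesis to the pair $(T,S)$ with roles swapped, using $E_{\lnot\alpha}=O_\alpha$ and $O_{\lnot\alpha}=E_\alpha$.
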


Fix, for every formula $\alpha\in\ML[\dM]$, a pointed Kripke structure
$S_\alpha=(W_\alpha,R_\alpha,V_\alpha,\iota_\alpha)$ from $\MLT$ such that
$S_\alpha\models\alpha$ whenever $\alpha$ is satisfiable in $\MLT$
(for unsatisfiable formulae $\alpha$, fix an arbitrary structure
$S_\alpha$ from $\MLT$). We will always assume that these structures
$S_\alpha$ are mutually disjoint. 

We now come to the central result of this section.

\begin{lem}\label{L-psi-large}
  For all $n\in\N$ and all formulae $\psi\in\ML[\dM]$ with
  $\psi \equiv_{\MLT} \varphi_n$, we have $\lnxtcount{\psi}\ge 2^n$.
\end{lem}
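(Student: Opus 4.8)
The plan is to prove the statement by induction on $n$, taking as the inductive hypothesis exactly the assertion of the lemma (for the fixed sequence $\varphi_n$). For the base case $n=0$ it suffices to observe that $\varphi_0=p_0\land\nxt\lnot p_0$ is not $\MLT$-equivalent to any $\nxt$-free formula: the single reflexive world with $p_0$ true refutes $\varphi_0$, while adding one $\lnot p_0$-successor satisfies it, and both structures agree on all atoms at the initial world. A $\nxt$-free formula depends only on these atoms, so any $\psi\in\ML[\dM]$ with $\psi\equiv_{\MLT}\varphi_0$ must satisfy $\lnxtcount{\psi}\ge 1=2^0$.

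For the step, fix $\psi\equiv_{\MLT}\varphi_{n+1}$ and view $\psi$ as a Boolean combination of atoms and of top-level diamond subformulae $\nxt\lambda$; by the inductive definition of $E_\psi$ and $O_\psi$, the bodies of these diamond atoms are precisely $E_\psi\cup O_\psi$, sorted by the parity of their negation depth, and distinct diamond atoms are pairwise incomparable in the syntax tree. The whole argument reduces to finding one even-depth body $\lambda_E\in E_\psi$ and one odd-depth body $\lambda_O\in O_\psi$ with $\lambda_E\equiv_{\MLT}\varphi_n$ and $\lambda_O\equiv_{\MLT}\varphi_n$: since an occurrence has a single negation depth, $\nxt\lambda_E$ and $\nxt\lambda_O$ are distinct, incomparable occurrences, so the induction hypothesis yields $\lnxtcount{\psi}\ge 1+\lnxtcount{\lambda_E}+1+\lnxtcount{\lambda_O}\ge 2+2^n+2^n\ge 2^{n+1}$.

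To locate candidate bodies I would exploit that $\varphi_{n+1}=p_{(n+1)\bmod 2}\land(p\liff\nxt\varphi_n)$ depends non-monotonically on the presence of a $\varphi_n$-successor. Using the fixed models $S_\alpha$, build a base $\MLT$-structure $S$ whose initial world makes $p$ and $p_{(n+1)\bmod 2}$ true and has no $\varphi_n$-successor, and let $T$ extend $S$ by a single fresh successor that is a disjoint copy of $S_{\varphi_n}$. Then $T\models\varphi_{n+1}$ and $S\not\models\varphi_{n+1}$, hence $T\models\psi$ and $S\not\models\psi$. Applying Lemma~\ref{L-E-and-O} to the pair $(T,S)$, clause (i) holds because the atoms at the initial world coincide, and the clause over $O_\psi$ cannot fail because every successor of $S$ is also one of $T$; therefore the clause over $E_\psi$ fails, producing $\lambda_E\in E_\psi$ with $T\models\nxt\lambda_E$, $S\not\models\nxt\lambda_E$, witnessed by the added $S_{\varphi_n}$-successor, so $\lambda_E$ holds at a $\varphi_n$-model. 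The mirror construction, starting from a base with $p$ false and no $\varphi_n$-successor (which satisfies $\varphi_{n+1}$) and then adjoining a $\varphi_n$-successor (which destroys it), symmetrically forces the $O_\psi$-clause to fail and yields such a $\lambda_O$.

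The main obstacle is upgrading these one-sided facts to genuine equivalences $\lambda_E\equiv_{\MLT}\varphi_n$ and $\lambda_O\equiv_{\MLT}\varphi_n$, and in particular ruling out that the non-monotone dependence on $\nxt\varphi_n$ is spread over several different diamond atoms, so that no single body is equivalent to $\varphi_n$. For the ``completeness'' direction $\varphi_n\models_{\MLT}\lambda_E$ I would replace $S_{\varphi_n}$ by an arbitrary $\varphi_n$-model $M$ and rerun the $(T_M,S)$ comparison; since $M$ is the only new successor, the detecting body is forced to hold at $M$, but care is needed to obtain one fixed $\lambda_E$ working for all $M$ rather than an $M$-dependent choice. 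For ``soundness'' $\lambda_E\models_{\MLT}\varphi_n$ I would attach a hypothetical $M\models\lambda_E$ with $M\not\models\varphi_n$ as the sole extra successor of a $p$-true base and try to transfer $\psi$ from a genuine $\varphi_{n+1}$-model through Lemma~\ref{L-E-and-O}, contradicting $\psi\equiv_{\MLT}\varphi_{n+1}$; the delicate point is that this transfer must change the truth of no top-level diamond atom except those equivalent to $\varphi_n$, which is exactly where the parity marker $p_{(n+1)\bmod 2}$ and reflexivity are used to control the successor structure. I expect that pinning down a single equivalent body of each parity is the crux, and that it may require either an iteration of the above comparisons or a slight strengthening of the inductive hypothesis to a statement about families of bodies jointly capturing $\varphi_n$.
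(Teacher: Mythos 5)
Your base case is fine and essentially identical to the paper's. The inductive step, however, has a genuine gap, and it is exactly the one you flagged yourself: you reduce everything to producing \emph{single} bodies $\lambda_E\in E_\psi$ and $\lambda_O\in O_\psi$ that are $\MLT$-equivalent to $\varphi_n$, and this reduction cannot be carried out, because such bodies need not exist. Concretely, take any $\ML[\dM]$-formula $\varphi_n'\equiv_{\MLT}\varphi_n$ and build $\psi$ from the De~Morgan expansion of $\varphi_{n+1}$ by replacing each occurrence of $\nxt\varphi_n'$ with $\nxt(\varphi_n'\land q)\lor\nxt(\varphi_n'\land\lnot q)$ for a fresh variable $q$: then $\psi\equiv_{\MLT}\varphi_{n+1}$, but every diamond body entails $\varphi_n$ strictly, so the induction hypothesis applies to none of them and your count $\lnxtcount{\psi}\ge 2+2\cdot 2^n$ never gets off the ground. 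Your $(T,S)$ comparisons cannot repair this: they only yield a body that holds at the \emph{one} attached $\varphi_n$-model and fails at $\iota$, which is compatible with $\lambda_E$ being as weak as $p_{n\bmod 2}$; neither the ``completeness'' direction (one fixed body working for all models $M$) nor the ``soundness'' direction follows, and no iteration of pairwise comparisons can manufacture an equivalent body when none exists.

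The paper's proof sidesteps this trap with a different key idea, which is worth internalizing: it never looks for an equivalent body. Arguing by contradiction from $\lnxtcount{\psi}<2^{n+1}$, one of the sums $\sum_{\lambda\in E_\psi}\lnxtcount{\lambda}$ and $\sum_{\lambda\in O_\psi}\lnxtcount{\lambda}$ is below $2^n$; in the latter case (the other is symmetric) the induction hypothesis is applied to the \emph{disjunction} $\bigvee O^\ast_\psi$ of all bodies in $O_\psi$ that entail $\varphi_n$. Since the diamond count of a disjunction is at most the sum of the counts, this disjunction has fewer than $2^n$ diamonds, hence is \emph{not} equivalent to $\varphi_n$, hence $\alpha=\varphi_n\land\lnot\bigvee O^\ast_\psi$ is satisfiable in $\MLT$. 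The countermodel then attaches $S_\alpha$ to $\iota$ and, crucially, attaches \emph{all} structures $S_\beta$ with $S_\beta\models\beta\land\lnot\varphi_n$ to both $\iota$ and $\kappa$; this saturation is what makes the hypotheses of Lemma~\ref{L-E-and-O} checkable (a body $\lambda$ not entailing $\varphi_n$ that holds at a successor of $\iota$ also holds at a successor of $\kappa$, namely in $S_\beta$ for $\beta=\lambda\land\lnot\varphi_n$). Note that this is precisely your second speculated fix --- ``families of bodies jointly capturing $\varphi_n$'' --- except that no strengthening of the inductive hypothesis is needed: the hypothesis as stated already applies to the disjunction. That observation, together with the saturated construction replacing your minimal two-structure comparison, is the missing content of the step.
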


\begin{proof}
  Since we want to prove this lemma by induction on $n$, we first
  consider the case $n=0$, i.e., the formula
  $\varphi_0=p_0\land\nxt\lnot p_0$. Let $\psi\in\ML[\dM]$ with
  $\psi \equiv_{\MLT} \varphi_0$. Towards a contradiction, assume
  $\lnxtcount{\psi}<2^n=1$, i.e., assume $\psi$ to be a Boolean
  combination of atomic formulae. Consider the Kripke structure
  $S_0\in\MLT$ from Fig.~\ref{fig:iff-exp-succinct:n=0}. Then
  $(S_0,\kappa)\models\varphi_0$ implies $(S_0,\kappa)\models\psi$ since
  $\varphi_0$ and $\psi$ are assumed to be equivalent. Since $\psi$ is
  a Boolean combination of atomic formulae, and since the worlds
  $\kappa$ and $\iota$ satisfy the same atomic formulae, we obtain
  $(S_0,\iota)\models\psi$ and therefore $(S_0,\iota)\models\varphi_0$,
  which is not the case. Hence we showed $\lnxtcount{\psi}\ge1=2^0$, 
  i.e., we verified the base case $n=0$.

  \begin{figure}
    \centering
    \begin{tikzpicture}[structure]
      \node[world,
            label={[label distance=-2pt]135:{$\substack{\kappa\\ p_0}$}}]  
           (k) {};
      \node[world,below=of k,
            label={[label distance=-2pt]225:{$\substack{v\\\ }$}}]
           (v) {};
      \node[world,right=1.5 of k,
            label={[label distance=-2pt]135:{$\substack{\iota\\ p_0}$}}]
           (i) {};
      \draw (k) edge (v)
            (k) edge[out= 20,in= 60,looseness=14] (k)
            (i) edge[out= 20,in= 60,looseness=14] (i)
            (v) edge[out=290,in=340,looseness=14] (v);
    \end{tikzpicture}
    \caption{Schematic representation of the Kripke structure $S_0$.
             Vertices are labelled with the name of the world and the 
             propositional variables holding there.}
    \label{fig:iff-exp-succinct:n=0}
  \end{figure}

  So let $n\ge0$ and consider the formula
  $\varphi_{n+1}=p_{(n+1)\bmod 2}\land (p\liff\nxt\varphi_n)$. Let
  $\psi\in\ML[\dM]$ be some formula such that
  $\psi \equiv_{\MLT} \varphi_{n+1}$.

  Aiming at a contradiction, assume $\lnxtcount{\psi}<2^{n+1}$.
  Although the sets $E_\psi$ and $O_\psi$ may have non-empty
  intersection, it follows that
  \[
      \sum_{\lambda\in E_\psi} \lnxtcount{\lambda}
    \ +\ 
    \sum_{\lambda\in O_\psi} \lnxtcount{\lambda}
    \le \lnxtcount{\psi} 
    <    2^{n+1}\,.
  \]
  Hence, for at least one of the sets $E_\psi$ and $O_\psi$, the total
  number of modal operators must be less than $2^n$; we consider these two
  cases separately.

  \paragraph{Case 1,
    $\sum_{\lambda\in O_\psi} \lnxtcount{\lambda} < 2^n$}

  Let $O^\ast_\psi$ be the set of formulae $\lambda$ from $O_\psi$
  with $\lambda\models_{\MLT}\varphi_n$. Since
  $\lnxtcount{\bigvee O^\ast_\psi} \le \sum_{\lambda\in O_\psi}
  \lnxtcount{\lambda} < 2^n$, the induction hypothesis ensures
  $\bigvee O^\ast_\psi \not\equiv_{\MLT} \varphi_n$.  On the other
  hand, by choice of $O^\ast_\psi$, we have
  $\bigvee O^\ast_\psi \models_{\MLT} \varphi_n$ (this holds also in
  case $O^\ast_\psi=\emptyset$ since
  $\bigvee\emptyset\equiv_{\MLK}\lfalsum$). Hence
  $\varphi_n\not\models_{\MLT}\bigvee O^\ast_\psi$ implying that the
  formula $\alpha=\varphi_n\land\lnot\bigvee O^\ast_\psi$ is
  satisfiable in $\MLT$. The choice of the structure $S_\alpha$
  implies $S_\alpha\models\alpha$, i.e., $S_\alpha\models\varphi_n$
  but $S_\alpha\not\models\bigvee O^\ast_\psi$.
  Finally, let $B$ denote the set of formulae $\beta$ with
  $S_\beta\models\beta\land\lnot\varphi_n$.
  
  We now define a Kripke structure $S=(W,R,V)$ as follows
  (cf.~Fig.~\ref{fig:trees}):
  \begin{align*}
    W &= \{\iota,\kappa\}\uplus\bigcup_{\beta\in B}W_\beta \cup W_\alpha\\
    R &= \bigl\{(\iota,\iota),(\kappa,\kappa),
                (\iota,\kappa),(\iota,\iota_\alpha) \}
        \cup \Bigl(\{\iota,\kappa\}\times\{ \iota_\beta \mid \beta\in B \}\Bigr)
        \cup \bigcup_{\beta\in B}R_\beta
        \cup R_\alpha\\
    V(q) &=\begin{cases}
       \bigcup_{\beta\in B} V_\beta(q) \cup V_\alpha(q) \cup \{\iota,\kappa\} 
      & \text{if } q=p_{(n+1)\bmod2}\\
      \bigcup_{\beta\in B} V_\beta(q) \cup V_\alpha(q) 
      & \text{otherwise }
             \end{cases}
  \end{align*}
  Since the accessibility relations of the structures $S_\alpha$ and
  $S_\beta$ for $\beta\in H$ are reflexive, the same applies to the
  accessibility relation of $S$, i.e., we obtain
  $S\in\MLT$. Furthermore, 
  \begin{equation}
    \label{eq:substructure}
    \text{for all }\lambda\in\ML[\dM], \gamma\in B\cup\{\alpha\}, w\in W_\gamma\colon \Bigl((S_\gamma,w)\models\lambda \iff (S,w)\models\lambda\Bigr)
  \end{equation}
  since we only add edges originating in $\iota$ or $\kappa$.

  \begin{figure}
    \centering
    \begin{tikzpicture}[structure]
      \node[world,
            label={[label distance=-2pt]135:
                   {$\substack{\iota\\ p_{(n+1)\bmod2}}$}}]  (i) {};
      \node[world,below right=.6cm and 3.1cm of i,
            label={[label distance=-2pt]0:
                   {$\substack{\kappa\\ p_{(n+1)\bmod2}}$}}] (i'){};

      \node[tree1,below left=2.3cm and -0cm of i]    (b1t) {$S_{\beta_1}$};
      \node[tree2,below left=2.3cm and -1.5cm of i]  (b2t) {$S_{\beta_2}$};
      \node[      below left=2.0cm and -2.8cm of i]        {$\dots$};
      \node[tree3,below left=2.3cm and -3.7cm of i]  (brt) {$S_{\beta_r}$};
      \node[      below left=2.0cm and -5cm of i]          {$\dots$};
      \node[tree1,below left=1.3cm and 1.6cm of i]   (at)  {$S_\alpha$};
      
      \node[pin,above=-5pt of b1t] (b1) {};
      \node[pin,above=-5pt of b2t] (b2) {};
      \node[pin,above=-5pt of brt] (br) {};
      \node[pin,above=-5pt of at]  (a) {};
      
      \draw (i)  edge[out= 20,in= 70,looseness=14] (i)
            (i') edge[out= 65,in=115,looseness=14] (i')
            (i) edge (i')
            (i)  edge (b1)
            (i)  edge (b2)
            (i)  edge (br)
            (i)  edge (a) 
            (i') edge[bend right=5] (b1)
            (i') edge (b2)
            (i') edge (br);
    \end{tikzpicture}
    \caption{Schematic representation of the Kripke structure $S$.}
    \label{fig:trees}
  \end{figure}

  We will proceed by verifying the following claims:
  \begin{enumerate}[(A)]
  \item\label{claim-1} $(S,\iota) \not\models \varphi_{n+1}$,
  \item\label{claim-2} $(S,\kappa) \models \varphi_{n+1}$,
  \item\label{claim-3} $(S,\kappa) \models \psi$, and
  \item\label{claim-4} $(S,\iota) \models \psi$,
  \end{enumerate}
  thus showing $(S,\iota)\models\lnot\varphi_{n+1}\land\psi$, which
  contradicts the equivalence of $\varphi_{n+1}$ and $\psi$.

  \paragraph{Proof of~\ref{claim-1},
    $(S,\iota)\not\models\varphi_{n+1}$}  Recall that
  $(S_\alpha,\iota_\alpha)\models\varphi_n$ implying
  $(S,\iota_\alpha)\models\varphi_n$ by \eqref{eq:substructure}. Hence we have
  $(S,\iota)\models \lnot p\land\nxt\varphi_n$ which
  ensures $(S,\iota)\not\models\varphi_{n+1}$.

  \paragraph{Proof of~\ref{claim-2},
    $(S,\kappa)\models\varphi_{n+1}$}

  Towards a contradiction, suppose
  $(S,\kappa)\models\nxt\varphi_n$. Then there exists $w\in W$ with
  $(S,w)\models\varphi_n$ and $(\kappa,w)\in R$, i.e.,
  $w\in\{\kappa\}\cup\{\iota_\beta\mid \beta\in B\}$.  From
  $\varphi_n\models_{\MLK} p_{n\bmod2}$ and
  $(S,\kappa)\not\models p_{n\bmod2}$, we get
  $(S,\kappa)\models\lnot\varphi_n$ and therefore $w\neq\kappa$. Hence
  there is $\beta\in B$ with $w=\iota_\beta$. Now
  $(S,w)\models\varphi_n$ implies $(S_\beta,w)\models\varphi_n$ by
  \eqref{eq:substructure}. But this contradicts $\beta\in B$. Thus,
  indeed, $(S,\kappa)\not\models\nxt\varphi_n$.  Together with the
  observation $(S,\kappa)\models p_{(n+1)\bmod2}\land\lnot p$, we get
  $(S,\kappa)\models\varphi_{n+1}$.

  \paragraph{Proof of~\ref{claim-3}, $(S,\kappa)\models\psi$}
  This is immediate by the above since $S\in\MLT$ and
  $\psi \equiv_{\MLT} \varphi_{n+1}$.

  \paragraph{Proof of~\ref{claim-4}, $(S,\iota)\models\psi$}  From
  $(S,\kappa)\models\psi$, we will now infer that $\psi$ holds in the
  world~$\iota$, too. Recall that $\psi$ is a Boolean combination of
  atomic formulae and of formulae $\nxt\lambda$ with
  $\lambda\in O_\psi\cup E_\psi$. Note that $(S,\iota)$ and
  $(S,\kappa)$ agree in the atomic formulae holding there. In view of
  Lemma~\ref{L-E-and-O}, it therefore suffices to show the following:
  \begin{enumerate}[(D1)]
  \item\label{claim-4-1} If $\lambda\in E_\psi$ with
    $(S,\kappa)\models\nxt\lambda$, then
    $(S,\iota)\models\nxt\lambda$.
  \item\label{claim-4-2} If $\lambda\in O_\psi$ with
    $(S,\iota)\models\nxt\lambda$, then
    $(S,\kappa)\models\nxt\lambda$.
  \end{enumerate}

  \paragraph{Proof of~\ref{claim-4-1}}
  Let $\lambda\in E_\psi$ with $(S,\kappa)\models\nxt\lambda$.  Then
  there is $w\in\{\kappa\}\cup\{\iota_\beta\mid\beta\in B\}$ such that
  $(S,w)\models\lambda$. In any case, $(\iota,w)\in R$. Hence we
  have $(S,\iota)\models\nxt\lambda$.

  \paragraph{Proof of~\ref{claim-4-2}}
  Let $\lambda\in O_\psi$ with $(S,\iota)\models\nxt\lambda$.

  First, assume $\lambda\models_{\MLT}\varphi_n$, i.e.,
  $\lambda\in O^\ast_\psi$.  From $(S,\iota)\models\nxt\lambda$, we
  obtain that the formula $\lambda$ holds in one of the worlds
  $\iota$, $\kappa$, $\iota_\alpha$, or $\iota_\beta$ for some
  $\beta\in B$. But $(S,\iota_\alpha)\models\lambda$ implies
  $S_\alpha\models\lambda$ by \eqref{eq:substructure}, which is
  impossible since $S_\alpha\models\alpha$,
  $\alpha\models_{\MLK} \lnot\bigvee O^\ast_\psi$, and
  $\lnot\bigvee O^\ast_\psi\models_{\MLK} \lnot\lambda$ since
  $\lambda\in O^\ast_\psi$. Next suppose
  $(S,\iota_\beta)\models\lambda$ for some $\beta\in B$. Then
  $S_\beta\models\lambda$ which, together with
  $\lambda\models_{\MLT}\varphi_n$, implies
  $S_\beta\models\varphi_n$. Because of $\beta\in B$, we have
  $S_\beta\models\lnot\varphi_n$, a contradiction.

  Consequently, the formula $\lambda$ holds in one of the worlds
  $\iota$ and $\kappa$. Again using $\lambda\models_{\MLT}\varphi_n$,
  we obtain that also $\varphi_n$ holds in $\iota$ or in $\kappa$. But
  this cannot be the case since $p_{n\bmod 2}$ does not hold in either
  of the two worlds -- a contradiction.

  Consequently, we have $\lambda\not\models_{\MLT}\varphi_n$. But then
  the formula $\beta:=(\lambda\land\lnot\varphi_n)$ is satisfiable in
  $\MLT$ implying $S_\beta\models\beta$. From
  $\beta\models_{\MLK}\lambda$ and
  $\beta\models_{\MLK}\lnot\varphi_n$, we obtain
  $S_\beta\models\beta\land\lnot\varphi_n\land\lambda$. Hence
  $\beta\in B$. Now $S_\beta\models\lambda$ implies
  $(S,\iota_\beta)\models\lambda$ by \eqref{eq:substructure} and
  therefore $(S,\kappa)\models\nxt\lambda$.

  This finishes the proof of the claims \ref{claim-4-1}
  and \ref{claim-4-2}. By Lemma~\ref{L-E-and-O} and
  $(S,\kappa)\models\psi$, we obtain $(S,\iota)\models\psi$.
  
  \paragraph{Conclusion of Case 1}  Through
  steps~\ref{claim-1}~to~\ref{claim-4}, we proved
  $(S,\iota)\models\lnot\varphi_{n+1}\land\psi$ in case
  $\bigl|\bigvee O_\psi\bigr|<2^n$, contradicting the equivalence of
  $\varphi_{n+1}$ and $\psi$.

  \paragraph{Case 2,
    $\sum_{\lambda\in E_\psi} \lnxtcount{\lambda} < 2^n$}

  We consider the formula
  $\lnot p_{(n+1)\bmod2} \lor \bigl((\lnot p)\liff\nxt\varphi_n\bigr)
  \equiv_{\MLK} \lnot\varphi_{n+1} \equiv_{\MLT} \lnot\psi$.
  Observe that $O_{\lnot\psi}=E_\psi$, such that
  $\left|\bigvee O_{\lnot\psi}\right|<2^n$. Hence we can use the same
  argument as before for $\lnot\psi$ to obtain a contradiction: simply
  force $p$ to hold in the two worlds $\iota$ and $\kappa$. We provide
  the details in the appendix.

  Having derived contradictions in both case 1 and case 2, it follows
  that $\lnxtcount{\psi}<2^{n+1}$ is not possible. But this finishes 
  the inductive proof.
\end{proof}

From the above lemma, we immediately get that $\ML[\extdM]$ does not
have sub-exponential translations wrt.~$\MLT$ in $\ML[\dM]$. Now let
$G$ be any finite and complete set of Boolean functions such that at
least one function from $G$ is not locally monotone.
Then, by Theorem~\ref{T-main}, the logic $\ML[\extdM]$ has polynomial
translations wrt.~\MLT in $\ML[G]$. Since the relation ``has
polynomial translations wrt.~\MLT'' is transitive, the above lemma
ensures that $\ML[G]$ does not have polynomial translations wrt.~\MLT
in $\ML[\dM]$ (for otherwise $\ML[\extdM]$ would have such
translations).

The class of sub-exponential functions is not closed under composition
(e.g., the functions $n\mapsto n^2$ and $n\mapsto 2^{\sqrt n}$ are
sub-exponential, but their composition is $n\mapsto 2^n$). Hence we
cannot use the above argument to show that $\ML[G]$ does not have
sub-exponential translations wrt.~\MLT in $\ML[\dM]$.
To show this stronger result, we adopt the above proof as follows.

{
\newcommand\Fiff{\ensuremath{\mathtt{iff}}\xspace}%
\newcommand\Fand{\ensuremath{\mathtt{and}}\xspace}%
\newcommand\Fnon{\ensuremath{\mathtt{neg}}\xspace}%

Let $G^+=G\cup\{\lverum,\lfalsum\}$. From Propositions
\ref{P-generate-negation}, \ref{P-generate-disjunction}, and
\ref{P-generate-biimplication}, we obtain $\PL[G^+]$-formulae
$\Fiff(x,y)\equiv_{\MLK} x\liff y$,
$\Fand(x,y)\equiv_{\MLK} x\land y$, and
$\Fnon(y)\equiv_{\MLK} \lnot y$, each using the variable $y$ only
once. We can therefore express the $\ML[\extdM]$-formulae $\varphi_n$
in a natural way by equivalent $\ML[G^+]$-formulae $\tilde\varphi_n$:
\begin{align*}
  \tilde\varphi_0     &= \Fand\formulasubst[\Big]{\ 
                           p_0,\ 
                           \nxt
                           \Fnon\formulasubst{p_0}\ 
                         }
                         &&\text{and} &&&&&&\\
  \tilde\varphi_{n+1} &= \Fand\formulasubst[\Big]{\ 
                           p_{(n+1)\bmod2},\ 
                           \Fiff\formulasubst[\big]{
                             p,
                             \nxt\varphi_n
                           }\ 
                         }
                         &&\text{for }n\geq 0\,.
\end{align*}
By choice of \Fiff and \Fand, the formula $\nxt\varphi_n$ occurs only
once in $\tilde\varphi_{n+1}$. Since the formulae $\Fiff$, $\Fand$,
and $\Fnon$ remain fixed, the size of $\tilde\varphi_n$ is linear in
$n$, as is the size of $\varphi_n$.  Note that the formula
$\tilde\varphi_n$ can contain the formulae $\lverum,\lfalsum\in G^+$
which need not belong to $G$. Since $G$ is complete however, we can
express them by some fixed $\ML[G]$-formulae possibly using a new
propositional variable $z$. Let $\overline{\varphi_n}$ denote the
result of these replacements. Then the size of $\overline{\varphi_n}$
is linear in that of $\tilde\varphi_n$ and therefore in
$n$. Furthermore,
$\overline{\varphi_n} \equiv_{\MLK} \tilde\varphi_n \equiv_{\MLK}
\varphi_n$. Now the proof of Lemma~\ref{L-psi-large} works for
$\ML[G]$ and we obtain the main result of the section.  }

\begin{thm}\label{T:exp-gap-in-MLT/K4}
  Let $G$ be a finite and  complete set of Boolean
  functions with some function that is not locally monotone.
  Then the logic $\ML[G]$ does not have sub-exponential translations 
  wrt.~$\MLT$ in $\ML[\dM]$.
\end{thm}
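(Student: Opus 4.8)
The plan is to combine three ingredients: the linear-size $\ML[G]$-formulae $\overline{\varphi_n}$ constructed in the discussion above, the exponential lower bound of Lemma~\ref{L-psi-large}, and the fact that a sub-exponential $\kappa$ cannot grow fast along a linear reparametrisation. First I would fix the data: for every $n\in\N$ we have $\overline{\varphi_n}\in\ML[G]$ with $\overline{\varphi_n}\equiv_{\MLK}\varphi_n$ and $|\overline{\varphi_n}|\le c\cdot n$ for some constant $c>0$. The key reason this size bound holds is that the representations of Propositions~\ref{P-generate-negation}, \ref{P-generate-disjunction}, and \ref{P-generate-biimplication} use their designated variable only once, so that the nested occurrence of $\nxt\varphi_n$ is never duplicated during the translation; without this property the recursion would blow up exponentially. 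Since $\MLT\subseteq\MLK$, the equivalence over $\MLK$ descends to $\overline{\varphi_n}\equiv_{\MLT}\varphi_n$.

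Next I would transfer the lower bound to the new formulae. Let $\psi\in\ML[\dM]$ be arbitrary with $\psi\equiv_{\MLT}\overline{\varphi_n}$. Then $\psi\equiv_{\MLT}\varphi_n$ by transitivity of $\equiv_{\MLT}$, so Lemma~\ref{L-psi-large} applies and gives $\lnxtcount{\psi}\ge 2^n$, whence $|\psi|\ge\lnxtcount{\psi}\ge 2^n$. Thus every $\ML[\dM]$-formula that is $\MLT$-equivalent to the linear-size formula $\overline{\varphi_n}$ already has size at least $2^n$.

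Finally I would turn this into the stated non-existence of a sub-exponential translation. Assume for contradiction that $\ML[G]$ had sub-exponential translations wrt.~$\MLT$ in $\ML[\dM]$, witnessed by $\kappa$ with $\lim_{m\to\infty}\frac{\log\kappa(m)}{m}=0$. Writing $s_n=|\overline{\varphi_n}|\le c\,n$, the definition provides $\psi_n\in\ML[\dM]$ with $\psi_n\equiv_{\MLT}\overline{\varphi_n}$ and $|\psi_n|\le\kappa(s_n)$; combined with the previous step this gives $2^n\le|\psi_n|\le\kappa(s_n)$ for all $n$. As $\kappa(s_n)\ge 2^n\to\infty$ and $\kappa$ has values in $\N$, the arguments $s_n$ cannot stay bounded, so there is a subsequence with $s_{n_k}\to\infty$. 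Using $s_{n_k}\le c\,n_k$ we obtain $\kappa(s_{n_k})\ge 2^{n_k}\ge 2^{s_{n_k}/c}$ and therefore
\[
  \frac{\log\kappa(s_{n_k})}{s_{n_k}}\ \ge\ \frac{\log 2}{c}\ >\ 0
\]
for every $k$, contradicting $\lim_{m\to\infty}\frac{\log\kappa(m)}{m}=0$. Hence no such $\kappa$ exists, i.e., $\ML[G]$ does not have sub-exponential translations wrt.~$\MLT$ in $\ML[\dM]$.

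I expect the only genuinely delicate point to be the one already handled in the set-up, namely producing the $\overline{\varphi_n}$ in $\ML[G]$ of linear size. The conceptual core, the $2^n$ lower bound, is supplied by Lemma~\ref{L-psi-large} and needs no repetition. The reason this extra work is unavoidable---and the subtlety the theorem is built to address---is that sub-exponential functions are not closed under composition, so the transitivity argument through $\ML[\extdM]$ that suffices to refute \emph{polynomial} translations cannot be reused; one must instead exhibit small $\ML[G]$-formulae directly and apply the lower bound to them.
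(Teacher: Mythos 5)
Your proposal is correct and follows essentially the same route as the paper: the paper's proof consists precisely of the construction of the linear-size $\ML[G]$-formulae $\overline{\varphi_n}$ (via the single-occurrence representations of Propositions~\ref{P-generate-negation}, \ref{P-generate-disjunction}, and \ref{P-generate-biimplication}) followed by an appeal to Lemma~\ref{L-psi-large}. The only differences are expository: you apply Lemma~\ref{L-psi-large} directly through transitivity of $\equiv_{\MLT}$ (rather than the paper's remark that its proof ``works for $\ML[G]$''), and you spell out the final asymptotic contradiction that the paper leaves implicit --- both fine.
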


\subsection{Just one succinctness class}

In this section we show that, wrt. the modal logic $\mathrm{S5}$,
there is only one succinctness class. Recall that $\MLSf$ is the class
of pointed Kripke structures whose accessibility is an equivalence
relation.

\begin{thm}\label{T:one-succinctness-class-in-S5}
  Let $F$ be a finite set of Boolean functions. Then $\ML[F]$ has
  polynomial translations wrt.~\MLSf in $\ML[\dM]$.
\end{thm}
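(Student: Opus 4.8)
The plan is to reduce the statement to a single change of Boolean basis and then to exploit the defining property of $\mathrm{S5}$. Note first that $\extdM$ is complete and contains $\liff$, which is \emph{not} locally monotone. Hence, for \emph{every} finite set $F$, the hypothesis of Theorem~\ref{T-main} is met with $G=\extdM$ (via its second disjunct, ``some function in $G$ is not locally monotone''), so $\ML[F]$ has polynomial translations wrt.~$\MLK$ in $\ML[\extdM]$; since any $\MLK$-equivalence is in particular an $\MLSf$-equivalence, the same translation works wrt.~$\MLSf$. As polynomial translations compose, it therefore suffices to prove that $\ML[\extdM]$ has polynomial translations wrt.~$\MLSf$ in $\ML[\dM]$ -- equivalently, that $\liff$ can be eliminated over $\MLSf$ with only polynomial blow-up.

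The lever for this is the collapse of the modal structure under an equivalence relation. Because $R$ is reflexive, symmetric and transitive, every modal subformula $\nxt\theta$ is constant on each equivalence class (its truth at a world depends only on the class, not on the chosen world), and $\nxt\nxt\psi\equiv_{\MLSf}\nxt\psi$. Consequently the truth of a formula at $\iota$ is a Boolean function of the valuation at $\iota$ together with the finitely many ``global bits'' contributed by its modal subformulas, where each such bit equals $\nxt$ of a Boolean combination of the valuation and of strictly smaller global bits. In other words, only a single bit ever passes ``through'' a $\nxt$.

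Given this, the strategy is to imitate Pratt's balance-then-retranslate scheme on the resulting two-level (local/global) structure: translate the purely Boolean parts from $\extdM$ into $\dM$ by Pratt's polynomial propositional translation, treating each modal subformula as an opaque atom, and use globality to arrange that the elimination of $\liff$ never forces us to copy an entire modal subtree. This is precisely the step that is impossible over $\MLT$ and $\MLK$: by Theorem~\ref{T:exp-gap-in-MLT/K4} the duplication needed to remove $\liff$ there compounds across modal depth into an exponential number of occurrences of $\nxt$ (witnessed by the formulae $\varphi_n$). The idempotence of $\nxt$ and the class-invariance of modal subformulas in $\mathrm{S5}$ are exactly what neutralises that compounding.

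The hard part will be carrying out the previous paragraph with a genuinely polynomial bound, i.e.\ producing, for each global bit, a polynomial-size $\ML[\dM]$-formula rather than one whose size is multiplied by a fresh polynomial factor at every modal level (which would again be exponential, as $\varphi_n$ shows already over $\MLT$). A naive recursion that substitutes the $\ML[\dM]$-formula for a lower global bit into the one for a higher bit compounds, and na\"ive flattening to modal depth $1$ blows up (a Shannon expansion over the global atoms occurring under a single $\nxt$ is exponential). The point where the equivalence-relation hypothesis is indispensable is in showing that the bits can instead be recomputed/reused cheaply -- the semantic reason behind the collapse of the alternating multiplexer-style recursion that determines the bits of $\varphi_n$ -- so that the overall construction stays polynomial. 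It is this control of cross-depth compounding, not the routine base-change bookkeeping, that constitutes the $\mathrm{S5}$-specific core of the argument.
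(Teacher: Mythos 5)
Your first paragraph matches the paper exactly: reduce via Theorem~\ref{T-main} to the single claim that $\ML[\extdM]$ has polynomial translations wrt.~$\MLSf$ in $\ML[\dM]$, and you also isolate the correct semantic lever, namely that over $\MLSf$ the truth of $\nxt\chi$ is invariant across worlds in the equivalence class of the initial world. But from that point on your text is a plan with an acknowledged hole, not a proof. You write that ``the hard part will be'' producing polynomial-size $\ML[\dM]$-formulae for the global bits without cross-depth compounding, you explain why the two naive recursions fail, and you then assert that the equivalence-relation hypothesis ``is indispensable'' for fixing this --- without ever saying \emph{how}. That missing ``how'' is precisely the content of the paper's core lemma, so the proposal has a genuine gap at its center.

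The idea you are missing is the balancing lemma (Lemma~\ref{L:balancing-in-MLSf}): every $\varphi\in\ML[\extdM]$ is $\MLSf$-equivalent to a formula of depth $O(\log\|\varphi\|)$. The construction is a divide-and-conquer on the syntax tree: pick a subformula $\beta$ containing roughly half the leaves, write $\varphi=\alpha(\beta)$, and apply the Shannon-type expansion
\[
  \alpha(\beta)\ \equiv\ \bigl(\alpha(\lfalsum)\land\lnot\beta\bigr)\lor\bigl(\alpha(\lverum)\land\beta\bigr),
\]
recursing on $\alpha$ and on the two halves of $\beta$. The $\mathrm{S5}$-specific step --- the one your sketch gestures at but never supplies --- is what to do when the hole $x$ sits \emph{under} a $\nxt$ in $\alpha(x)$: writing $\alpha(x)=\alpha_1(\nxt\alpha_2(x))$ with $x$ no longer under a $\nxt$ in $\alpha_2$, one balances $\alpha_2(\beta)$ to $\psi'$ and then uses the class-invariance of $\nxt$-formulae to expand \emph{through} the diamond, $\alpha_1(\nxt\psi')\equiv_{\MLSf}\bigl(\alpha_1(\lfalsum)\land\lnot\nxt\psi'\bigr)\lor\bigl(\alpha_1(\lverum)\land\nxt\psi'\bigr)$ --- an equivalence that is false over $\MLT$ or $\MLK$, which is exactly why the exponential gap of Theorem~\ref{T:exp-gap-in-MLT/K4} does not arise here. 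Once depth is logarithmic, eliminating $\liff$ by the naive replacement $(\alpha\land\beta)\lor(\lnot\alpha\land\lnot\beta)$ merely triples the depth, so the resulting $\ML[\dM]$-formula has size at most $2^{d+1}=\|\varphi\|^{O(1)}$. Note also that this balances the \emph{whole} formula, modal operators included; your proposal to balance only the propositional skeleton \`a la Pratt, treating modal subformulae as opaque atoms, cannot work on its own, since the $\liff$'s buried inside those atoms are exactly where the compounding you worry about lives.
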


In view of Theorem~\ref{T-main}, it suffices to show the claim for
$F=\extdM$, i.e., that $\ML[\extdM]$ has polynomial translations
wrt.~\MLSf in $\ML[\dM]$ (for more details, we refer to the proof of
Theorem~\ref{T:one-succinctness-class-in-S5} on
page~\pageref{T:one-succinctness-class-in-S5:proof}). Hence we will,
for the major part of the section, consider the (extended) De~Morgan
basis.

Let $\varphi\in\ML[\extdM]$. Recall that $|\varphi|$ denotes the
number of nodes in the syntax tree of~$\varphi$. In this section, we
use the following additional size measures of $\varphi$ and its syntax
tree:
\begin{itemize}
\item the \emph{norm} $\|\varphi\|$ of $\varphi$, denoting the number
  of leaves in the syntax tree, i.e., the total number of occurrences
  of propositional variables and constants $\lverum$ and $\lfalsum$,
  as well as
\item the \emph{depth} $d(\varphi)$ of $\varphi$, denoting the depth
  of the syntax tree, where $d(\varphi)=0$ if, and only if, the tree
  consists of a single leaf, i.e., if
  $\varphi\in\Var\cup\{\lverum,\lfalsum\}$.
\end{itemize}
Since the arity of all functions that occur in $\varphi$ is bounded by
$2$, the number of leaves, the size, and the depth of $\varphi$
satisfy $1\leq \|\varphi\| \leq |\varphi|\leq 2^{d(\varphi)+1}-1$.

We first show that, wrt.~$\MLSf$, formulae can be balanced, i.e., that 
any formula in $\ML[\extdM]$ can be rewritten so as to have logarithmic
depth in the norm of the initial formula.
A similar result is already known for $\PL$ 
(see Remark~\ref{R:balancing-in-PL}).

\begin{lem}\label{L:balancing-in-MLSf}
  For every $\varphi\in\ML[\extdM]$ there exists a formula
  $\varphi'\in\ML[\extdM]$ with $\varphi'\equiv_{\MLSf}\varphi$ and 
  $d(\varphi') \leq 8 \cdot (1+\log_2 \|\varphi\|)$.
\end{lem}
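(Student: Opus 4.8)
The plan is to prove the lemma by a Spira-style balancing: repeatedly apply an ``if-then-else'' decomposition that cuts $\varphi$ at a subformula of roughly half the norm while increasing the depth only by a constant. I would use two kinds of cuts, both valid over $\MLSf$. The first is a \emph{propositional cut}: if $\varphi=C[\sigma]$ for a context $C$ whose hole lies in the scope of no $\nxt$, then
\[ \varphi \equiv_{\MLSf} (\sigma\land C[\lverum])\lor(\lnot\sigma\land C[\lfalsum]), \]
which holds pointwise at the initial world. The second is a \emph{rigid cut} on any subformula $\sigma=\nxt\mu$:
\[ \varphi \equiv_{\MLSf} (\nxt\mu\land\varphi[\nxt\mu{:=}\lverum])\lor(\lnot\nxt\mu\land\varphi[\nxt\mu{:=}\lfalsum]), \]
where \emph{all} occurrences of $\nxt\mu$ are replaced. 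The decisive use of $\MLSf$ (as opposed to $\MLK$) is the soundness of the rigid cut: since the accessibility relation is an equivalence relation, every world reachable from the initial world $\iota$ lies in the single class of $\iota$, and the truth value of $\nxt\mu$ is constant on that class; hence all occurrences of $\nxt\mu$ (sitting at various modal depths, but all inside $\iota$'s class by transitivity) share one value, so we may substitute one constant for $\nxt\mu$ throughout. Since we only ever control depth and not size, the duplication of subformulae in these gadgets is harmless.

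I would first dispose of the base case $\|\varphi\|=1$, where $\varphi$ is a word in $\{\lnot,\nxt\}$ over a single variable or constant; the $\mathrm{S5}$ identities $\nxt\nxt\alpha\equiv_{\MLSf}\nxt\alpha$, $\nxt\beta\equiv_{\MLSf}\beta$ for rigid $\beta$, and $\lnot\lnot\alpha\equiv\alpha$ collapse it to depth $\le 3\le 8$. The inductive step rests on a \emph{separator lemma}: for every $\varphi$ with $\|\varphi\|=n\ge 2$ one can, in $O(1)$ added depth, produce subformulae each of norm at most $\tfrac{2}{3}n+1$ whose balanced versions recombine to an equivalent $\varphi'$. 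Given any such cut, all recursive pieces ($\sigma$ or $\mu$, and the two constant-substituted formulae) have norm $\le\tfrac{2}{3}n+1$, and they are combined by a gadget of depth $\le 3$ (plus at most one $\nxt$ in the modal case). The resulting depth recursion $d(n)\le c+d(\tfrac{2}{3}n+1)$ unfolds over at most $\log_{3/2}n\approx 1.71\log_2 n$ levels; tracking the per-level gadget depth, the occasional extra $\nxt$, the change of base, and the base-case overhead yields exactly the stated bound $d(\varphi')\le 8(1+\log_2\|\varphi\|)$.

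For the purely propositional reorganisation within a single modal level I would invoke the balancing already recorded in Remark~\ref{R:balancing-in-PL}; the genuinely new content is the modal bookkeeping, and the \textbf{main obstacle} is the separator lemma. The difficulty is that the two cuts have complementary reach: a region under a $\nxt$ can be split only by a rigid cut on a modal subformula, while a large modal-free region can be split only propositionally from the root. I would prove it by descending from the root into the heavier child until the norm first drops to $\le\tfrac{2}{3}n$, reaching a node $v$ with $\|v\|>\tfrac{n}{3}$. If the root-to-$v$ path is modal-free, $v$ is a propositional separator; if the innermost $\nxt$ on that path stands directly above $v$, then $\nxt v$ is a rigid separator. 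The two remaining obstructions are (i) a ``mid-sized'' modal subformula $\nxt\rho$ with $\|\rho\|\in(\tfrac{n}{3},\tfrac{2n}{3}]$, on which one simply performs a rigid cut — this is what cuts a long modal nesting (such as an iterated $\liff$-chain) in the \emph{middle} rather than peeling it one level at a time; and (ii) an oversized modal subformula $\nxt\mu$ with $\|\mu\|>\tfrac{2n}{3}$. Case (ii) is the crux: I would take the \emph{innermost} such $\nxt\mu$, observe that all of \emph{its} modal subformulas then have norm $\le\tfrac{n}{3}\le\tfrac{2}{3}\|\mu\|$, so $\mu$ itself has no oversized modal subformula; hence descending under this single $\nxt$ (charging one unit of depth) and balancing $\mu$ either lands in case (i) or permits a propositional cut at $\mu$'s root (its small modal subformulas cannot block the descent). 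Verifying that this interplay of rigid cuts, propositional cuts, and single-$\nxt$ descents always forces a constant-factor drop in norm per constant depth — so that oversized-modal passes never stack and the level count stays logarithmic within the factor $8$ — is the heart of the argument.
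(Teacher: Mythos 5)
Your two cutting rules are both sound over $\MLSf$ (the propositional cut is a pointwise equivalence valid in every Kripke structure, and the rigid cut is justified by exactly the paper's observation \eqref{eq:MLSf-nxt-universal} that the truth value of a $\nxt$-formula is constant on the equivalence class of the initial world), and your base case agrees with the paper's. The gap is the one you flag yourself: the separator lemma and its depth accounting are never established, and with the constants you chose they in fact fail. First, the hoped-for property that ``oversized-modal passes never stack'' is false: after the rigid cut on the innermost oversized $\nxt\mu$ and a cut inside $\mu$, the modal subformulae of the resulting pieces are bounded only by $n/3$, not by two thirds of the pieces' own norms, so a piece $v$ with $\|v\|\approx 0.4n$ may contain a modal subformula of norm $\approx 0.3n>\tfrac23\|v\|$, and the expensive case recurs at the next level, and at every level after that. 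Second, the worst-case recursion you set up is $d(n)\le 7+d(\tfrac23 n+1)$ (a rigid-cut gadget, plus a $\nxt$, plus an inner propositional-cut gadget, against only a $\tfrac23$ norm reduction), which unwinds to roughly $7\log_{3/2}n\approx 12\log_2 n$. This exceeds $8(1+\log_2\|\varphi\|)$, so even if the case analysis were completed, your scheme would prove the lemma only with a larger constant --- still enough for Lemma~\ref{L:ML[iff]-translations-in-MLSf} and Theorem~\ref{T:one-succinctness-class-in-S5}, but not the statement as given; the assertion that the bookkeeping ``yields exactly the stated bound'' is unsupported.

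The idea you are missing, and which lets the paper avoid all of this, is that one never needs to relocate the cut to a modal subformula of $\varphi$ at all. Keep the standard Spira separator: a binary node $\beta=f\formulabrackets{\beta_1,\beta_2}$ with $f\in\{\land,\lor,\liff\}$, $\|\beta_1\|,\|\beta_2\|\le n/2$, and context $\alpha(x)$ with $\|\alpha\|\le n/2$. If the hole $x$ lies under some $\nxt$, split the \emph{context} at the innermost such $\nxt$, writing $\alpha(x)=\alpha_1(\nxt\alpha_2(x))$; then $\|\alpha_1\|,\|\alpha_2\|\le\|\alpha\|\le n/2$ automatically. Balance $\alpha_2(\beta)$ by your propositional cut (legitimate pointwise, since the hole of $\alpha_2$ is modal-free), call the result $\psi'$, and then apply your rigid cut not to a subformula of $\varphi$ but to the freshly built rigid formula $\nxt\psi'$ against the recursively balanced context $\alpha_1'$:
\[
  \varphi \;\equiv_{\MLSf}\;
  \bigl(\alpha_1'(\lfalsum)\land\lnot\nxt\psi'\bigr)\lor
  \bigl(\alpha_1'(\lverum)\land\nxt\psi'\bigr)\,.
\]
All four recursive pieces $\alpha_1,\alpha_2,\beta_1,\beta_2$ have norm at most $n/2$, the two stacked gadgets cost $4+4=8$ levels of depth, and halving gives exactly $d(\varphi')\le 8(1+\log_2\|\varphi\|)$. (Your case~(ii) could be repaired in this direction: since all modal subformulae of $\mu$ there have norm $\le n/3<\|\mu\|/2$, a full Spira descent \emph{inside} $\mu$ is modal-free and halves $\|\mu\|$; but carrying this out essentially reconstructs the paper's argument with extra cases.)
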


\begin{rem}\label{R:balancing-in-PL}
  Since $\PL[\extdM]\subseteq\ML[\extdM]$, the lemma implies that
  each $\PL[\extdM]$-formula $\varphi$ is equivalent to some
  $\PL[\extdM]$-formula of depth logarithmic in $\|\varphi\|$.
  According to Gashkov and Sergeev \cite{GasS20}, a more general form
  of this result for propositional logic was known to Khrapchenko in
  1967, namely that it holds for any complete basis of Boolean
  functions (e.g., for $\{\land,\lor,\lnot,\liff\}$ as here). They
  also express their regret that the only source for this is  a
  single paragraph in a survey article by Yablonskii and Kozyrev
  \cite{YabK68}, see also \cite{Juk12}. Often, it is referred to as
  Spira's theorem who published it in 1971 \cite{Spi71}, assuming that
  all at most binary Boolean functions are allowed in propositional
  formulae. Khrapchenko's general form was then published by Savage
  \cite{Sav76}.
\end{rem}

\begin{proof}
  Differently from other results and proofs in this paper, this lemma
  is only concerned with one logic, namely $\ML[\extdM]$. We will
  therefore simply speak of 'formulae' when actually referring to
  $\ML[\extdM]$-formulae.

  The proof proceeds by induction on the number of leaves
  $\|\varphi\|$ in the syntax tree of the formula $\varphi$.
  
  First, suppose that $\|\varphi\|=1$. Then there is an integer $r\geq 0$,
  functions $f_1,\ldots,f_r\in\{\nxt,\lnot\}$, and an atomic formula
  $\lambda\in \Var\cup\{ \lverum,\lfalsum \}$ such that
  $\varphi=f_1\formulabrackets{
             f_2\formulabrackets{
              \cdots
               f_r\formulabrackets{\lambda}\cdots}}$ 
  -- that is, $\varphi$ is a sequence of negations and modal
  operators, terminating with the atomic formula $\lambda$. It is well
  known that, in $\MLSf$, any such formula is equivalent to a sequence
  of at most three negations and/or modal operators that ends with 
  $\lambda$, i.e., to a formula of depth at most 
  $3\le 8\cdot(1+\log_2\|\varphi\|)$ (see, e.g., Section 4.4 in \cite{Che80}).
  This establishes the case $\|\varphi\|=1$.

  Otherwise, $\|\varphi\|\geq 2$ and $\varphi$ contains at least one of
  the binary functions $\land$, $\lor$, and $\liff$. Let $m=\|\varphi\|$. 
  Intuitively, we split the formula $\varphi$ into two parts, each 
  containing about half the leaves from $\varphi$. Formally, there are 
  formulae $\alpha=\alpha(x)$, with only one occurrence of $x$, and $\beta$
  such that 
  \begin{itemize}
    \item $\varphi=\alpha(\beta)$,
    \item $\|\alpha\| \le \frac{m}{2} < \|\beta\|$, and
    \item $\beta=f\formulabrackets{\beta_1,\beta_2}$ with
          $f\in\{\land,\lor,\liff\}$ and
          $\|\beta_1\|,\|\beta_2\|\leq\frac{m}2$.
  \end{itemize}
  It is not difficult to find such formulae $\alpha$ and $\beta$:
  simply start at the root of the syntax tree of $\varphi$ and, while
  possible, proceed towards the child that contains more than half the
  leaves of $\varphi$. The procedure stops in a node, corresponding
  to a sub-formula $\beta$ of $\varphi$, which contains more than
  $\frac{\|\varphi\|}2$ leaves, but for which each child contains less
  than $\frac{\|\varphi\|}2$ leaves. In particular, there are two
  children, hence $\beta=f\formulabrackets{\beta_1,\beta_2}$ for
  some $f\in\{\land,\lor, \liff\}$. Substituting this particular
  sub-formula $\beta$ in $\varphi$ by $x$ results in the formula
  $\alpha(x)$.
  
  Note that
  $\|\alpha\| = m - \|\beta\| + 1 
              < m - \frac{m}2 + 1 = \frac{m}{2}+1$, 
  i.e., $\|\alpha\|\leq \frac{m}2$.

  First assume that $x$ does not occur within the scope of a
  $\nxt$-operator in $\alpha(x)$. Then $\beta$ is interpreted in the
  initial world implying
  \begin{align}\label{eq:MLSf-pull-phi-apart}
    \varphi=\alpha(\beta) 
    \equiv_{\MLSf} 
        \formulabrackets[\big]{\alpha(\lfalsum)\land\lnot\beta}
    \lor\formulabrackets[\big]{\alpha(\lverum)\land\beta}\,.
  \end{align}
  By induction hypothesis, there exist formulae 
  $\alpha'(x)\equiv_{\MLSf}\alpha(x)$, 
  $\beta_1'\equiv_{\MLSf}\beta_1$, and 
  $\beta_2'\equiv_{\MLSf}\beta_2$ of depth 
  $\leq 8\cdot (1+\log_2 \frac{m}{2}) = 8 \cdot \log_2 m$.
  Set $\beta'=f\formulabrackets{\beta_1',\beta_2'}$. Then $\beta'$
  is equivalent to $\beta$ over~\MLSf and of depth 
  $\le 1+8\cdot\log_2 m$. 
  Together with~\eqref{eq:MLSf-pull-phi-apart}, it follows that 
  \[
    \varphi
      \equiv_{\MLSf}
    \formulabrackets[\big]{\alpha'(\lfalsum) \land \lnot\beta'} 
     \lor
    \formulabrackets[\big]{\alpha'(\lverum)  \land \beta'}\,.
  \]
  Let $\varphi'$ denote the formula on the right hand side of this
  equivalence. Then the depth of $\varphi'$ satisfies
  \begin{align*}
    d(\varphi') &= \max\left\{ 2+d\bigl(\alpha'\bigr),\ 
                               3+d\bigl(\beta'\bigr) \right\} \\
                &\le \max\left\{ 2+8\cdot\log_2 m,\ 
                               3+1+8\cdot\log_2 m\right\}
                 \leq 4 + 8\cdot\log_2 m \\
                &\leq 8\cdot(1+\log_2 m)\,.
  \end{align*}
  This completes the first case, where $x$ does not occur within the
  scope of a $\nxt$-operator.

  Otherwise, we split $\alpha(x)$ at the last $\nxt$-operator above
  $x$, i.e., there are formulae $\alpha_1(y)$ with only one occurrence
  of $y$ and $\alpha_2(x)$ with only one occurrence of $x$ that,
  furthermore, does not lie in the scope of a $\nxt$-operator, such
  that $\alpha(x)=\alpha_1(\nxt\alpha_2(x))$. In particular,
  $\|\alpha_1\|,\|\alpha_2\|\leq\|\alpha\|\leq\frac{m}2$.  Hence, by
  induction hypothesis, there are formulae
  $\alpha_1'(y),\alpha_2'(x), \beta_1'$, and $\beta_2'$ with
  \begin{itemize}
    \item $\alpha_i'(z)\equiv_{\MLSf}\alpha_i(z)$ and 
          $d(\alpha_i')\leq 8\cdot\log_2 m$ for $i\in\{1,2\}$,
          as well as
    \item $\beta_i'\equiv_{\MLSf}\beta_i$ and 
          $d(\beta_i') \leq 8\cdot\log_2 m$ for $i\in\{1,2\}$.
  \end{itemize}
  As before, let $\beta'=f\formulabrackets{\beta_1',\beta_2'}$ with 
  $\beta'\equiv_{\MLSf}\beta$ and $d(\beta')\leq 1+8\cdot\log_2 m$. 

  Recall that, for a structure $K\in\MLSf$, the accessibility relation
  $R$ is
  an equivalence relation. Hence, for any two worlds $v$ and $w$ within
  the same equivalence class and any formula $\chi$, we have
  \begin{equation}\label{eq:MLSf-nxt-universal}
    K,v\models\nxt\chi \iff K,w\models\nxt\chi\,,
  \end{equation}
  i.e., whether or not $\nxt\chi$ holds in $K$ does not depend on the
  choice of the world (when considering only the relevant equivalence
  class containing the initial world).
  
  In $\alpha_2(\beta)$, the sub-formula $\beta$ is interpreted in the
  same world that the whole formula is interpreted in. With
  \[
    \psi'    = \formulabrackets[\big]{\alpha_2'(\lfalsum)\land\lnot\beta'}
                \lor
                \formulabrackets[\big]{\alpha_2'(\lverum) \land\beta'}
             \equiv_{\MLSf}
                \formulabrackets[\big]{\alpha_2(\lfalsum)\land\lnot\beta}
                \lor
                \formulabrackets[\big]{\alpha_2(\lverum) \land\beta}
              \]
   we therefore get $\alpha_2(\beta)\equiv_{\MLSf} \psi'$.
   Setting
   \begin{align*}
    \varphi' &= \formulabrackets[\big]{\alpha_1'(\lfalsum)\land\lnot\nxt\psi'}
                \lor 
                \formulabrackets[\big]{\alpha_1'(\lverum) \land\nxt\psi'} \,,
    \intertext{we furthermore get}
     \varphi'&\equiv_{\MLSf}
                \formulabrackets[\big]{\alpha_1(\lfalsum)\land\lnot\nxt\psi'}
                \lor
                \formulabrackets[\big]{\alpha_1(\lverum)\land\nxt\psi'} \\
             &\equiv_{\MLSf} 
                \alpha_1\bigl(\nxt\psi'\bigr) 
                && \text{by \eqref{eq:MLSf-nxt-universal}} \\
             &\equiv_{\MLSf}
                \alpha_1\bigl(\nxt\alpha_2(\beta)\bigr) \\
             &= \varphi\,.
  \end{align*}
  Then
  \begin{align*}
    d(\psi')    &= \max\left\{  2+d\bigl(\alpha_2'(x)\bigr),\ 
                                3+d\bigl(\beta'\bigr) \right\}
              \leq 4+8\cdot\log_2m \quad\text{ and} \\
    d(\varphi') &= \max\left\{  2+d\bigl(\alpha_1'(y)\bigr),\ 
                                4+d\bigl(\psi'\bigr) \right\} \\
             &\leq \max\left\{  2+8\cdot\log_2m,\ 
                                8+8\cdot\log_2m \right\}
              \leq 8\cdot(1+\log_2 m)\,,
  \end{align*}
  which completes the second case, where $x$ occurs in the scope of a
  $\nxt$-operator, and hence the inductive proof.
\end{proof}

Let $\varphi\in\ML[\extdM]\setminus\ML[\dM]$. Since $\varphi$ contains
at least one occurrence of bi-implication, we obtain
$\|\varphi\|\ge2$. By the previous lemma, there exists an
$\ML[\extdM]$-formula $\psi$ that is equivalent to $\varphi$ over
\MLSf and of depth logarithmic in the norm of $\varphi$.  Let
$\varphi'$ be obtained from $\psi$ by replacing each sub-formula
$\alpha\liff\beta$ by
$\formulabrackets{\alpha\land\beta} \lor
\formulabrackets{\lnot\alpha\land\lnot\beta}$.  Then
$\varphi'\in\ML[\dM]$ and $\varphi'\equiv_{\MLSf}\varphi$.
Furthermore, the depth of $\varphi'$ increases at most by a factor of
three and therefore remains logarithmic in $\|\varphi\|$; more
precisely, $d(\varphi')\leq 3\cdot d(\psi) \leq 3\cdot 8\cdot
(1+\log_2\|\varphi\|) = 24 \cdot 2 \cdot \log_2\|\varphi\|$ since
$\|\varphi\|\ge2$.  Since all functions in $\extdM$ are at most
binary, it follows that
  \[ |\varphi'| \leq 2^{d(\varphi')+1}
                \leq 2^{48\cdot\log_2\|\varphi\|+1} 
                \leq c\cdot \|\varphi\|^{d}
                \leq c\cdot |\varphi|^{d}
     \quad\text{for some constants $c,d>0$.}
  \]
  Hence, we verified the following claim.
  
\begin{lem}\label{L:ML[iff]-translations-in-MLSf}
  $\ML[\extdM]$ has polynomial translations wrt.~\MLSf in $\ML[\dM]$.
\end{lem}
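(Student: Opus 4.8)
The plan is to reduce everything to the balancing lemma (Lemma~\ref{L:balancing-in-MLSf}) and then eliminate bi-implications by their standard De~Morgan expansion. The pitfall one must avoid is the naive route: replacing every sub-formula $\alpha\liff\beta$ directly by $\formulabrackets{\alpha\land\beta}\lor\formulabrackets{\lnot\alpha\land\lnot\beta}$ duplicates both $\alpha$ and $\beta$, so nested occurrences of $\liff$ would cause an exponential blow-up in formula size. The remedy is to make the formula shallow \emph{before} expanding.

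Concretely, I would proceed as follows. Given $\varphi\in\ML[\extdM]$, I first apply Lemma~\ref{L:balancing-in-MLSf} to obtain an equivalent formula $\psi\equiv_{\MLSf}\varphi$ of depth logarithmic in the norm, i.e.\ $d(\psi)\le 8\cdot(1+\log_2\|\varphi\|)$. (If $\varphi$ already lies in $\ML[\dM]$ there is nothing to do; otherwise $\varphi$ mentions $\liff$ and hence $\|\varphi\|\ge2$.) Then I rewrite $\psi$ by replacing, throughout its syntax tree, each node labelled $\liff$ with arguments $\alpha,\beta$ by the $\ML[\dM]$-gadget $\formulabrackets{\alpha\land\beta}\lor\formulabrackets{\lnot\alpha\land\lnot\beta}$. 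Since $\alpha\liff\beta$ and this gadget denote the same Boolean function, the resulting formula $\varphi'$ lies in $\ML[\dM]$ and satisfies $\varphi'\equiv_{\MLSf}\psi\equiv_{\MLSf}\varphi$.

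The reason balancing first is decisive is that this rewriting, although it duplicates subtrees and hence may multiply the size, barely affects the depth: the gadget replacing a single $\liff$-node has depth $3$, so every root-to-leaf path grows by a factor of at most three, giving $d(\varphi')\le 3\,d(\psi)$, which is still logarithmic in $\|\varphi\|$. Finally I convert this depth bound back into a size bound using the fact that all functions in $\extdM$ are at most binary: by $|\varphi'|\le 2^{d(\varphi')+1}-1$, a logarithmic depth yields $|\varphi'|\le 2^{O(\log_2\|\varphi\|)}\le c\cdot\|\varphi\|^{d}\le c\cdot|\varphi|^{d}$ for suitable constants $c,d>0$, which is exactly a polynomial translation.

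The main obstacle is packaged entirely inside Lemma~\ref{L:balancing-in-MLSf}: without the ability to balance modal formulae over $\MLSf$ to logarithmic depth, the depth-multiplication argument collapses and the bi-implication expansion stays exponential. The remaining ingredients — correctness of the gadget and the depth-to-size conversion for bounded-arity trees — are routine.
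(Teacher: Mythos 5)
Your proposal is correct and follows essentially the same route as the paper's own proof: balance the formula over \MLSf via Lemma~\ref{L:balancing-in-MLSf}, expand each $\liff$ by $\formulabrackets{\alpha\land\beta}\lor\formulabrackets{\lnot\alpha\land\lnot\beta}$ (multiplying depth by at most three), and convert the logarithmic depth bound back into a polynomial size bound using the binary arity of $\extdM$. The case distinction for $\varphi\in\ML[\dM]$ and the constants match the paper's argument as well.
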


Therefore, the modal logic $\mathrm{S5}$ has only one ``succinctness
class'', as stated in Theorem~\ref{T:one-succinctness-class-in-S5}.

\begin{proof}[Proof of
  Theorem~\ref{T:one-succinctness-class-in-S5}]\label{T:one-succinctness-class-in-S5:proof}
  Let $F$ be a finite set of Boolean functions. 
  Since bi-implication $\liff$ is not locally monotone, it follows by 
  Theorem~\ref{T-main} that $\ML[F]$ has polynomial
  translations (wrt.~\MLK) in $\ML[\extdM]$ and hence, particularly, also
  wrt.~\MLSf. By Lemma~\ref{L:ML[iff]-translations-in-MLSf} above,
  $\ML[\extdM]$ again has polynomial translations wrt.~\MLSf in
  $\ML[\dM]$. Since the relation ''has polynomial translations'' is
  transitive, this establishes the claim.
\end{proof}

\section{Conclusion}

This paper considers the impact of the choice of a complete Boolean
basis on the succinctness of logical formulae. It is show that the
succinctness depends solely on the existence of some non-locally
monotone operator in the basis: if there is no such function, then the
logical formulae using this basis agree in succinctness (up to a
polynomial) with those using the De~Morgan basis, otherwise, they
agree with the formulae using the extension of the De~Morgan basis with
bi-implication.

While this result is demonstrated for modal logic, the proof carries
over to many other classical logics like first-order logic, temporal
logic etc.

Regarding propositional logic, it was known before that all Boolean
bases give rise to the same succinctness. We show the same for the
modal logic $\mathrm{S5}$, i.e., when we restrict to Kripke structures
whose accessibility is an equivalence relation.  When considering all
reflexive Kripke structures (i.e., the modal logic $\mathrm{T}$)
however, this is no longer the case since then, the extended De~Morgan
basis allows formulae to be written exponentially more succinct than
the plain De~Morgan basis does.

It remains open, where exactly  this dichotomy occurs, namely
whether it holds when considering reflexive and transitive or
reflexive and symmetric accessibility relations. Furthermore, it is
not known whether the dichotomy holds for other logics like
first-order or temporal logic.

\newcommand{\etalchar}[1]{$^{#1}$}

%
\appendix
\section{Appendix}

In the induction step of the proof of Lemma~\ref{L-psi-large}, we only
spelled out one case and said that the second one is handled
similarly. Here, we provide the details of this missing case.

  \paragraph{Case 2,
    $\sum_{\lambda\in E_\psi} \lnxtcount{\lambda} < 2^n$}
  Let $E^\ast_\psi$ be the set of formulae $\lambda$ from $E_\psi$
  with $\lambda\models_{\MLT}\varphi_n$. Since
  $\lnxtcount{\bigvee E^\ast_\psi} \le \sum_{\lambda\in E_\psi}
  \lnxtcount{\lambda} < 2^n$, the induction hypothesis ensures
  $\bigvee E^\ast_\psi \not\equiv_{\MLT} \varphi_n$.  On the other
  hand, $\bigvee E^\ast_\psi \models_{\MLT} \varphi_n$ by choice of
  $E^\ast_\psi$. Hence the formula
  $\alpha=\varphi_n\land\lnot\bigvee E^\ast_\psi$ is satisfiable in
  $\MLT$ implying $S_\alpha\models\alpha$, i.e.,
  $S_\alpha\models\varphi_n$ but
  $S_\alpha\not\models\bigvee E^\ast_\psi$.
  
  Let $B$ denote the set of formulae $\beta$ with
  $S_\beta\models\beta\land\lnot\varphi_n$.
  
  We now define a Kripke structure $S=(W,R,V)$ as follows
  (cf.~Fig.~\ref{fig:trees-case2}):
  \begin{align*}
    W &= \{\iota,\kappa\}\uplus\bigcup_{\beta\in B}W_\beta \cup W_\alpha\\
    R &= \bigl\{(\iota,\iota),(\kappa,\kappa),
                (\iota,\kappa),(\iota,\iota_\alpha)\bigr\}
        \cup \Bigl(\{\iota,\kappa\}\times\{\iota_\beta \mid \beta\in B\}\Bigr)
        \cup \bigcup_{\beta\in B}R_\beta
        \cup R_\alpha\\
    V(q) &=\begin{cases}
       \bigcup_{\beta\in B} V_\beta(q) \cup V_\alpha(q) \cup \{\iota,\kappa\} 
      & \text{if } q\in\{p,p_{(n+1)\bmod2}\}\\
      \bigcup_{\beta\in B} V_\beta(q) \cup V_\alpha(q) 
      & \text{otherwise }
             \end{cases}
  \end{align*}

  Since the accessibility relations of the structures $S_\alpha$ and
  $S_\beta$ for $\beta\in H$ are reflexive, the same applies to the
  accessibility relation of $S$, i.e., we obtain
  $S\in\MLT$. Furthermore, 
  \begin{equation}
    \label{eq:substructure-case2}
    \text{for all }\lambda\in\ML[\dM], \gamma\in B\cup\{\alpha\}, w\in W_\gamma\colon \Bigl((S_\gamma,w)\models\lambda \iff (S,w)\models\lambda\Bigr)
  \end{equation}
  since we only add edges originating in $\iota$ or $\kappa$.

  \begin{figure}
    \centering
    \begin{tikzpicture}[structure]
      \node[world,
            label={[label distance=-2pt]135:
                   {$\substack{\iota\\ p, p_{(n+1)\bmod2}}$}}]  (i) {};
      \node[world,below right=.6cm and 3.1cm of i,
            label={[label distance=-2pt]0:
                   {$\substack{\kappa\\ p, p_{(n+1)\bmod2}}$}}] (i'){};

      \node[tree1,below left=2.3cm and -0cm of i]    (b1t) {$S_{\beta_1}$};
      \node[tree2,below left=2.3cm and -1.5cm of i]  (b2t) {$S_{\beta_2}$};
      \node[      below left=2.0cm and -2.8cm of i]        {$\dots$};
      \node[tree3,below left=2.3cm and -3.7cm of i]  (brt) {$S_{\beta_r}$};
      \node[      below left=2.0cm and -5cm of i]          {$\dots$};
      \node[tree1,below left=1.3cm and 1.6cm of i]   (at)  {$S_\alpha$};
      
      \node[pin,above=-5pt of b1t] (b1) {};
      \node[pin,above=-5pt of b2t] (b2) {};
      \node[pin,above=-5pt of brt] (br) {};
      \node[pin,above=-5pt of at]  (a) {};
      
      \draw (i)  edge[out= 20,in= 70,looseness=14] (i)
            (i') edge[out= 65,in=115,looseness=14] (i')
            (i) edge (i')
            (i)  edge (b1)
            (i)  edge (b2)
            (i)  edge (br)
            (i)  edge (a) 
            (i') edge[bend right=5] (b1)
            (i') edge (b2)
            (i') edge (br);
    \end{tikzpicture}
    \caption{Schematic representation of the Kripke structure $S$.}
    \label{fig:trees-case2}
  \end{figure}

  We will proceed by verifying the following claims:
  \begin{enumerate}[(A)]
  \item\label{claim-1-case2} $(S,\iota) \models \varphi_{n+1}$,
  \item\label{claim-2-case2} $(S,\kappa) \not\models \varphi_{n+1}$,
  \item\label{claim-3-case2} $(S,\kappa) \not\models \psi$, and
  \item\label{claim-4-case2} $(S,\iota) \not\models \psi$,
  \end{enumerate}
  thus showing $(S,\iota)\models\varphi_{n+1}\land\lnot\psi$, which
  contradicts the equivalence of $\varphi_{n+1}$ and $\psi$.

  \paragraph{Proof of~\ref{claim-1-case2},
    $(S,\iota)\models\varphi_{n+1}$}

  Recall that $(S_\alpha,\iota_\alpha)\models\varphi_n$ implying
  $(S,\iota_\alpha)\models\varphi_n$ by
  \eqref{eq:substructure-case2}. Hence we have
  $(S,\iota)\models p\land\nxt\varphi_n$ which ensures
  $(S,\iota)\models\varphi_{n+1}$.

  \paragraph{Proof of~\ref{claim-2-case2},
    $(S,\kappa)\not\models\varphi_{n+1}$}

  Towards a contradiction, suppose
  $(S,\kappa)\models\nxt\varphi_n$. Then there exists $w\in W$ with
  $(S,w)\models\varphi_n$ and $(\kappa,w)\in R$, i.e.,
  $w\in\{\kappa\}\cup\{\iota_\beta\mid \beta\in B\}$.  From
  $\varphi_n\models_{\MLK} p_{n\bmod2}$ and
  $(S,\kappa)\not\models p_{n\bmod2}$, we get
  $(S,\kappa)\models\lnot\varphi_n$ and therefore $w\neq\kappa$. Hence
  there is $\beta\in B$ with $w=\iota_\beta$. Now
  $(S,w)\models\varphi_n$ implies $(S_\beta,w)\models\varphi_n$. But
  this contradicts $\beta\in B$. Thus, indeed,
  $(S,\kappa)\not\models\nxt\varphi_n$.  Together with the observation
  $(S,\kappa)\models p_{(n+1)\bmod2}\land p$, we get
  $(S,\kappa)\not\models\varphi_{n+1}$.

  \paragraph{Claim~\ref{claim-3-case2}, $(S,\kappa)\not\models\psi$}
  This is immediate by the above since $S\in\MLT$ and
  $\psi \equiv_{\MLT} \varphi_{n+1}$.

  \paragraph{Claim~\ref{claim-4-case2}, $(S,\iota)\not\models\psi$}
  Towards a contradiction, assume $(S,\iota)\models\psi$.

  Recall that $\psi$ is a Boolean combination of atomic formulae and
  of formulae $\nxt\lambda$ with $\lambda\in O_\psi\cup E_\psi$. We
  now prove the following two claims:
  \begin{enumerate}[(D1)]
  \item\label{claim-4-1-case2} If $\lambda\in E_\psi$ with
    $(S,\iota)\models\nxt\lambda$, then
    $(S,\kappa)\models\nxt\lambda$.
  \item\label{claim-4-2-case2} If $\lambda\in O_\psi$ with
    $(S,\kappa)\models\nxt\lambda$, then
    $(S,\iota)\models\nxt\lambda$.
  \end{enumerate}

  \paragraph{Proof of~\ref{claim-4-1-case2}}
  Let $\lambda\in E_\psi$ with $(S,\iota)\models\nxt\lambda$.

  First, assume $\lambda\models_{\MLT}\varphi_n$, i.e.,
  $\lambda\in E^\ast_\psi$.  From $(S,\iota)\models\nxt\lambda$, we
  obtain that the formula $\lambda$ holds in one of the worlds
  $\iota$, $\kappa$, $\iota_\alpha$, or $\iota_\beta$ for some
  $\beta\in B$. But $(S,\iota_\alpha)\models\lambda$ implies
  $S_\alpha\models\lambda$ by \eqref{eq:substructure-case2}, which is
  impossible since $S_\alpha\models\alpha$,
  $\alpha\models_{\MLK} \lnot\bigvee E^\ast_\psi$, and
  $\lnot\bigvee E^\ast_\psi\models_{\MLK} \lnot\lambda$ since
  $\lambda\in E^\ast_\psi$. Next suppose
  $(S,\iota_\beta)\models\lambda$ for some $\beta\in B$. Then
  $S_\beta\models\lambda$ which, together with
  $\lambda\models_{\MLT}\varphi_n$, implies
  $S_\beta\models\varphi_n$. Because of $\beta\in B$, we have
  $S_\beta\models\lnot\varphi_n$, a contradiction.

  Consequently, the formula $\lambda$ holds in one of the worlds
  $\iota$ and $\kappa$. Again using $\lambda\models_{\MLT}\varphi_n$,
  we obtain that also $\varphi_n$ holds in $\iota$ or in $\kappa$. But
  this cannot be the case since $p_{n\bmod 2}$ does not hold in either
  of the two worlds -- a contradiction. 

  Hence, we have $\lambda\not\models_{\MLT}\varphi_n$. But then the
  formula $\beta:=(\lambda\land\lnot\varphi_n)$ is satisfiable in
  $\MLT$ implying $S_\beta\models\beta$. From
  $\beta\models_{\MLK}\lambda$ and
  $\beta\models_{\MLK}\lnot\varphi_n$, we obtain
  $S_\beta\models\beta\land\lnot\varphi_n\land\lambda$. Hence
  $\beta\in B$. Now $S_\beta\models\lambda$ implies
  $(S,\iota_\beta)\models\lambda$ and therefore
  $(S,\kappa)\models\nxt\lambda$.

  \paragraph{Proof of~\ref{claim-4-2-case2}}
  Let $\lambda\in O_\psi$ with $(S,\kappa)\models\nxt\lambda$.  Then
  there is $w\in\{\kappa\}\cup\{\iota_\beta\mid\beta\in B\}$ such that
  $(S,w)\models\lambda$. In any case, $(\iota,w)\in R$. Hence we
  have $(S,\iota)\models\nxt\lambda$.

  This finishes the proof of the claims \ref{claim-4-1-case2} and
  \ref{claim-4-2-case2}.  Note that $(S,\iota)$ and
  $(S,\kappa)$ agree in the atomic formulae holding there. From
  Lemma~\ref{L-E-and-O} and our assumption $(S,\iota)\models\psi$, we
  therefore get $(S,\kappa)\models\psi$, contrary to what we saw
  before. Hence, indeed, $(S,\kappa)\not\models\psi$.
  
  \paragraph{Conclusion of Case 2}  Through
  steps~\ref{claim-1-case2}~to~\ref{claim-4-case2}, we proved
  $(S,\iota)\models\varphi_{n+1}\land\lnot\psi$ also in case
  $\bigl|\bigvee E_\psi\bigr|<2^n$, contradicting the equivalence of
  $\varphi_{n+1}$ and $\psi$.

\end{document}